\newtheoremstyle{remboldstyle}
  {}{}{\itshape}{}{\bfseries}{.}{.5em}{{\thmname{#1 }}{\thmnumber{#2}}{\thmnote{ (#3)}}}
\theoremstyle{remboldstyle}
\newcommand{\rom}[1]{\uppercase\expandafter{\romannumeral #1\relax}}
\DeclarePairedDelimiterX{\norm}[1]{\lVert}{\rVert}{#1}
\DeclarePairedDelimiterX{\bnorm}[1]{\biggl\lVert}{\biggr\rVert}{#1}
\DeclarePairedDelimiterX{\abs}[1]{\lvert}{\rvert}{#1}
\newtheorem{definition}{Definition}
\newtheorem{remark}{Remark}
\newtheorem{theorem}{Theorem}
\newtheorem{lemma}{Lemma} 
\newtheorem{example}{Example}
\newtheorem{proposition}{Proposition}
\def\R{\mathbb{R}}
\def\T{{ \mathrm{\scriptscriptstyle T} }}
\def\P{{ \mathrm{pr} }}
\def\v{{\varepsilon}}
\def\F{\textsc{F}}
\def\E{\mathbb{E}}
\def\Y{\mathcal{Y}}
\def\Z{\mathcal{Z}}
\def\F{\mathcal{F}}
\def\P{\mathbb{P}}
\def\M{\mathcal{M}}
\def\de{\overset{\Delta}{=}}
\def\limp{\rightarrow_{p}}
\def\card{\textrm{card}}
\def\de{\overset{\Delta}{=}}
\def\v{\varepsilon}
\def\Fv{F_{\v}}
\def\G{G_{\v}}
\def\var{var}
\def\i{\mathbbm{1}} 
\def\B{B} 
\def\YY{\mathbb{Y}}
\def\ZZ{\mathbb{Z}}
\def\Fy{F_Y} 
\def\C{A} 
\def\Q{U} 
\def\q{u} 
\def\i{\mathbbm{1}} 
\def\a{\textrm{essinf } \Y} 
\def\b{\textrm{esssup } \Y} 
\def\T{T} 
\def\Ra{R} 
\def\ra{r} 
\def\card{\textrm{card}}
\def\rand{W}
\def\Bern{\textrm{Bern}}
\begin{document}

\title{Interval Privacy: A Framework for Privacy-Preserving Data Collection}

 \author{Jie~Ding and 
 		Bangjun~Ding
 \thanks{J.~Ding is with the School of Statistics, University of Minnesota Twin Cities, MN 55414, USA. B.~Ding is with the School of Statistics and Finance, East China Normal University, Shanghai, China. 
 } 
 \thanks{
 This paper has an associated open-source project actively maintained at \url{https://github.com/JieGroup/IP}.}
 }

\markboth{IEEE Transactions on Signal Processing 
}
{Shell \MakeLowercase{\textit{et al.}}: IEEE Transactions on Signal Processing}

\maketitle

\begin{abstract}
The emerging public awareness and government regulations of data privacy motivate new paradigms of collecting and analyzing data that are transparent and acceptable to data owners. We present a new concept of privacy and corresponding data formats, mechanisms, and theories for privatizing data during data collection. The privacy, named Interval Privacy, enforces the raw data conditional distribution on the privatized data to be the same as its unconditional distribution over a nontrivial support set. Correspondingly, the proposed privacy mechanism will record each data value as a random interval (or, more generally, a range) containing it. The proposed interval privacy mechanisms can be easily deployed through survey-based data collection interfaces, e.g., by asking a respondent whether its data value is within a randomly generated range. Another unique feature of interval mechanisms is that they obfuscate the truth but do not perturb it. Using narrowed range to convey information is complementary to the popular paradigm of perturbing data. Also, the interval mechanisms can generate progressively refined information at the discretion of individuals, naturally leading to privacy-adaptive data collection. We develop different aspects of theory such as composition, robustness, distribution estimation, and regression learning from interval-valued data. Interval privacy provides a new perspective of human-centric data privacy where individuals have a perceptible, transparent, and simple way of sharing sensitive data. 
\end{abstract}

\begin{IEEEkeywords}
data collection,
human-computer interface,
interval data,
interval privacy,
interval mechanism,
local privacy,
privacy,
survey.
\end{IEEEkeywords}

\clearpage
\tableofcontents
\clearpage

\section{Introduction}
\label{sec_intro}

With new and far-reaching laws such as the General Data Protection Regulation~\cite{voigt2017eu} and frequent headlines of large-scale data breaches, there has been a growing societal concern about how personal data are collected and used~\cite{evans2015biometrics,cross2020privacy}.
Consequently, data privacy has been an increasingly important factor in designing signal processing and machine learning services.
This paper will address the following scenario often seen in practice. 
Suppose that Alice is the agent who creates and holds raw data, which will be collected by another agent Bob. 
On the one hand, Alice may not trust Bob or the transmission channel to Bob. On the other hand, Bob is interested in population-wide inference using statistics provided by Alice and many other individuals, but not necessarily the exact value of Alice. 

The above learning scenario is quite common in, e.g., Machine-Learning-as-a-Service cloud services~\cite{ribeiro2015mlaas,DingIL}, multi-organizational Assisted Learning~\cite{DingAssist,DingGAL,DingAssistSGD}, survey-based inferences~\cite{couper2001web,litwin2003assess}, and information fusion~\cite{wang2018fusing,sun2019relationship,DingCollabParam}. 
The formalization of individual-level data privacy and population-level estimation utility has motivated active research on what is generally referred to as local data privacy across fields such as data mining~\cite{evfimievski2003limiting}, security~\cite{kasiviswanathan2011can}, statistics~\cite{duchi2018minimax}, and information theory~\cite{du2012privacy,sun2016towards}. 
The general goal of local data privacy is to suitably randomize raw data during the data collection and evaluate it through an appropriate framework.

In this work, we propose a notion of local privacy named \textit{interval privacy} for protecting data collected for further inferences. The main idea is to enforce privacy in such a way that the distribution of the raw data conditional on its privatized data remains the same (up to a normalizing constant) on a moderately large support set.  
In other words, no additional information is gained except that the support of the data becomes narrow.
Accompanying the notion of interval privacy, we use the size (in a measure-theoretic sense) of the conditional support to quantify the level of privacy. The size, named \textit{privacy coverage}, enables a natural interpretation and perception of the amount of ambiguity exposed to the data collector.
We then introduce interval privacy mechanisms for realizing data collection in practice. 

Our perspective of privacy is motivated by the following practical concerns. Suppose that an organization collects privacy-sensitive information from individuals, e.g., an organization gathers users' demographic information. A concern is \textit{how to develop a data collection interface so that individuals can easily perceive that the collected data are at their discretion}. In other words, individuals do not have to submit exact raw data first and then rely on any subsequent processing of those data, which can be a black-box procedure obscure to the public.
The main idea of interval mechanisms is to generate random intervals that partition the data domain and collect the interval containing the underlying data. It can be naturally implemented as a transparent yet simple survey interface, where an individual can directly see the ultimate collection and perceive its ambiguity. 
As individuals may have different privacy sensitivities, another related concern is \textit{how to obtain data in a way adaptive to individual-level privacy}. Interval privacy addresses this by progressively collecting data from wider (and thus more private) intervals to narrower ones, meaning that individuals may respond, not respond, or respond further at their discretion.

Our notion of privacy naturally leads to a new form of disclosing and collecting sensitive data, namely representing them as intervals instead of points. 
For example, a sensor's accurate distance with the target $y=10$ (in meters) is privatized by first generating a random threshold, say $20$, and then publicizing the corresponding interval $(-\infty, 20]$; or an individual's $60k$ salary (in dollars) is privatized by first generating random thresholds, say $41k$ and $85k$, and then reporting the interval $(41k, 85k]$. 
The random thresholds can be generated from any distribution known to the data collector, e.g., Gaussian, Logistic, and Uniform distributions, independent of the underlying data. It is worth noting that an interval is not necessarily symmetric around the underlying raw data value. 
Tab.~\ref{fig_snap} illustrates $Y$ and its private counterpart in a dataset that we will revisit in experimental studies. Each individual's privacy coverage describes the interval size or level of ambiguity. For example, the data with $97\%$ coverage is less private than the $99\%$ one, which is in line with the perception that $Y<82.2$ reveals more information than $Y<85.4$.
We will show several fundamental properties of the proposed privacy mechanism to render its broad applicability. These include the composition property that characterizes the level of overall privacy degradation in multiple queries to the same data, robustness to pre-processing, robustness to post-processing,  distributional identifiability, and extensions from intervals to general ranges.
We will demonstrate interval-private data for several inference tasks, including moment estimation, functional estimation, and supervised regression.

\begin{table}[tb]
\centering
\caption{A snapshot of the `life expectancy' database~\cite{lifeData} to be studied in Subsection~\ref{subsec_data2}. The life expectancy, $Y$, is privatized into random intervals, with an overall privacy coverage of $60.3\%$. 
}
\label{fig_snap}
\scalebox{1}{
\begin{tabular}{ccccc}
\toprule
$Y$ (in years)              & $59.3$      & $82.3$      & $79.5$           & $51.7$         \\ \midrule
Privatized $Y$   & $(0, 82.2]$ & $(0, 85.4]$ & $(64.2, \infty)$ & $(46.9, 72]$ \\ \midrule
Privacy coverage (using $\P_Y$) & $97\%$      & $99\%$      & $70\%$           & $47\%$ \\ \bottomrule      
\end{tabular}
}
\end{table}

\begin{figure}[tb]
\begin{center}
\centerline{\includegraphics[width=0.8\columnwidth]{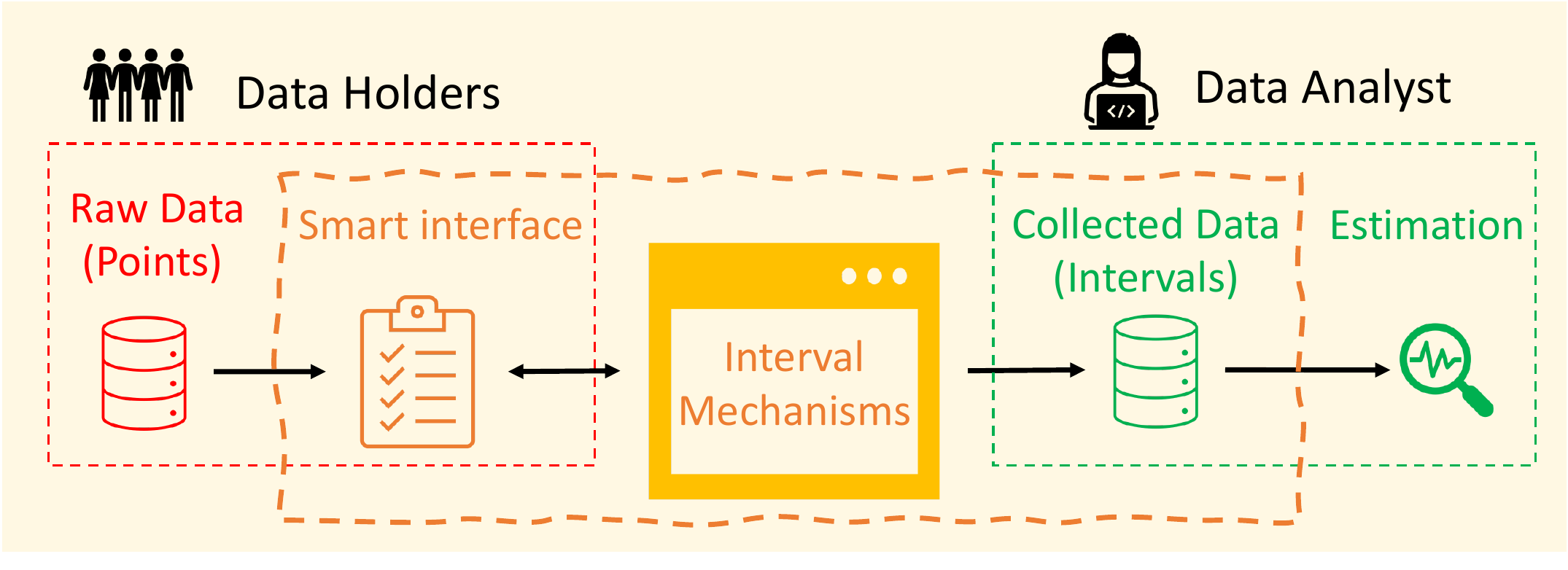}}
\vskip -0.1in
\caption{A generic data collection system based on interval privacy.}
\label{fig_genericSystem}
\end{center}
\vskip -0.1in
\end{figure}

The main contributions of this paper are summarized below. 
\begin{itemize}
	\item 
	We develop a new perspective of data privacy named interval privacy, particularly suitable for privacy-sensitive data collection. We develop interval mechanisms and show their desirable interpretations to implement interval privacy naturally. Fig.~\ref{fig_genericSystem} illustrates a general use scenario where individuals' private data are collected through an interface that obfuscates each data point into an interval (or, in general, a range). We show several unique features of an interval privacy mechanism. First, it tells the truth while obfuscating the truth, which is important for some application domains such as census and defense scenarios where information needs to be correct. Second, it can be easily deployed through survey systems, with an interpretable and perceptible human-computer interface. Third, such an interface can allow progressive narrowing of collected intervals and thus be adaptive to individuals' privacy sensitivities that are likely to vary in practice. Fig.~\ref{fig_surveySystem} illustrates a general survey system built upon interval privacy, which, unlike conventional surveys widely used in various fields such as sociology, political science, and psychometrics~\cite{couper2001web,litwin2003assess}, generates questions in an individual-specific and data-adaptive manner.
	To our best knowledge, this is the first work that advocates the use of random ranges for privacy-preserving data collection and the use of randomly generated questions in survey designs.
	
	\item 
	We develop fundamental properties of the proposed privacy mechanism, including the composition property that characterizes privacy leakage under multiple queries of the same data, the robustness to pre-processing and post-processing, and the identifiability of the underlying data distributions.
	We exemplify the use of interval privacy in estimating population distribution, statistical functional, and regression function, and show that the data collector does not necessarily need to know the distributional form of raw data for accurate population-level inference.
	In particular, we provide a general theory to show the topology and probabilistic structures needed to reconstruct the population distribution from random ranges non-parametrically.
	We develop several extensions to address individual-level privacy guarantees.
	We also develop a general method to perform supervised regression with interval-privatized responses. The technique can be applied to various interval-private data types, including pure intervals or a mixture of intervals and points. 

	\item 
	We experimentally demonstrate the proposed concepts, data formats, properties, and methods. We also discuss the connections between interval privacy and the existing literature from multiple angles. For example, we will point out (in the supplementary document) that interval privacy is neither a generalization nor a specialization of (local) differential privacy~\cite{evfimievski2003limiting,dwork2006calibrating,kasiviswanathan2011can}. 
\end{itemize}

\begin{figure}[tb]
\begin{center}
\centerline{\includegraphics[width=0.9\columnwidth]{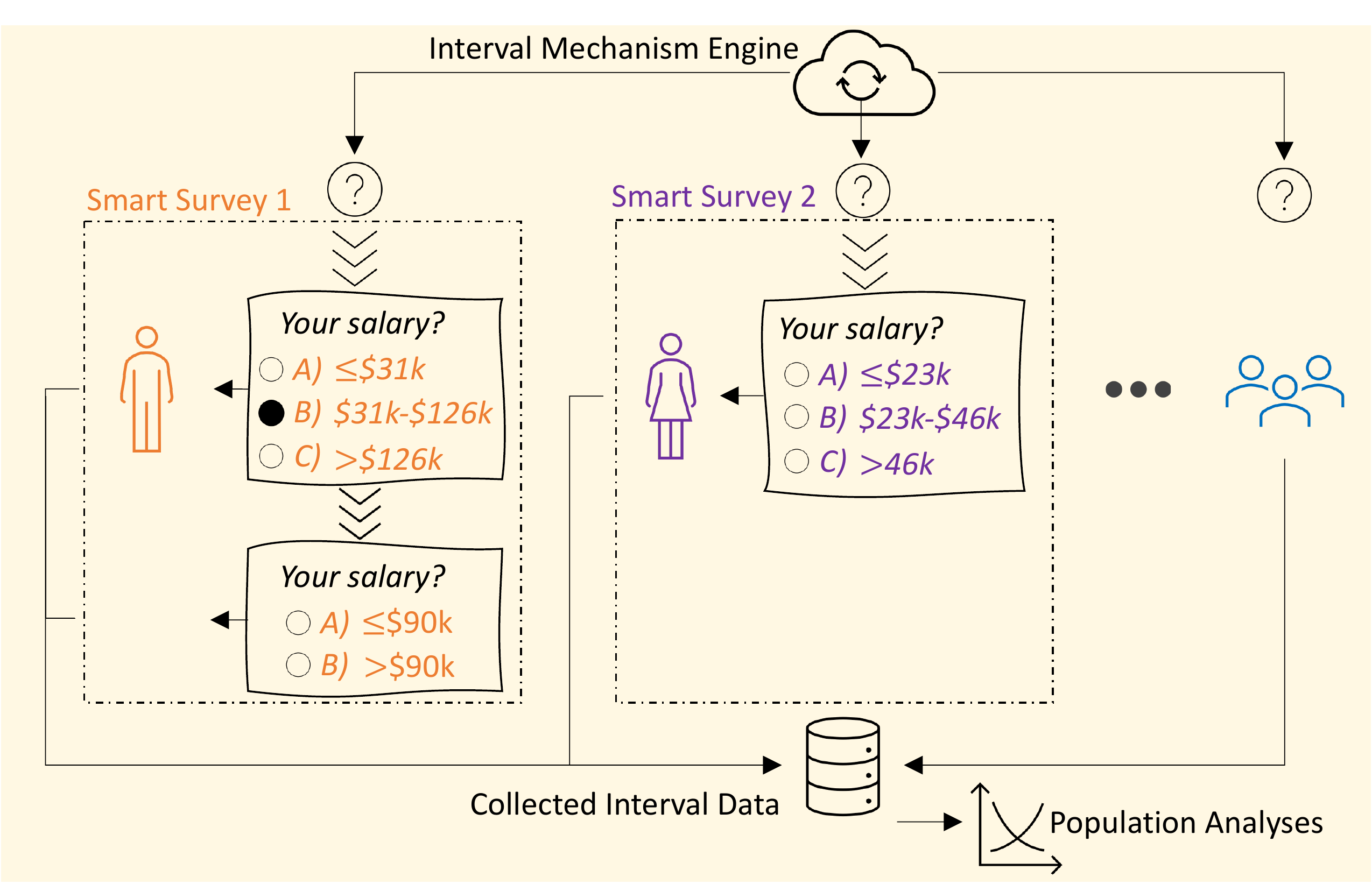}}
\vskip -0.1in
\caption{Illustration of a general survey system that generates different human-computer interfaces for participating individuals.}
\label{fig_surveySystem}
\end{center}
\vskip -0.2in
\end{figure}

The rest of the paper is outlined below. 
In Section~\ref{sec_form}, we introduce the basic concept of interval privacy and use simple examples to explain its use scenarios. 
In Section~\ref{sec_general_foundation}, we introduce general interval privacy mechanisms, data formats, theoretical properties, and various practical implications. 
In Section~\ref{sec_exp}, we provide experimental studies.
In Section~\ref{sec_literature}, we further discuss some related literature. 
We conclude the paper in Section~\ref{sec_con} and include proofs in the Appendix. Additional discussions and details are in the supplementary document.

\section{Interval Privacy} \label{sec_form}

\subsection{Notation}
We let $Y$ denote a continuously-valued random variable representing the raw data throughout the paper.
Suppose that the raw data $Y_1,\ldots,Y_n \in \Y\subset \R$ are i.i.d. with probability $\P_Y$, density $p_Y$, and cumulative distribution function (CDF) $\Fy$. 
We will write $i=1,\ldots,n$ as $i \in [1:n]$. 
For a random vector $Q$, we let $Q^{(i)}$ and $Q_i$ denote its $i$-th entry and $i$-th observation, respectively, unless otherwise stated. 

We consider the local data privacy scenario where there are many data owners and one data collector. A data owner is an individual that holds a private data value (represented by $Y$) and does not trust the data collector. A data collector's genuine goal is to infer distributional information of $Y$ instead of each individual's data value.
As such, a general local privacy scheme uses a random mechanism $\M$ that maps each $Y$ to another variable $Z \in \Z$ and then collects $Z$. The mechanism is often represented by a conditional distribution of $Z \mid Y$. 
The random variable $Z$ may be constructed by a measurable function of $Y$, or a function of $Y$ and other auxiliary random variables.
We assume that the joint distribution of $[Y, Z]$ exists and has a density with respect to the Lebesgue measure.
Suppose that $S$ is a Borel set. We let $L(S) \de \P_Y(S)$ denote the `size' of $S$ (which remains the same throughout the paper).

\subsection{Interval Privacy} \label{subsec_IP_def}

\begin{definition}[Interval Privacy]\label{def_IP}
  A mechanism $\M$ has the property of interval privacy if almost surely for all $y_1,y_2 \in S_z$,  
  \begin{align}
    \frac{p_{Y \mid Z}(y_1 \mid Z=z)}{p_{Y \mid Z}(y_2 \mid Z=z)} 
    = \frac{p_{Y}(y_1)}{p_{Y}(y_2)}\label{eq_def},
  \end{align}
  where $p_{Y \mid Z}$ denotes the distribution of $Y$ conditional on $Z$ and $S_z$ is the support of $Y$ given $Z=z$.

  The \textit{privacy coverage} of $\M$, denoted by $\tau(\M)$, is defined by $\E(L(S_{Z}))$, where the expectation is over $Z$, and $L$ is the size under the prior distribution of $Y$ (namely $p_Y$).
  An $\M$ is said to have $\tau$-interval privacy if $\tau(\M)\geq \tau$.
\end{definition}

\textit{Implication 1}:
Equation~(\ref{eq_def}) means that the conditioning on $Z=z$ does not provide extra information except that $y$ falls into $S_z$. If $y_1 \neq y_2$, and they fall into the same support $S_z$, their likelihood ratio remains the same as if no action were taken.
Equation~(\ref{eq_def}) also implies that 
\begin{align}
  p_{Y \mid Z}(y \mid Z=z) = c_z \i_{y \in S_z} \cdot p_Y(y) \label{eq15}
\end{align}
holds for the normalizing constant $c_z=1/\int_{S_z} p_Y(y) dy$.

\textit{Implication 2}:
Suppose that $Y=y_1$ is to be protected. Interval privacy creates ambiguity by obfuscating the observer with sufficiently many $y_2$'s in a neighborhood whose posterior ratios do not vary by incorporating the new information $Z=z$.
Also, suppose that $S_z$ is a (closed or open) interval, then the finite cover theorem implies the following alternative to the above second condition.
For all $y$ in the interior of $S_z$, there exists an open neighborhood of $y$, $U(y) \subset S_z$, where (\ref{eq_def}) holds for all $y_1,y_2 \in U(y)$.

\textit{Implication 3}: By its definition, the privacy coverage $\tau(\M)$ takes values from $[0,1]$. The privacy coverage quantifies the average amount of ambiguity or the level of privacy. A larger value indicates increased privacy. Likewise, for each (nonrandom) raw-privatized data pair, $(y, z)$, we introduce $L(S_z)$ as the \textit{individual privacy coverage}, interpreted as the privacy level for a particular data item being collected (illustrated in the third row of Tab.~\ref{fig_snap}). 

A related measure is $1-\tau(\M)$ which naturally describes the \textit{privacy leakage}. For instance, the coverage of $\Y$ is one, and the leakage is zero, meaning no privacy is leaked; Meanwhile, the coverage of $y$ (as a degenerate interval) is zero.
To realize interval privacy, we will introduce natural interval mechanisms that convert $y$ to a random interval that contains $y$. 
For example, $S_z$ is in the form of $(-\infty, u]$ or $(u,\infty)$, encoded by the vector $z=[u, \i_{y \leq u}]$.

The notion of interval privacy appears to be related to information privacy~\cite{du2012privacy,sun2016towards} that requires the posterior-prior density ratio $p_{Y \mid Z}(y \mid Z=z)/p_Y(y)$ to stay in $[e^{-\alpha},e^{\alpha}]$ for all feasible $y$ and $z$ under a constant (privacy budget) $\alpha>0$. Nevertheless, interval privacy and information privacy do not imply each other. In fact, by its definition, $\alpha$-information privacy implies $2\alpha$-local differential privacy, which coincides with interval privacy only when $\alpha=0$, the trivial case that $Y$ and $Z$ are independent (elaborated in the supplement). In this regard, interval privacy provides a unique angle of privatizing information complementary to the existing notions. 

We provide Fig.~\ref{fig_V1_view} to visualize our unique approach to protecting data information. 
It shows the data format of interval data and its released information of the raw data as implied by posterior uncertainty. It also visualizes the popular approach that privatizes data by perturbations. From a Bayesian perspective, the perturbation changes the density shape, while the interval approach changes the essential support.  
In the plot, we generated raw data $y$ from the standard Gaussian. The interval privacy used the standard Logistic random variable $U$ and reports either `$\leq U$' or `$>U$,' resulting in around $0.25$ privacy leakage. The perturbation approach truncated the raw data within $[-3,3]$ and added the Laplacian noise so that it achieves a $2$-local differential privacy.  

\begin{figure}[tb]
\begin{center}
\centerline{\includegraphics[width=0.7\columnwidth]{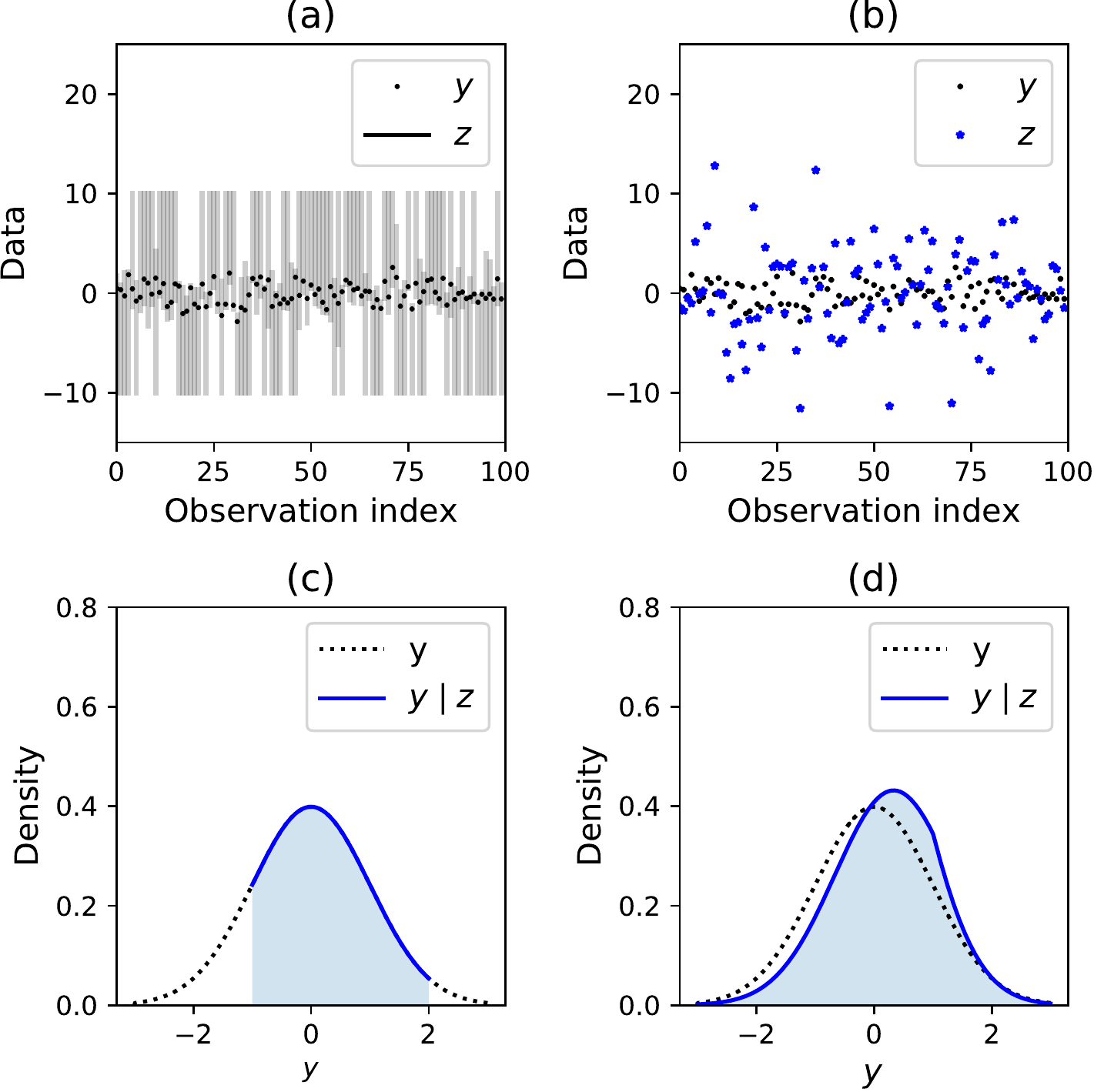}}
\vskip -0.15in
\caption{An illustration of interval privacy (left column) and local differential privacy with noise perturbation (right column) in terms of: the raw data $y$ and interval-privatized data (left-up), the posterior distribution of $Y$ given an interval $[-1,2]$ (left-bottom), raw data $y$ and Laplacian-perturbed (point) data (right-up), and the posterior of $Y$ given an observed point $2$ (right-bottom).}
\label{fig_V1_view}
\end{center}
\vskip -0.15in
\end{figure}

\subsection{Explanations of Interval Privacy via Simple Examples} \label{subsec_example}

This section provides simple examples of data formats, mechanisms, and practical implications regarding interval privacy. 
We will introduce formal definitions of different mechanisms and theoretical foundations in Section~\ref{sec_general_foundation}. 

Suppose that a data analyst aims to study the population distribution of salary. To collect the salary information from an individual (say Alice) without revealing the underlying value, Alice is asked to report whether the salary is above a threshold or not. This is illustrated in Fig.~\ref{fig_example_age}(a). This naturally leads to the following privacy mechanism, perhaps the simplest interval mechanism. Only the indicator of whether the salary is larger than a randomly generated threshold is reported.

\begin{figure}[tb]
\begin{center}
\centerline{\includegraphics[width=0.6\columnwidth]{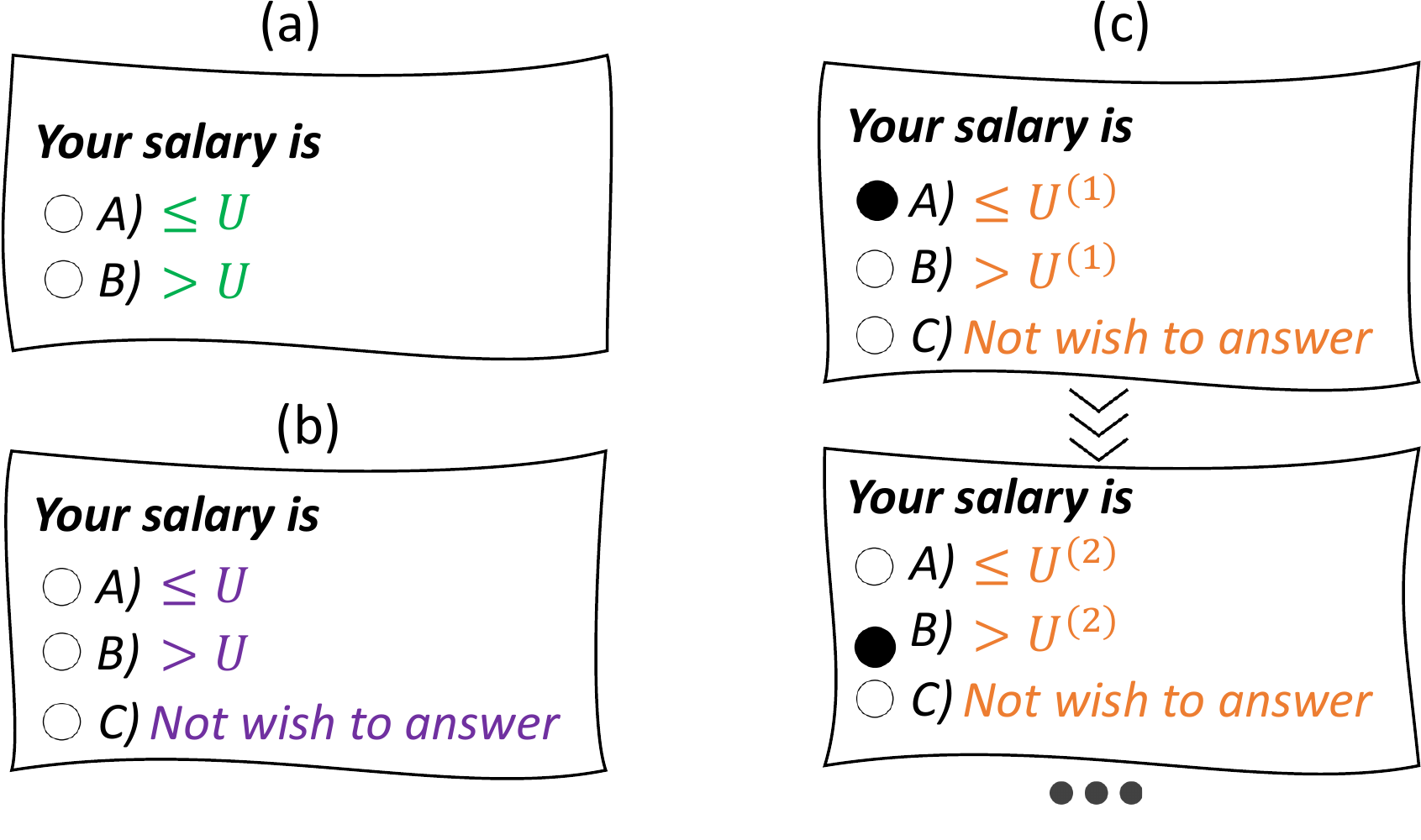}}
\vskip -0.2in
\caption{Single-choice questions that collect interval-private data in different ways, including a) `mandatory', b) `opt-in', and c) `progressive opt-in'.}
\label{fig_example_age}
\end{center}
\vskip -0.1in
\end{figure}

\vspace{1mm}
\noindent 
$\bullet$ \textit{Case-I interval mechanism}:
Let $U \in \R$ be a random variable independent with $Y$, referred to as an anchor point. 
Either $Y \leq U$ or $Y > U$ is observed.
The observations are $n$ i.i.d. copies of $Z=[U,\Delta]$, where $\Delta = \i_{Y \leq U}$ is an indicator variable.

Likewise, we also define the following mechanism that admits a bounded interval (e.g., $\$60$-$80k$).
More general mechanisms will be introduced in Section~\ref{sec_general_foundation}. 

\vspace{1mm}
\noindent
$\bullet$ \textit{Case-II interval mechanism}:
Let $[U,V] \in \R^2$ be a random variable that satisfies $\P(U\leq V)=1$ and is independent with $Y$. 
Either $Y \leq U$, $U < Y \leq V$, or $Y > V$ is observed.
The observations are $n$ i.i.d. copies of $Z=[U,V,\Delta, \Gamma]$, where  
$\Delta = \i_{Y \leq U}$ and $\Gamma = \i_{U < Y \leq V}$ are indicator variables.

We summarize some features of interval mechanisms below.

\indent 
\textbf{1) Conditional non-informativeness}: We will show in Subsection~\ref{subsec_canonical_mechanism} that  the collected data $Z$ in the above examples satisfy the interval privacy (Definition~\ref{def_IP}). Thus, conditional on the revealed support set, e.g., $(-\infty, U]$, no additional information is revealed since the relative probability densities of $Y$ conditional on $Y \leq U$ do not differ from unconditional ones. So, the only information provided by $Z$ about the raw data $Y$ is an (often wide) range that contains $Y$. 

\indent 
\textbf{2) Information fidelity}: 
An interesting aspect of the interval privacy mechanism is that it collects obfuscated data instead of perturbed data.  
Here, we use the term `obfuscation' to refer to the process $Y \rightarrow Z$ that any deductive reasoning based on $Z$ does not contradict the truth of $Y$, referred to as information fidelity. In contrast, `perturbation' means one cannot make a factual statement from observing $Z$. Both the terms are materialized by introducing randomness (but in different ways). We will theoretically elaborate on their difference in the supplement.
Maintaining information fidelity is vital in many applications domains such as census, security, and defense, where collecting perturbed data can lead to misinterpretations or disastrous decisions. The interval-private data convey information without lying about the underlying values. As we will show later, even if each point is obfuscated into a fairly wide range, one can still reconstruct the underlying population distribution without systematic biases. 

The obfuscation process of interval privacy offers another practical benefit. Consider scenarios where a resourceful organization collects private information from anonymized individuals. Individuals hope to easily perceive that the already-collected data are indeed private. Existing privacy schemes such as homomorphic encryption~\cite{gentry2009fully} and local differential privacy~\cite{evfimievski2003limiting} often need the collecting organization to implement sophisticated cryptography- or randomization-based procedure at the backend. Consequently, their privacy architectures may require individuals to submit exact raw data in the collecting interface, which inevitably raises trustworthiness issues. A potential remedy is to apply privatization immediately after data collection and publicize the source codes. But even in that case, it may not be transparent to individuals (especially to the public). 
In contrast, an organization can transparently deploy the proposed interval privacy mechanisms through electronic survey-based data collection infrastructures. Such a privacy interface allows an individual to perceive the level of privacy directly and at peace.

\indent 
\textbf{3) Distributional identifiability}: 
It is worth noting that generating random $U$ in the above Case-I mechanism is essential. If $U$ is deterministic, it is impossible to accurately estimate the distribution of $Y$ since one can always find a distinct distribution whose mass on the pre-determined intervals coincides.
Suppose that the essential support of $U$ contains that of $Y$. It has been shown under reasonable conditions that the distribution of $Y$ can be consistently estimated from interval observations even if the underlying distribution is not parameterized~\cite{groeneboom1992information}. 
We will revisit the nonparametric estimation method and develop a new theory for general interval mechanisms in Subsection~\ref{subsec_topology}.
To illustrate distributional identifiability, we generate $1000$ points of $Y$ from a standard Logistic distribution and Case-I interval-private data from $U$ that follows a Logistic distribution whose scale is $2$. Fig.~\ref{fig_cdf} (left plot) shows parametric and nonparametric estimations of the CDF $\Fy$ from the interval data.
The parametric estimation uses the standard maximum likelihood approach. 
The nonparametric estimation uses the self-consistency algorithm~\cite{turnbull1976empirical} implemented in the `Icens' R package~\cite{gentleman2010icens}.

\begin{example}[Functional Estimation]\label{example1}
  Suppose that an analyst is interested in estimating a smooth functional $K(\Fy)$ of the underlying distribution function $\Fy$. A nonparametric estimator is $K(\hat{F}_Y)$ where $\hat{F}_Y$ is the nonparametric maximum likelihood estimator of $\Fy$~\cite{groeneboom1992information}. 
  Specifically, all moment functionals $K: \Fy \mapsto \int_{\Y}y^k d\Fy(y) $, or more generally,
  linear functionals in the form of 
	$K: \Fy \mapsto \int_{\Y} \phi(y) d\Fy(y)$
   can be estimated in this way. 
\end{example}

\begin{example}[Mean Estimation]\label{eg2}
	Sometimes, a statistical functional may be directly estimated without the need of estimating $\Fy$. For example, suppose that the raw data are i.i.d. $Y_i \in [a,b]$ for $i\in [1:n]$,  with unknown mean $\mu$. The observations are $Z_i=[U_i,\Delta_i]$, $i\in [1:n]$, from the Case-I mechanism with $U_i \sim_{i.i.d.} \textrm{Uniform}[a,b]$.  
	We provide the following estimator and will show that it is a $\sqrt{n}$-consistent and unbiased estimator of $\mu$. 
	\begin{align}
		\hat{\mu}_n = \frac{1}{n} \sum_{i=1}^n \biggl( \Delta_i (2U_i-b) + (1-\Delta_i) (2U_i-a)  \biggr)	. \label{eq_eg2}
	\end{align}
\end{example}

\begin{proposition} \label{prop_eg2}
	The estimator in Example~\ref{eg2} satisfies $\E(\hat{\mu}_n) = \mu$ and $\var(\hat{\mu}_n) = O(n^{-1})$.	
\end{proposition}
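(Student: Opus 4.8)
The plan is to use the i.i.d.\ structure of the summands to reduce the statement to a one-variable computation against the anchor distribution. Write $X_i \de \Delta_i(2U_i-b) + (1-\Delta_i)(2U_i-a)$, so that $\hat\mu_n = n^{-1}\sum_{i=1}^n X_i$. Since the pairs $(Y_i,U_i)$ are i.i.d.\ and $\Delta_i$ is a measurable function of $(Y_i,U_i)$, the $X_i$ are i.i.d.; hence $\E(\hat\mu_n)=\E(X_1)$ and $\var(\hat\mu_n)=n^{-1}\var(X_1)$, and it suffices to analyze a single term.

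For unbiasedness I would condition on $Y_1=y$ and integrate over $U_1\sim\textrm{Uniform}[a,b]$, splitting the range at $U_1=y$:
\begin{align}
  \E(X_1\mid Y_1=y) = \frac{1}{b-a}\left(\int_y^b (2u-b)\,du + \int_a^y (2u-a)\,du\right).
\end{align}
Each integrand has antiderivative $u^2$ minus a linear term, and the two integrals evaluate to $by-y^2$ and $y^2-ay$ respectively, summing to $(b-a)y$. Thus $\E(X_1\mid Y_1=y)=y$, and the tower property gives $\E(X_1)=\E(Y_1)=\mu$, hence $\E(\hat\mu_n)=\mu$.

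For the variance bound, observe that $U_1\in[a,b]$ almost surely, so $2U_1-b$ and $2U_1-a$ both lie in a fixed bounded interval; since $\Delta_1\in\{0,1\}$, $X_1$ equals one of these two quantities and is therefore bounded almost surely by a constant depending only on $a$ and $b$. A bounded random variable has finite variance, so $\var(X_1)<\infty$ and $\var(\hat\mu_n)=\var(X_1)/n=O(n^{-1})$. Combining unbiasedness with this variance bound and Chebyshev's inequality also delivers the $\sqrt{n}$-consistency asserted in Example~\ref{eg2}.

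The only non-bookkeeping step is the elementary integral evaluation in the unbiasedness argument, and even that is a one-line antiderivative calculation; the rest follows immediately from independence and boundedness, so I do not anticipate a genuine obstacle here — the content is really the clever choice of $\hat\mu_n$ itself, which is already given.
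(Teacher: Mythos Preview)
Your proof is correct and follows essentially the same approach as the paper: condition on $Y_1=y$, split the integral over $U_1\sim\textrm{Uniform}[a,b]$ at $u=y$, evaluate the two elementary integrals to obtain $\E(X_1\mid Y_1=y)=y$, then invoke boundedness of the summands for the $O(n^{-1})$ variance. The only addition you make beyond the paper is the Chebyshev remark for $\sqrt{n}$-consistency, which is fine and immediate.
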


\indent 
\textbf{4) Achievability}:  The ambiguity as quantified by privacy coverage $\tau$  (in Definition~\ref{def_IP}) can be controlled by the distribution of $[U, V]$, or the number of intervals, e.g., two in Case-I and three in Case-II. The larger $\tau$, the more ambiguity and thus more protection. The following result shows that any privacy coverage in $[c,1]$ for a constant $c\in [0,1)$ is achievable.

\begin{theorem}[Achievability] \label{thm_cover}
  Assume that the density function of $Y$ is bounded. 
  For any $\tau \in (1/2,1)$ (respectively $(1/3,1)$) there exists a Case-I (respectively Case-II) mechanism $\M$ whose  privacy coverage is exactly $\tau$.
\end{theorem}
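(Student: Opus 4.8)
The plan is to reduce the privacy coverage to an explicit scalar function of the law of the anchor(s) and then invoke the intermediate value theorem. \textbf{Step 1 (coverage formula).} For a Case-I mechanism, conditioning on $Z=[u,1]$ gives $S_z=(-\infty,u]\cap\mathrm{supp}(Y)$ and conditioning on $Z=[u,0]$ gives $S_z=(u,\infty)\cap\mathrm{supp}(Y)$, so $L(S_z)=\P_Y(S_z)$ equals $\Fy(u)$ or $1-\Fy(u)$ respectively. Marginalizing first over $\Delta=\i_{Y\le U}$ given $U=u$ (using $U\perp Y$) and then over $U$ gives
\[
  \tau(\M)=\E_U\!\bigl[\Fy(U)^2+(1-\Fy(U))^2\bigr]=1-2\,\E_U\!\bigl[W(1-W)\bigr],\qquad W:=\Fy(U).
\]
The same bookkeeping for the three Case-II events $\{Y\le U\}$, $\{U<Y\le V\}$, $\{Y>V\}$ yields
\[
  \tau(\M)=\E_{U,V}\!\bigl[A^2+(B-A)^2+(1-B)^2\bigr],\qquad A:=\Fy(U)\le B:=\Fy(V).
\]
These identities hold for \emph{any} anchor law, since Case-I and Case-II mechanisms satisfy interval privacy for every such choice (Subsection~\ref{subsec_canonical_mechanism}); hence each candidate below is a legitimate interval mechanism.

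\textbf{Step 2 (range of the scalar).} Since $W\in[0,1]$ forces $W(1-W)\in[0,1/4]$, the Case-I coverage lies in $[1/2,1]$; for a target $\tau\in(1/2,1)$ it suffices to make $W$ a point mass at $w_0=(1-\sqrt{2\tau-1})/2\in(0,1/2)$, which solves $1-2w_0(1-w_0)=\tau$. For Case-II, on the triangle $\{0\le A\le B\le1\}$ the convex function $g(A,B)=A^2+(B-A)^2+(1-B)^2$ attains minimum $1/3$ at $(1/3,2/3)$ and maximum $1$ at the vertices; along the symmetric segment $A=\tfrac13-s$, $B=\tfrac23+s$, $s\in[0,\tfrac13]$, one has $g=2(\tfrac13-s)^2+(\tfrac13+2s)^2$, which is continuous and strictly increasing (its derivative is $12s$) from $\tfrac13$ onto $1$, so every $\tau\in(1/3,1)$ is attained at a unique $s=s(\tau)$.

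\textbf{Step 3 (realizing the anchors).} Because $Y$ has a (bounded) density, $\Fy$ is continuous with $\Fy(-\infty)=0$ and $\Fy(+\infty)=1$, so the intermediate value theorem yields a point $u_0$ with $\Fy(u_0)=w_0$ (Case-I) and points $u_0\le v_0$ with $\Fy(u_0)=\tfrac13-s(\tau)$ and $\Fy(v_0)=\tfrac23+s(\tau)$ (Case-II), where $u_0<v_0$ because the two target values differ. Setting $U\equiv u_0$ (respectively $[U,V]\equiv[u_0,v_0]$) finishes the construction. If one prefers a genuinely random anchor --- in line with the identifiability discussion --- put mass $1/2$ on each of two points whose $\Fy$-values are $a$ and $1-a$; by symmetry $\E[W(1-W)]=a(1-a)$, so $a=(1-\sqrt{2\tau-1})/2$ again works, with distinct support points, and an analogous symmetric two-atom law handles Case-II.

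\textbf{Main obstacle.} The only genuine subtlety is that ``density bounded'' does not make $\Fy$ strictly increasing or the support an interval, so I never invert $\Fy$; instead existence of the anchors comes purely from the intermediate value theorem, and one checks the prescribed CDF values are distinct so that $U<V$ in Case-II. The boundedness hypothesis enters only through continuity of $\Fy$. What remains is the routine-but-careful identification of $S_z$ and of $L(S_z)=\P_Y(S_z)$ in each conditioning event, after which the displayed coverage formulas and the intermediate value theorem close the argument.
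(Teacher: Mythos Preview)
Your derivation of the coverage formula $\tau(\M)=\E_U[\Fy(U)^2+(1-\Fy(U))^2]$ (and its Case-II analogue) matches the paper exactly, and your algebra is correct. However, your construction takes a genuinely different route from the paper's.

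The paper never uses degenerate or discrete anchors. Instead it parameterizes $U$ by a two-component Gaussian mixture density $p_{\pi_1,\pi_2,\sigma}$, with one component centered at $\mu_1=\Fy^{-1}(1/2)$ and the other at a point $\mu_\varepsilon$ far in the tail. By pushing $\sigma\to 0$ and $\pi_1\to 1$ (respectively $\pi_1\to 0$) the coverage $h(\pi_1,\sigma)$ is driven arbitrarily close to $1$ (respectively to $1/2$); continuity of $h$ in its parameters then gives the intermediate value. The boundedness of the density is invoked here to control how fast $\Fy(u)^2+(1-\Fy(u))^2$ varies near the mixture centers, so that a narrow Gaussian bump captures the target value.

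Your argument is shorter and more elementary: you solve the scalar equation exactly and then invert only $\Fy$ via the IVT. Two caveats are worth noting. First, the paper's Notation section assumes the joint law of $[Y,Z]$ admits a density, and the paper explicitly flags deterministic $U$ as degenerate (no distributional identifiability); your point-mass and two-atom anchors sit outside that intended class, whereas the paper's Gaussian-mixture construction stays inside it. Second, your claim that ``boundedness enters only through continuity of $\Fy$'' overstates things: continuity of $\Fy$ already follows from the mere existence of a density, so in fact your proof does not use the boundedness hypothesis at all --- it is the paper's density-based construction, not yours, that needs it. If you want to match the paper's framework you could smooth your two-atom law into a narrow Gaussian mixture and reuse your algebra in the limit, which is essentially what the paper does.
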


The result implies that a privacy mechanism exists for arbitrarily close to one privacy coverage. Also, the proof indicates that the choice is not unique. As a by-product of the proof, $\tau(\M) =  n^{-1}\sum_{i=1}^n [ \Fy(u_i)^2 +(1-\Fy(u_i))^2 ]$ is a consistent estimator of the privacy coverage for Case-I mechanisms. The estimator can be similarly extended for other mechanisms. Although the above result indicates that the max privacy near one is achievable, we may not do so in practice since there is an inherent tradeoff between privacy and estimation accuracy. To see that, we provide an example inference task below, which is interesting in its own right. 
 
 For any functional that is differentiable along Hellinger differentiable paths of distributions (e.g., linear functionals), one can derive an H\'{a}jek-LeCam convolution theorem type information lower bound, giving the best possible limit variance that can be attained under $\sqrt{n}$ convergence rate where $n$ denotes the data size~\cite{geskus1999asymptotically}. The distribution of anchor points is said to be optimal if such information lower bound is attained by the produced interval data.
 
\begin{theorem}[Optimal Anchor] \label{thm_optimalU}
  An optimal distribution of $\Q$ (in the Case-I mechanism) for estimating any linear functional in Example~\ref{example1} exists, and it has the density 
  \begin{align}
    g_{\Q}(u) = c_{\phi} \bigl\{\Fy(u)(1-\Fy(u)) \bigr\}^{1/2} \bigl|\frac{d }{d u}\phi(u)\bigr| \nonumber
  \end{align}
  if it is integrable, 
  where $c_{\phi}$ is a normalizing constant. 
\end{theorem}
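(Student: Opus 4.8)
My plan is to recognize the Case-I mechanism as the classical current-status (interval censoring, Case~1) observation scheme and to proceed in two movements: first, for a \emph{fixed} anchor density $g$, identify via semiparametric efficiency theory the smallest asymptotic variance $V(g)$ attainable by any regular $\sqrt n$-consistent estimator of the linear functional $K(\Fy)=\int_{\Y}\phi\,d\Fy$; second, minimize $V(g)$ over admissible anchor laws, where the minimizer is read off from the equality case of a Cauchy--Schwarz inequality.

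For the first movement, the observation is $Z=[U,\Delta]$ with $\Delta=\i_{Y\le U}$ and joint density $p(u,\delta)=g(u)\,\Fy(u)^{\delta}\{1-\Fy(u)\}^{1-\delta}$, where $g$ (the density of $U$) is known to the collector. Along a Hellinger-differentiable path of raw-data laws with score $a$ (so $\int a\,d\Fy=0$), the induced score for $Z$ is $\dot F(u)\,\{\delta-\Fy(u)\}/\{\Fy(u)(1-\Fy(u))\}$ with $\dot F(u)=\int_{\{y\le u\}}a\,d\Fy$, while $K(\Fy)$ has pathwise derivative $\int\phi\,a\,d\Fy$. Writing $K(\Fy)=-\int_{\Y}\phi'\,\Fy$ up to an additive constant (integration by parts) and searching for the efficient influence function inside the tangent space among functions of the form $c(u)\{\delta-\Fy(u)\}$, one solves for $c$ and obtains
\begin{align}
\tilde\ell_g(u,\delta)=-\frac{\phi'(u)}{g(u)}\bigl(\delta-\Fy(u)\bigr),\qquad
V(g)=\E\bigl[\tilde\ell_g(U,\Delta)^2\bigr]=\int_{\Y}\frac{\phi'(u)^2\,\Fy(u)\{1-\Fy(u)\}}{g(u)}\,du,\nonumber
\end{align}
where the last equality uses $\E[(\Delta-\Fy(U))^2\mid U=u]=\Fy(u)\{1-\Fy(u)\}$. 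This is the standard current-status efficient-influence-function computation \cite{groeneboom1992information}; by the H\'ajek--LeCam convolution theorem for this model and the efficiency of the NPMLE plug-in $K(\hat F_Y)$ \cite{geskus1999asymptotically}, $V(g)$ is precisely the information lower bound, hence the right object to minimize over $g$.

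For the second movement I would apply Cauchy--Schwarz: for any admissible density $g$,
\begin{align}
\Bigl(\int_{\Y}|\phi'(u)|\,\{\Fy(u)(1-\Fy(u))\}^{1/2}\,du\Bigr)^{2}
=\Bigl(\int_{\Y}\frac{|\phi'(u)|\{\Fy(u)(1-\Fy(u))\}^{1/2}}{g(u)^{1/2}}\,g(u)^{1/2}\,du\Bigr)^{2}
\le V(g)\int_{\Y}g(u)\,du=V(g).\nonumber
\end{align}
The left-hand side is free of $g$, so it lower-bounds $V(g)$ uniformly; equality in Cauchy--Schwarz forces $g(u)\propto|\phi'(u)|\{\Fy(u)(1-\Fy(u))\}^{1/2}$, and, this function being integrable by hypothesis, normalization yields exactly $g_{\Q}(u)=c_{\phi}\{\Fy(u)(1-\Fy(u))\}^{1/2}|\tfrac{d}{du}\phi(u)|$ with $c_{\phi}=\bigl(\int_{\Y}|\phi'|\{\Fy(1-\Fy)\}^{1/2}\bigr)^{-1}$. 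Thus $g_{\Q}$ attains $\inf_g V(g)$ and is optimal. It is nonnegative and integrates to one by construction; its essential support is $\{\phi'\neq0\}\cap\mathrm{supp}(\Fy)$, and on any sub-interval where $\phi$ is constant the functional is (after the integration by parts) insensitive to $\Fy$ restricted there, so this is not an obstruction to estimating $K(\Fy)$ --- or, if one insists on a strictly positive anchor, mixing $g_{\Q}$ with an $\varepsilon$-fraction of any admissible density and sending $\varepsilon\downarrow0$ shows the same value is the infimum, using continuity of $g\mapsto V(g)$.

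The hard part will be the first movement: transferring the semiparametric efficiency calculation for smooth functionals under current-status sampling to a general anchor density, and in particular (a) verifying that $\tilde\ell_g$ lies in the $L^2$-closure of the tangent space --- which requires $\phi'/g$ to decay suitably near the ends of $\mathrm{supp}(\Fy)$ so that the associated $\dot F$ vanishes there --- so that $\tilde\ell_g$ is genuinely the efficient influence function and not merely some gradient; (b) controlling the boundary terms in the integration by parts when $\Y$ is unbounded or $\phi$ is unbounded; and (c) pinning down the regularity conditions ($\phi$ absolutely continuous with $\phi'$ in the relevant $L^2$ space, Hellinger-differentiable submodels) under which the convolution theorem applies, so that the minimized $V(g)$ is an estimator-independent optimum rather than the variance of one particular procedure. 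Once $V(g)$ is secured, the Cauchy--Schwarz step and the admissibility checks are short.
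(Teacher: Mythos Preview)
Your proposal is correct and follows essentially the same approach as the paper: obtain the information lower bound $V(g)=\int_{\Y}\phi'(u)^2\,\Fy(u)\{1-\Fy(u)\}/g(u)\,du$ for the Case-I (current-status) model, then minimize over $g$ via Cauchy--Schwarz with equality forcing $g\propto|\phi'|\{\Fy(1-\Fy)\}^{1/2}$. The only difference is that the paper simply cites the lower-bound formula from \cite{groeneboom1992information} (and that the NPMLE attains it), whereas you sketch its derivation through the efficient influence function; your additional caveats on tangent-space membership, boundary terms, and regularity are more careful than what the paper records.
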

The above result indicates a tradeoff between privacy coverage and statistical efficiency (in inference). Fig.~\ref{fig_cdf} exemplifies the estimation of $\Fy$ and optimal Case-I interval mechanisms.

\begin{figure}[ht]
\begin{center}
\centerline{\includegraphics[width=0.8\columnwidth]{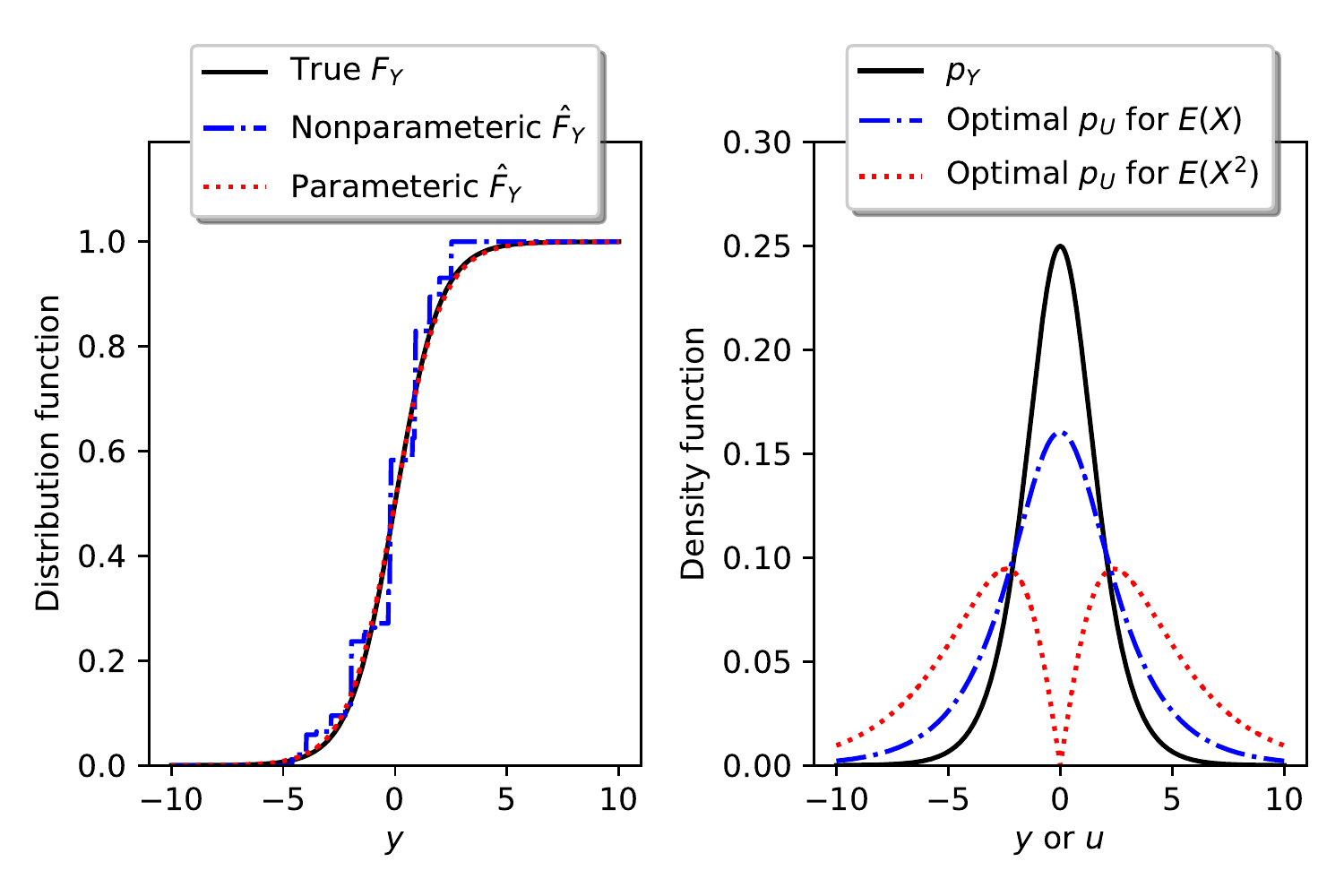}}
\vskip -0.2in
\caption{An illustration of population inference from interval-private data. The left plot shows the true CDF of the standard Logistic random variable $Y$, and its estimation using both nonparametric and parametric methods with 1000 data. The right plot shows the density of $Y$ and optimal densities of its Case-I anchor ($U$) for estimating the first and second moments (using Theorem~\ref{thm_optimalU}).}
\label{fig_cdf}
\end{center}
\vskip -0.1in
\end{figure}

\indent 
\textbf{5) Privacy guarantee}:
In the above discussion of achievability and additional properties to be introduced in Subsection~\ref{subsec_canonical_properties}, we use the privacy coverage $\E(L(S_{Z}))$ to quantify the privacy of a mechanism. 
A skeptical reader may ask how to ensure individual-level privacy. Recall that in Subsection~\ref{subsec_IP_def}, we introduced the individual privacy coverage $L(S_z)$, namely the size of an interval represented by $z$, to quantify an individual's privacy.
We provide two general ways to enhance individual-level privacy. Suppose that an individual has an associated `bottom line' $\tau \in [0,1]$, a value such that an organization can only collect an $z$ if $L(S_z) \geq \tau$. The first method uses an interval mechanism where each generated interval has coverage of at least $\tau$. Though simple, such a mechanism may not exist for some $\tau$ (e.g., $\tau=0.6$) since we cannot have two intervals whose sizes are both at least $0.6$. Moreover, the simple Case-I\&II mechanisms cannot simultaneously guarantee individual-level privacy and distributional identifiability, and thus a \textit{more general topology} (of data ranges) is required in the mechanism design. More on this will be discussed in Subsection~\ref{subsec_topology}.

The second method simply lets an individual decide whether to report the associated interval or not, depending on the $\tau$. In practice, this can be implemented in a way illustrated in Fig.~\ref{fig_example_age}(b), which provides a `Not wish to answer' option. Meanwhile, the interval mechanism needs to randomly subsample reported intervals to avoid the inference of the unreported interval (especially for a large $\tau$). Although such a mechanism introduces a selective bias (towards large intervals), we will show that the above appealing properties (such as non-informativeness and distributional identifiability) can still hold. More technical discussions are in Subsection~\ref{subsec_individual_privacy}.
 
\indent 
\textbf{6) Adaptivity to individual-level privacy}:
The interval mechanism can be extended to a progressive version. We illustrate this point in Fig.~\ref{fig_example_age}(c). Suppose for the first question, an individual chooses $\leq \Q^{(1)}$; our interface then generates another question with $\Q^{(2)} < \Q^{(1)}$; if the individual chooses $> \Q^{(2)}$, the interval $(\Q^{(2)}, \Q^{(1)}]$ is then collected. Such a progressive mechanism aligns with the above discussion of point (5), where the idea is to respect each individual's privacy while exploiting heterogeneous privacy sensitivities. We will revisit this idea in Subsections~\ref{subsec_individual_privacy} and~\ref{subsec_data1}.

\section{General Interval Mechanisms, Data Formats, and Theoretical Foundations}\label{sec_general_foundation}

With the high-level explanation in Subsection~\ref{subsec_example}, we now introduce general interval mechanisms and technical details. 

\subsection{Canonical Interval Mechanism}\label{subsec_canonical_mechanism}

Recall that a benign data collector is only interested in the population in local privacy settings instead of individual-level information.
Our interval privacy mechanism does not collect $Y$ itself but a privatized data $Z$ motivated by the scenarios where an individual will

\noindent $\bullet$ report an interval that contains $Y$,

\noindent $\bullet$ report $Y$ if it falls into an `acceptable' range, and

\noindent $\bullet$ have an acceptable range independent of $Y$.

\begin{figure}[tb]
\begin{center}
\centerline{\includegraphics[width=0.8\columnwidth]{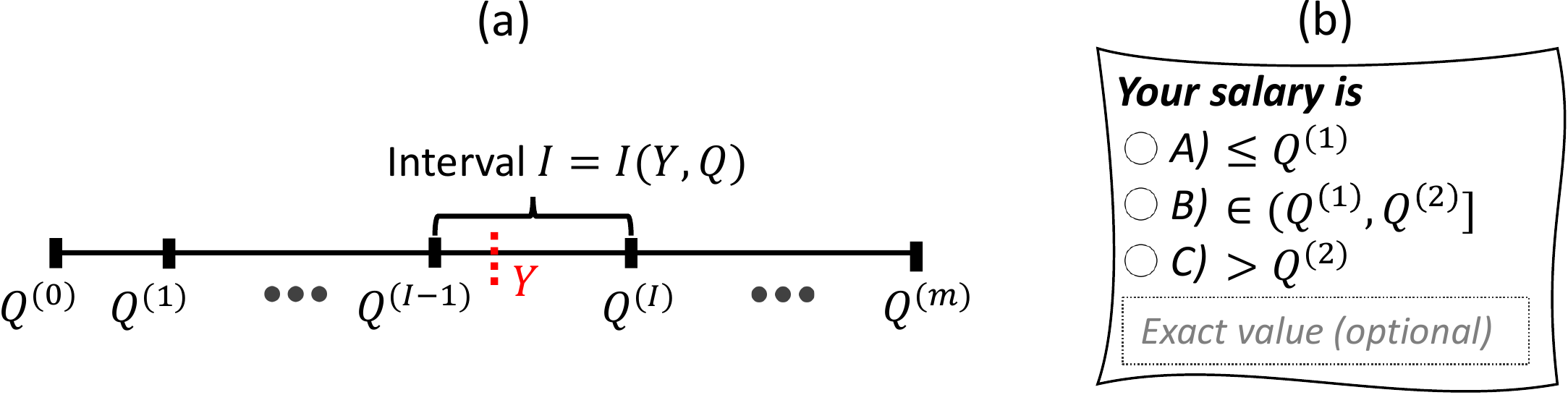}}
\vskip -0.1in
\caption{An illustration of (a) the canonical interval mechanisms, and (b) a survey-based practical interface.}
\label{fig_canonical}
\end{center}
\vskip -0.2in
\end{figure}

A natural mechanism to realize interval privacy is randomly partitioning the data domain $\Y \subset \R$ into disjoint intervals for each data owner and collecting the interval into which $Y$ falls. 
As such, we can naturally implement the mechanism through multi-choice survey questions, where each interval corresponds to a choice. This is illustrated in Fig.~\ref{fig_canonical}(a)(b). Unlike existing survey systems, our proposed system generates random (and thus different) choices for respondents. The randomness is needed to nonparametrically reconstruct the unknown population distribution from collected data, which will be elaborated in Subsection~\ref{subsec_topology}. Occasionally, the underlying point $Y$ is reported if it falls into a range that the data owner considers non-sensitive. This can be implemented by an optional text box in the above survey, as shown in Fig.~\ref{fig_canonical}(b). Consequently, the collected data are in the form of intervals or a mixture of intervals and points. 

Formally, we introduce the following notions. Let $\Q=[\Q^{(1)},\ldots,\Q^{(m-1)}]$ be a random vector with $\a=\Q^{(0)}=\Q^{(1)}<\cdots<\Q^{(m)} =\b$, as illustrated in Fig.~\ref{fig_canonical}(a). We will refer to each $\Q^{(i)}$ as an \textit{anchor} point, and let $\Ra^{(i)}\de (\Q^{(i-1)}, \Q^{(i)}]$ for $i\in [1:m]$. Then, the interval $\Ra^{(i)}$ into which $Y$ falls is collected. 
Suppose that when $Y$ falls into a pre-determined set $\C \subseteq \R$, named an \textit{acceptable range}, then the data owner chooses to disclose the value of $Y$.  
In practice, the acceptable range is at the data owner's discretion, and the set may not be fixed. 
To model the real-world complexity, we suppose that $\C$ can be one of the following: \textit{$\emptyset$, a fixed set, or the union of $(\Q^{(k-1)}, \Q^{(k)}]$ for a fixed set of $k$}. We suppose that the form of $\C$ is pre-specified and independent of $Y$.

\begin{definition}[Canonical Interval Mechanism]\label{def_mechanism}
	A privacy mechanism, denoted by $\M: Y \mapsto Z$, maps $Y$ to 
	\begin{align}
		Z = [ \Q, \, I(\Q,Y), \, Y \cdot \i_{Y \in \C}], \label{eq_format}
	\end{align}
	where $\Q=[\Q^{(1)},\ldots,\Q^{(m-1)}] \in \mathbb{R}^{m-1}$ is a random vector independent with $Y$, and $I: (\Q,Y) \mapsto i$ is the indicator function defined by $Y$ falling into $(\Q^{(i-1)}, \Q^{(i)}]$, $i\in [1:m]$.
	The corresponding privacy coverage and privacy leakage follow Definition~\ref{def_IP}.
\end{definition}

\begin{theorem}[Validity] \label{thm_validity}
  A  mechanism $\M$ in Definition~\ref{def_mechanism} satisfies the interval privacy in Definition~\ref{def_IP}. 
\end{theorem}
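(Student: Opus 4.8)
The plan is to verify Definition~\ref{def_IP} directly by computing the conditional density $p_{Y\mid Z}(y\mid Z=z)$ for the canonical mechanism and showing that on the support set $S_z$ it is proportional to $p_Y(y)$. First I would observe that the privatized output $Z=[\Q,\,I(\Q,Y),\,Y\cdot\i_{Y\in\C}]$ is determined by three pieces of information: the random anchor vector $\Q$ (independent of $Y$), the index $I(\Q,Y)=i$ telling which cell $\Ra^{(i)}=(\Q^{(i-1)},\Q^{(i)}]$ contains $Y$, and the extra disclosure $Y\cdot\i_{Y\in\C}$. I would split into two cases according to whether the realized $y$ lies in the acceptable range $\C$ or not. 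If $y\notin\C$, then the third coordinate of $z$ is $0$, and conditioning on $Z=z$ is the same as conditioning on the event $\{\Q=u,\ Y\in(\q^{(i-1)},\q^{(i)}],\ Y\notin\C\}$; since $\Q\perp Y$, Bayes' rule gives $p_{Y\mid Z}(y\mid z)\propto p_Y(y)\,\i_{y\in(\q^{(i-1)},\q^{(i)}]\setminus\C}$, so $S_z=(\q^{(i-1)},\q^{(i)}]\setminus\C$ (up to a null set) and the ratio identity~(\ref{eq_def}) holds with normalizing constant $c_z=1/\int_{S_z}p_Y$. If $y\in\C$, then $z$ reveals $Y=y$ exactly, $S_z=\{y\}$ is a singleton, and~(\ref{eq_def}) holds vacuously (the only pair $y_1,y_2\in S_z$ is $y_1=y_2$).

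The one subtlety is the third allowed form of $\C$, namely $\C=\bigcup_{k\in\mathcal{K}}(\Q^{(k-1)},\Q^{(k)}]$ for a fixed index set $\mathcal{K}$, where $\C$ itself depends on the random $\Q$ rather than being a fixed set. Here I would note that $\C$ is still a deterministic function of $\Q$ and hence, conditionally on $\Q=u$, reduces to a fixed set $\C_u=\bigcup_{k\in\mathcal{K}}(\q^{(k-1)},\q^{(k)}]$ that is independent of $Y$; the argument of the previous paragraph then goes through verbatim conditionally on $\Q=u$, and integrating over $u$ (which does not affect the a.s.\ statement) finishes it. The point to be careful about is that the mechanism's specification requires the \emph{form} of $\C$ to be pre-specified and independent of $Y$, so that conditioning on $Z$ never smuggles in information about $Y$ beyond cell membership.

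The main obstacle, such as it is, is bookkeeping rather than mathematics: writing the joint density of $[Y,Z]$ with respect to the appropriate dominating measure. The variable $Z$ is a mixed object — the anchors $\Q$ and (when $y\in\C$) the value $Y$ are continuous, while the index $I(\Q,Y)$ is discrete — so I would fix a base measure that is Lebesgue in the continuous coordinates times counting measure in the index, write $p_{Y,Z}(y,z) = p_\Q(u)\,\i\{i = I(u,y)\}\,p_Y(y)\,(\text{Jacobian factor }1)$ in the $y\notin\C$ branch and the analogous degenerate expression in the $y\in\C$ branch, and then read off $p_{Y\mid Z}$ by dividing by the marginal $p_Z(z)=\int p_{Y,Z}(y,z)\,dy$. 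Once this is set up, the factor $p_Y(y)$ appears unchanged in the numerator, so the ratio~(\ref{eq_def}) and the characterization~(\ref{eq15}) with $c_z=1/\int_{S_z}p_Y(y)\,dy$ both drop out immediately, and the a.s.\ qualifier handles the Lebesgue-null ambiguity at the cell boundaries $\{\q^{(i)}\}$ and on the event $\P_Y(\partial\C)=0$.
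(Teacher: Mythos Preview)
Your proposal is correct and follows essentially the same route as the paper: apply Bayes' theorem using the independence of $\Q$ and $Y$, split into the cases $y\in\C$ and $y\notin\C$, and conclude that $p_{Y\mid Z}(y\mid z)=c_z\,\i_{y\in S_z}\,p_Y(y)$ with $S_z=\Ra^{(j)}\setminus\C$. Your write-up is in fact more careful than the paper's terse version, particularly in handling the $\Q$-dependent form of $\C$ and the mixed dominating measure for $Z$.
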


\begin{remark}[Interpretation of Theorem~\ref{thm_validity}]\label{remark_validity}
Intuitively, the validity is because $I(\Q, Y)$, the informative part of $Z$, only reveals the range information regarding $Y$ but not any distributional information within that range. It also allows the range to degenerate to a point when $\i_{Y \in \C}=1$ (if $\C$ is not empty). Thus, the posterior density ratio equals the prior density ratio up to a range, as shown in (\ref{eq_def}). 
Also, we point out that interval mechanisms are adaptive to an individual user's privacy preference, meaning that progressively refined information can be obtained at the discretion of individual respondents without violating Definition~\ref{def_IP}. Formally, suppose that the system also generates a second mechanism $\M': Y \mapsto Z'$ based on anchor points $\Q'$ that are (adaptively) supported on the inferred range from a previous mechanism $\M$. Then, the joint of these two mechanisms is a mechanism that meets interval privacy. This observation can be proved similarly to Theorem~\ref{thm_validity}. A practical implication is that respondents may choose to answer zero, one, or more times of randomly generated surveys depending on their earlier answers and privacy preference. This point will be revisited in Example~\ref{eg_progression}.
\end{remark}

\begin{remark}[Practical Implementation]
In practice, a privacy-preserving data collection system involves two parties, a data owner (`Alice') and a data collector (`Bob'). A general collection procedure is outlined as follows. 
First, a system designer, who may or may not be one of the two parties, define a way of generating $\Q$. Such a generating process may be open-source implemented so that it is transparent to both parties. 
Second, the two parties agree on using the mechanism for data collection. 
Third, for Alice's data value $Y$, an instance of $\Q$ is generated, and Alice reports the interval to Bob. Additionally, Alice has the option to report the exact value, but this is {at Alice's discretion}.
In the end, the set of data Bob collects consists of intervals and possibly some exact values (degenerate intervals). 
\end{remark}  

\begin{remark}[Interpretation of Data]
The observables include a partition of $\Y$ (by $\Q$), the interval that $Y$ falls (by $I(\Q, Y)$), and sometimes the value of $Y$ (represented by $Y  \i_{Y \in \C}$).
The information obtained from the privatized data $Z$ is an interval containing $y$. 
The interval-private data do not contradict the underlying truth.  
This property does not hold for popular approaches where perturbations are injected into the raw data. 

An alternative notation to $I(\Q,Y)$ is to use $m$ indicator variables $\i_{Y\leq \Q^{(1)}}, \ldots, \i_{Y\leq \Q^{(m)}}$ to represent where $Y$ is located at. By the definition, $\i_{Y\leq \Q^{(i)}}=1$ if $i \leq I(\Q,Y)$ and $\i_{Y\leq \Q^{(i)}}=0$ otherwise. 
The values of $\Q$ are random so that it is possible to identify the population distribution of $Y$ (elaborated in Subsection~\ref{subsec_topology}). So then, the randomness of $Z$ conditional on $Y$ comes from $\Q$.
The choice of $\Q$ determines privacy-utility tradeoffs. Consider an extreme case where $m$ is sufficiently large. Then, the collected interval tends to be narrow, and the privacy coverage tends to zero. In another case where $m=1$ and $\Q \in \R$ has a considerable variance, the interval is likely to be close to $(-\infty,\infty)$, which enjoys good privacy but offers little utility in distribution estimation. 
\end{remark}

\begin{remark}[Interpretation of $\C$] \label{remark_A}
The acceptable range $\C$ is a mathematical abstraction of the possibility that Alice optionally reports the raw data.
In Definition~\ref{def_mechanism}, an empty set $\C$ corresponds to the case where all observables are intervals. 
To interpret, a random set $\C$ means individuals' acceptable ranges vary (e.g., due to natural randomness), while
a deterministic $\C$ means a fixed acceptable range uniformly for all individuals.  
From Subsection~\ref{subsec_topology} and afterward, we will elaborate on the $\C=\emptyset$ case and show that the population distribution is identifiable even without exact values of $Y$.

An alternative definition of privacy leakage is $L(\C)$, meaning the probability of observing the exact value of $X$. 
Compared with the recommended $1-\tau(\M)$, the leakage here does not consider the intervals outside $\C$. For example, in the particular case $\C=\emptyset$, we have $L(\C)=0$, which is not appealing as the quantization also provides information. 

\end{remark}

\subsection{Fundamental Properties of Interval Mechanism} \label{subsec_canonical_properties}
 
In this section, we show some desirable properties of canonical interval mechanisms. They can be directly extended to other mechanisms in later sections. 

\noindent\textbf{Composition}.
Suppose there are $k$ interval-private algorithms (or collectors), each querying the same data with a mechanism $\M_j: Y \mapsto Z_j$, $j\in [1:k]$. They may collaborate to narrow down the interval that contains a particular $Y$.
This motivates the following ensemble mechanism, denoted by $\oplus_{j=1}^k \M_j$, which is an interval mechanism induced by the intersections of anchor points and the union of acceptable ranges. 
\begin{definition}[Ensemble Mechanism]\label{def_ensemble}
	The ensemble of two privacy mechanisms 
	$\M_i: X \mapsto Z = [\Q_{[j]}, \, I(\Q_{[j]},Y), \, X \cdot 1_{X \in \C_{[j]}}]
	$
	with $j=1,2$ is defined by $\M_1 \oplus \M_2:$
	\begin{align}
		X \mapsto Z = \{ \Q_{[1]}\oplus \Q_{[2]}, \, I(\Q_{[1]}\oplus \Q_{[2]},Y), \, X \cdot 1_{X \in \C_{[1]} \cup \C_{[2]}}\}, \nonumber
	\end{align}
	where $\Q_{[1]}\oplus \Q_{[2]}$ denotes the vector of all the anchor points from $\Q_{[1]}$ and $\Q_{[2]}$, and $\C_{[1]} \cup \C_{[2]}$ denotes the union of two sets $\C_{[1]},\C_{[2]}$.
	In general, the ensemble of $k$ privacy mechanisms, denoted by $\oplus_{i=1}^k \M_i$, is recursively defined by $\oplus_{i=1}^k \M_i = (\M_1 \oplus \cdots \oplus \M_{k-1}) \oplus \M_k$ ($k \geq 2$).
\end{definition}

\begin{theorem}[Composition Property]\label{thm_ensemble}  
  	Let $\M_1,\ldots,\M_k$ be $k$ interval mechanisms as in Definition~\ref{def_mechanism}.
	Then, we have
	$
		1-\tau\bigl(\oplus_{j=1}^k \M_j\bigr) 
		\leq \sum_{j=1}^k (1-\tau(\M_j)).
	$
\end{theorem}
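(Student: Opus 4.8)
\textbf{Proof proposal for Theorem~\ref{thm_ensemble} (Composition Property).}

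The plan is to reduce everything to the case $k=2$ and then induct. For $k=2$, recall that the ensemble mechanism $\M_1 \oplus \M_2$ collects the interval of the common refinement of the two partitions induced by the anchor vectors $\Q_{[1]}$ and $\Q_{[2]}$; denote by $\Ra_1$ the interval that $Y$ falls into under $\M_1$ alone, by $\Ra_2$ the interval under $\M_2$ alone, and by $\Ra_{12}$ the interval under the ensemble. The key structural observation is that $\Ra_{12} = \Ra_1 \cap \Ra_2$ (up to the measure-zero subtlety of which endpoint is included, which does not affect $L(\cdot)=\P_Y(\cdot)$ since $Y$ is continuous). Hence, writing $S_Z$ for the conditional support associated with a privatized value, I would first establish pointwise (i.e.\ for each realization of $(\Q_{[1]},\Q_{[2]},Y)$) the inclusion-exclusion-type bound
\begin{align}
L(\Ra_1 \cap \Ra_2) \;\geq\; L(\Ra_1) + L(\Ra_2) - 1, \nonumber
\end{align}
which is just $\P_Y(\Ra_1 \cap \Ra_2) \geq \P_Y(\Ra_1) + \P_Y(\Ra_2) - \P_Y(\Y)$ together with $\P_Y(\Y)\leq 1$. (If the acceptable range $\C$ is nonempty and $Y$ lands in it under one of the mechanisms, $\Ra$ degenerates to the point $\{Y\}$; this only makes the left side smaller or equal, and the same bound holds trivially, so the argument is unaffected — I would remark on this case explicitly to be safe.)

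Next I would take expectations over all the randomness. By Definition~\ref{def_IP}, $\tau(\M_j) = \E\bigl(L(S_{Z_j})\bigr) = \E\bigl(L(\Ra_j)\bigr)$ and $\tau(\M_1\oplus\M_2) = \E\bigl(L(\Ra_{12})\bigr) = \E\bigl(L(\Ra_1\cap\Ra_2)\bigr)$, where the expectations are jointly over $(\Q_{[1]},\Q_{[2]},Y)$ with $\Q_{[1]},\Q_{[2]}$ independent of $Y$. Taking $\E$ of the pointwise bound and using linearity gives
\begin{align}
\tau(\M_1\oplus\M_2) \;\geq\; \tau(\M_1) + \tau(\M_2) - 1, \nonumber
\end{align}
i.e.\ $1-\tau(\M_1\oplus\M_2) \leq (1-\tau(\M_1)) + (1-\tau(\M_2))$, which is the $k=2$ case. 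For general $k$, I would induct on $k$ using the recursive definition $\oplus_{j=1}^k \M_j = (\oplus_{j=1}^{k-1}\M_j)\oplus \M_k$: apply the $k=2$ bound to the pair $\bigl(\oplus_{j=1}^{k-1}\M_j,\, \M_k\bigr)$ and then the inductive hypothesis $1-\tau(\oplus_{j=1}^{k-1}\M_j) \leq \sum_{j=1}^{k-1}(1-\tau(\M_j))$, giving $1-\tau(\oplus_{j=1}^{k}\M_j) \leq \sum_{j=1}^{k}(1-\tau(\M_j))$. One should check that the ensemble of $k$ canonical mechanisms is again a canonical mechanism in the sense of Definition~\ref{def_mechanism} so that $\tau$ and the refinement-intersection identity apply at each step — this follows directly from Definition~\ref{def_ensemble}, since concatenating anchor vectors and sorting produces a valid anchor vector, and a union of acceptable ranges of the permitted forms is again of a permitted form (or can be handled as a fixed set).

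The main obstacle I anticipate is not the probabilistic estimate — which is genuinely just a one-line union bound on probabilities — but the careful bookkeeping needed to justify that $\Ra_{12} = \Ra_1 \cap \Ra_2$ as events up to $\P_Y$-null sets, and that the degenerate (acceptable-range / exact-value) cases in Definition~\ref{def_mechanism} do not break the identity $\tau(\M)=\E(L(S_Z))$ used here. In particular, when $Y$ lands in an acceptable range under $\M_1$ but not under $\M_2$, the ensemble's conditional support is the singleton $\{Y\}$ with $L=0$, and one needs the bound to survive; since $L(S_{Z_1})=0$ in that realization and $L(S_{Z_1\oplus Z_2})=0 \geq 0 = L(S_{Z_1}) + L(S_{Z_2}) - 1$ fails to be obviously tight but is still $\geq$ a nonpositive quantity, the inequality holds. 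I would isolate this as a short case analysis rather than let it clutter the main computation. Everything else is linearity of expectation and induction.
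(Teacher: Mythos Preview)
Your proof is correct, and it takes a genuinely more elementary route than the paper's. Both the paper and you reduce to $k=2$ and induct, but the core inequality is handled very differently. The paper first derives the closed form $\tau(\M)=\E\bigl(\sum_i \P_Y(\Ra^{(i)})^2\bigr)$ and then invokes a separately proved combinatorial lemma (Lemma~\ref{lemma1}) about two partitions $\{S_k\}$ and $\{R_j\}$ of a finite index set whose pairwise intersections are singletons; afterwards it treats the case $\C\neq\emptyset$ by a further comparison with the hypothetical $\C=\emptyset$ coverage. You instead work pointwise with the single inclusion--exclusion bound $\P_Y(S_{Z_1}\cap S_{Z_2})\geq \P_Y(S_{Z_1})+\P_Y(S_{Z_2})-1$ and take expectations. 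Since $S_{Z_1\oplus Z_2}=S_{Z_1}\cap S_{Z_2}$ holds in \emph{all} cases (including the degenerate ones where $Y$ lands in an acceptable range and the support collapses to a point), your argument handles $\C\neq\emptyset$ uniformly without a separate case split, which is cleaner than the paper's two-stage treatment. What the paper's approach buys is an explicit combinatorial statement (Lemma~\ref{lemma1}) that may be of independent interest; what your approach buys is brevity and transparency---the whole thing is a union bound plus linearity of expectation. Your anticipated ``obstacles'' are not real obstacles: the identity $S_{Z_1\oplus Z_2}=S_{Z_1}\cap S_{Z_2}$ is exact (not just up to null sets) once one checks the four cases $Y\in/\notin \C_{[1]}$, $Y\in/\notin \C_{[2]}$, and the acceptable-range case analysis you sketch goes through as written.
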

An interpretation of the above theorem is that the privacy leakage of any ensemble mechanism is no larger than the sum of each of them. 
It is worth noting that $\Q^{[j]}$'s may or may not be independent of each other, so communications between observers are allowed for this composition property to hold. 
In other words, this composition property holds even if the $k$ mechanisms are adaptively chosen.

\noindent\textbf{Preprocessing}.
Suppose that $g: Y \mapsto g(Y)$ is a measurable function on $\Y$. 
Let $\C_g = \{g(y): y \in \C\}$ be the acceptable range for $g(Y)$, which is carried over from $\C$.
Suppose that an interval mechanism is applied to $g(Y)$ instead of $Y$ itself, with
$
	\M: g(Y) \mapsto Z = [ \Q, \, I(\Q, g(Y)), \, Y \cdot \i_{g(Y) \in \C_g}].
$
This corresponds to the `pullback' privacy mechanism
\begin{align}
	\M_g: Y \mapsto Z_g = \{ g^{-1}(\Q), \, I(g^{-1}(\Q), Y), \, Y \cdot \i_{Y \in \C}\}.	\nonumber
\end{align}
Here, $g^{-1}(\Q)$ denotes the partition of $\Y$ induced by the partition on $g(\Y)$ using $\Q$.

\begin{theorem}[Robustness to Preprocessing]
\label{thm_transform}
	For any interval mechanism $\M$, it holds that $\tau(\M_g) \geq \tau(\M)$, where the equality holds if and only if $L(g^{-1}(\C_g)) = L(\C)$.
\end{theorem}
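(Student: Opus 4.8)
The plan is to express the privacy coverage $\tau(\M_g)$ in terms of the partition induced on $\Y$ and compare it term-by-term with $\tau(\M)$. By Definition~\ref{def_IP}, $\tau(\M_g) = \E(L(S_{Z_g}))$, where $S_{Z_g}$ is the conditional support of $Y$ given the pulled-back observation. First I would unpack what $S_{Z_g}$ is: when $Y \in \C$ (equivalently $g(Y)\in\C_g$, since $\C = g^{-1}(\C_g)$ by construction — actually this is exactly where the ``only if'' subtlety lives, see below), the exact value is revealed and the conditional support is a single point, contributing $L(\{y\})=0$. Otherwise, the conditional support is the cell $g^{-1}(\Ra^{(i)})$ of the induced partition containing $Y$, i.e. the preimage under $g$ of the interval $(\Q^{(i-1)},\Q^{(i)}]$ that $g(Y)$ falls into. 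So I would write
\begin{align}
\tau(\M_g) = \E\!\left[ L\bigl(g^{-1}(S'_{Z})\bigr)\, \i_{g(Y)\notin \C_g}\right],\nonumber
\end{align}
where $S'_Z = (\Q^{(I-1)},\Q^{(I)}]$ is the conditional support for the mechanism $\M$ acting on the variable $g(Y)$, and similarly $\tau(\M) = \E[L(S'_Z)\,\i_{g(Y)\notin\C_g}]$ with $L$ here the size under the law of $g(Y)$ — but note $L$ is defined throughout as $\P_Y(\cdot)$, so I must be careful to keep everything measured under $\P_Y$ via the identity $\P_Y(g^{-1}(B)) = \P_{g(Y)}(B)$.

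The key inequality is pointwise: for any Borel interval $\Ra \subseteq g(\Y)$,
\begin{align}
\P_Y\bigl(g^{-1}(\Ra)\bigr) \;=\; \P_{g(Y)}(\Ra),\nonumber
\end{align}
so in fact the non-degenerate interval contributions to $\tau(\M_g)$ and $\tau(\M)$ coincide. The gap between $\tau(\M_g)$ and $\tau(\M)$ therefore comes entirely from the degenerate (``acceptable range'') part: under $\M_g$ the event contributing zero is $\{Y\in\C\}$, while under $\M$ (viewed as acting on $g(Y)$ with acceptable range $\C_g$) the zero-contribution event is $\{g(Y)\in\C_g\} \supseteq \{Y\in\C\}$ in general, since $g(\C) \subseteq \C_g$ but $g^{-1}(\C_g)$ may be strictly larger than $\C$. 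On the complement $g^{-1}(\C_g)\setminus\C$, the mechanism $\M$ contributes $0$ whereas $\M_g$ contributes the positive cell size $L(g^{-1}(\Ra^{(I)}))$. Hence $\tau(\M_g) \ge \tau(\M)$, with equality iff $\P_Y\bigl(g^{-1}(\C_g)\setminus\C\bigr)=0$, i.e. iff $L(g^{-1}(\C_g)) = L(\C)$ (using $L=\P_Y$ and $\C\subseteq g^{-1}(\C_g)$). This matches the claimed condition exactly.

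The main obstacle I anticipate is handling the three permitted forms of $\C$ (empty, a fixed set, or a union of partition cells $(\Q^{(k-1)},\Q^{(k)}]$) uniformly, and being precise about the relationship between ``$\C$ as acceptable range for $Y$'' versus ``$\C_g = g(\C)$ as acceptable range for $g(Y)$.'' When $\C$ is a union of partition cells of $\Q$ it is cleanest, since then $g^{-1}(\C_g)$ is the corresponding union of cells of $g^{-1}(\Q)$ and the whole argument reduces to a bijection between cells; when $\C$ is a fixed set, I need $g^{-1}(\C_g) \supseteq \C$ (immediate) and must argue the strict-containment set is exactly where equality fails. A secondary point requiring care is measurability of $g^{-1}(\Q)$ as a partition and the existence of the joint density of $[Y,Z_g]$ so that Definition~\ref{def_IP} applies — but these follow from $g$ measurable and the standing assumption on $[Y,Z]$, so I would state them briefly and move on. I would also remark that the pullback mechanism genuinely satisfies interval privacy (so that $\tau(\M_g)$ is well-defined in the sense of Definition~\ref{def_IP}), which is immediate from Theorem~\ref{thm_validity} applied with anchor vector $g^{-1}(\Q)$.
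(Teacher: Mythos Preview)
Your approach is essentially the same as the paper's: observe that when the exact value is not revealed, the cell sizes satisfy $L(S_{Z_g}) = \P_Y(g^{-1}(\Ra^{(j)})) = \P_{g(Y)}(\Ra^{(j)}) = L(S_Z)$, and the only discrepancy arises because the zero-contribution event $\{Y\in\C\}$ for $\M_g$ is contained in the zero-contribution event $\{g(Y)\in\C_g\}=\{Y\in g^{-1}(\C_g)\}$ for $\M$, yielding $\tau(\M_g)\ge\tau(\M)$ with equality iff $L(g^{-1}(\C_g))=L(\C)$. One slip to fix: your displayed expression for $\tau(\M_g)$ has the indicator $\i_{g(Y)\notin\C_g}$, but as you yourself note two sentences later, the acceptable range for $\M_g$ is $\C$, so it should read $\i_{Y\notin\C}$; once corrected, the argument goes through exactly as in the paper.
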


The above result shows that if a $\tau$-interval private observation is made on a transformation of $Y$, namely $g(Y)$, the privacy coverage of the raw data $Y$ is not smaller than $\tau$, or equivalently, the leakage at the raw data domain is no larger than $1-\tau$.
Furthermore, the $\M_g$ here may be regarded as $\M_j$ in Theorem~\ref{thm_ensemble}, so Theorem~\ref{thm_ensemble} also holds for $k$ observers that may target transformations of $Y$ instead of $Y$ itself.
The inequality in Theorem~\ref{thm_transform} is strict when, e.g., $Y$ is standard Gaussian, $g(y)=y^2$, and $\C=(-\infty,0]$.

\vspace{0.1cm}
\noindent\textbf{Postprocessing}.
The next result shows that the privacy leakage is not increased by subsequent processing of $Z$. 
\begin{theorem}[Robustness to Post-processing] \label{thm_post}
	Suppose that $\M: Y \mapsto Z$ is an interval mechanism with $\tau$-interval privacy.
	Let $f: Z \mapsto W$ be an arbitrary deterministic or random mapping that defines a conditional distribution $W \mid Z$. Then $f \circ \M: Y \mapsto [Z, W]$ also meets $\tau$-interval privacy.
\end{theorem}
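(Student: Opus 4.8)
The plan is to exploit the Markov structure $Y \to Z \to W$ that the post-processing map $f$ induces, and to show that conditioning additionally on $W$ changes neither the conditional support of $Y$ nor the conditional density ratio, so that both the defining identity (\ref{eq_def}) and the privacy coverage are inherited verbatim from $\M$.

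First I would record the Markov property. Since $f$ defines a conditional law of $W$ given $Z$ alone --- whether $f$ is deterministic or is augmented by exogenous randomness $\xi$ independent of $(Y,Z)$ --- the joint law of $(Y,Z,W)$ disintegrates as $\P(dy,dz,dw) = \P_Y(dy)\,\P_{Z\mid Y}(dz\mid y)\,\P_{W\mid Z}(dw\mid z)$, i.e.\ $W$ is conditionally independent of $Y$ given $Z$. I will phrase the argument in terms of regular conditional distributions (equivalently Radon--Nikodym derivatives with respect to $\P_Y$) rather than Lebesgue densities, since $W$ may be discrete or otherwise lack a Lebesgue density; recall that by Implication~1 the interval-privacy condition (\ref{eq_def}) is equivalent to the statement that $\P_{Y\mid Z=z}$ equals, up to the normalizing constant $c_z$, the restriction of $\P_Y$ to $S_z$, cf.\ (\ref{eq15}).

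Next I would compute the posterior of $Y$ given $[Z,W]=[z,w]$. Using the disintegration above, the factor $\P_{W\mid Z}(dw\mid z)$ appears in both the numerator and the denominator of $\P_{Y\mid Z=z,W=w}(dy)$ and does not depend on $y$, so it cancels on the event where the denominator is positive, leaving $\P_{Y\mid Z=z,W=w} = \P_{Y\mid Z=z}$ for $\P_{Z,W}$-almost every $[z,w]$. Consequently the conditional support satisfies $S_{[z,w]} = S_z$ almost surely, and for any $y_1,y_2 \in S_{[z,w]} = S_z$ the posterior density ratio $p_{Y\mid Z,W}(y_1\mid z,w)/p_{Y\mid Z,W}(y_2\mid z,w)$ equals $p_{Y\mid Z}(y_1\mid z)/p_{Y\mid Z}(y_2\mid z)$, which by the assumed interval privacy of $\M$ equals $p_Y(y_1)/p_Y(y_2)$. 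Hence $f\circ\M$ satisfies Definition~\ref{def_IP}. The coverage bookkeeping is then immediate: $\tau(f\circ\M) = \E_{[Z,W]} L(S_{[Z,W]}) = \E_{[Z,W]} L(S_Z) = \E_Z L(S_Z) = \tau(\M) \ge \tau$, the third equality being the tower property after marginalizing out $W$.

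I expect the only real obstacle to be measure-theoretic hygiene around the ``almost surely'' qualifiers: justifying the cancellation of $\P_{W\mid Z}(dw\mid z)$ precisely on the positive-denominator event, checking that the exceptional $\P_{Z,W}$-null set of pairs $[z,w]$ does not affect the expectation defining $\tau$, and confirming that the chosen versions of the conditional supports $S_{[z,w]}$ and $S_z$ can be taken to coincide off a null set. These are standard disintegration arguments; nothing conceptual beyond the Markov cancellation is needed.
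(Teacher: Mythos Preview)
Your proposal is correct and uses essentially the same idea as the paper: exploit the Markov chain $Y\to Z\to W$ so that the factor $p(w\mid z)$ cancels from the posterior of $Y$, leaving $p_{Y\mid Z,W}(\cdot\mid z,w)=p_{Y\mid Z}(\cdot\mid z)$ and hence the same support $S_z$ and the same density ratio.

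One minor difference worth noting: the paper's proof is written for the \emph{canonical} mechanism of Definition~\ref{def_mechanism}, explicitly expanding $Z=[\Q,I(\Q,Y),Y\cdot\i_{Y\in\C}]$ and splitting into the cases $y\in\C$ and $y\notin\C$; your argument instead works directly from Definition~\ref{def_IP} via (\ref{eq15}), which is cleaner and covers any mechanism satisfying interval privacy (not just canonical ones). You also make the coverage step $\tau(f\circ\M)=\E_{[Z,W]}L(S_Z)=\E_Z L(S_Z)=\tau(\M)$ explicit, whereas the paper leaves this implicit. Both approaches arrive at the same place with the same Markov cancellation; yours is slightly more general and more self-contained.
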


The above result is conceivable because $Y \rightarrow Z \rightarrow W$ is a Markov chain, and thus adding $W$ does not reveal more about the range of $Y$. We use $[Z, W]$ instead of $W$ in defining $f \circ \M$ because it is a complete observation.
The amalgamation of composition property and robustness permits modular designs and analyses of interval mechanisms.

\subsection{Extension: Interval Mechanism of General Topology} \label{subsec_topology}

It is natural to extend the canonical interval mechanism in Subsection~\ref{subsec_canonical_mechanism} by considering a partition of $\R$ into $m$ general ranges, denoted by $\{\Ra^{(i)}\}_{i=1}^m$. We suppose each $\Ra^{(i)}$ is a Borel set to define probability on them properly. Also, to operate data collection in practice, we let such a partition be determined by a fixed-dimension random vector $\T \in \R^q$, and both $m,q$ be fixed positive integers. Formally, we introduce the following notion. We omit the acceptable range from now on for notational simplicity.

\begin{definition}[Extended Interval Mechanism]\label{def_extended_mechanism}
	Suppose that $\T $ is a $q$-dimensional random vector independent with $Y$. Let 
	\begin{align}		
	\Ra: t \mapsto \{\Ra_t^{(i)}\}_{i=1}^m \label{eq_Ra}
	\end{align}
	denote a map from each $t \in \R^q$ to a partition of $\R$. 
	Let $I: (\Ra,t,y) \mapsto i$ denote the indicator function defined by $y$ falling into $\Ra_t^{(i)}$, $i\in [1:m]$.
	A privacy mechanism, denoted by $\M: Y \mapsto Z$, maps $Y$ to  
	$
		Z \de [ \T, \, I(\Ra,\T,Y)] . 
	$
\end{definition}

A particular case is when $q=m-1$, $\T^{(i)}=\Q^{(i)}$, and $\Ra_{\T}^{(i)}=(\Q^{(i-1)}, \Q^{(i)}]$ for $i\in [1:q]$, which corresponds to Definition~\ref{def_mechanism}.
It can be verified that the extended mechanism still satisfies the interval privacy in Definition~\ref{def_IP} and all the properties in Subsection~\ref{subsec_canonical_properties}.
Next, we first explain why such extended mechanisms can be practically interesting. We then introduce the estimation of $\Fy$ and sufficient conditions to guarantee the distributional identifiability.

Consider the setting where we want to ensure a lower bound on each individual's privacy coverage, namely $L(S_z) \geq \tau$ for each collected $z$ for a given $\tau>0$. A canonical mechanism will violate the distributional identifiability. To see that, let us consider the left-most interval $(\Q^{(0)}, \Q^{(1)}]$, where $\Q^{(0)}$ is fixed (e.g., $-\infty$) and $\Q^{(1)}$ is random. To ensure  
\begin{align}
	L((\Q^{(0)}, \Q^{(1)}]) \geq \tau, \label{eq_low}
\end{align}
the smallest anchor point $\Q^{(1)}$ cannot take values in $(-\infty, \Fy^{-1}(\tau)]$. Then, the distribution of $Y$ is not identifiable on the left tail.  
To address the issue, we consider the following example of Definition~\ref{def_extended_mechanism} that is not a canonical mechanism.

\begin{example} \label{eg_ring_topology}
Without loss of generality, suppose that $\T^{(1)} < \cdots < \T^{(q)}$ almost surely. Let $\Ra_{\T}^{(1)}=(-\infty, \T^{(1)}] \cup (\T^{(q)}, \infty)$, and $\Ra_{\T}^{(i)}=(\T^{(i-1)}, \T^{(i)}]$ for $i=2,\ldots,q$. In other words, we concatenate the left-most and right-most canonical intervals into one range. This is naturally represented by a `ring' topology as illustrated in Fig.~\ref{fig_topology_example}(a), where $\pm \infty$ collapse into one anchor. In that figure, we used the notation of $\Q^{(i)}$ (instead of $\T^{(i)}$) for an easier comparison with Fig.~\ref{fig_canonical}(a). 
We also show a simple interface example in Fig.~\ref{fig_topology_example}(b), which is the counterpart of Fig.~\ref{fig_canonical}(b). In this example, if $\Q$ is designed so that $\P_Y(Y \in (\Q^{(1)}, Q^{(2)}]) \in [\tau, 1-\tau]$ (which implicitly requires $\tau \leq 0.5$), it is intuitively possible to identify $\Fy$. Next, we provide a formal method to guarantee the distributional identifiability. Our result is a nontrivial generalization of the existing theory for the Cases I\&II interval data~\cite{groeneboom1991nonparametric}.
\end{example}

\begin{figure}[tb]
\begin{center}
\centerline{\includegraphics[width=0.7\columnwidth]{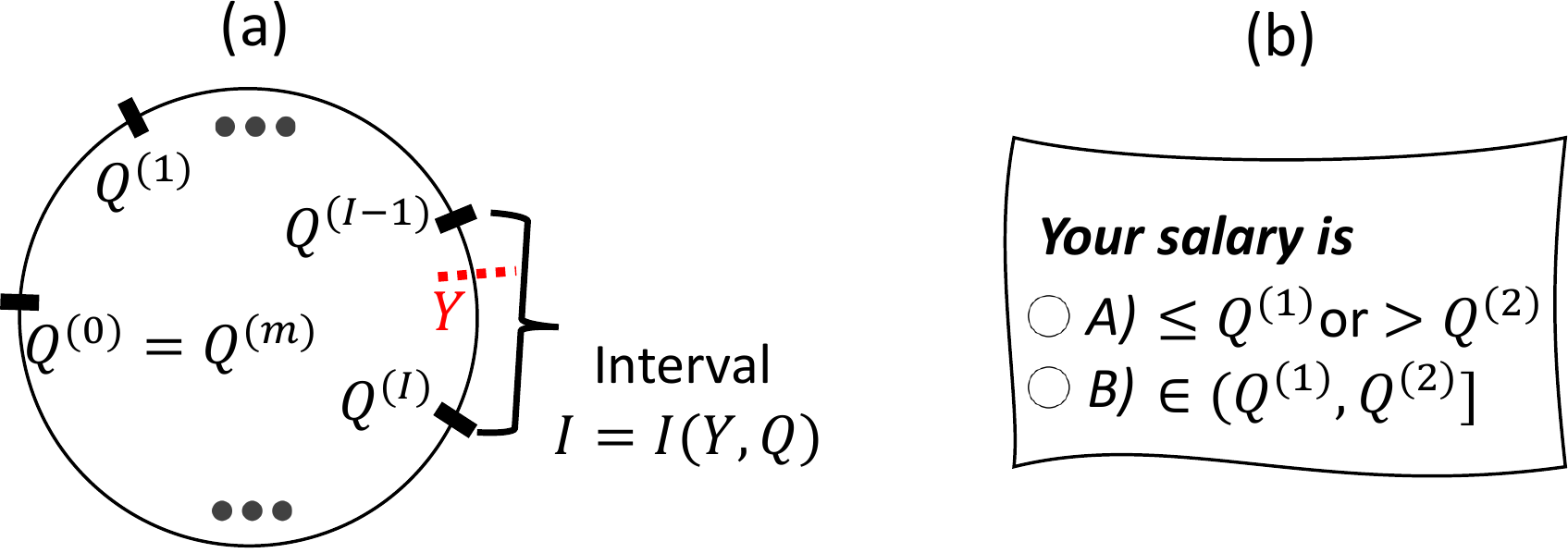}}
\vskip -0.1in
\caption{An extended interval mechanism and its example interface.}
\label{fig_topology_example}
\end{center}
\vskip -0.1in
\end{figure}

\noindent\textbf{Nonparametric maximum likelihood estimator (NPMLE)}:
We let $\B^{(i)}(\T, Y)\de \i_{Y \in \Ra_{\T}^{(i)}}$, or $\B^{(i)}$ for brevity, for $i\in [1:m]$.
Let $Z_j = [ \T_j, \, I(\Ra,\T_j,Y_j)]$, or equivalently, $Z_j = [ \T_j, \, \B_j^{(1)}, \ldots, \B_j^{(m)}]$, $j\in [1:n]$ denote the observed data, where $n$ is the sample size.
For any right-continuous distribution function $F$ on $\Y$, let $F(r)$ denotes the corresponding probability of a Borel set $r$.
To estimate the underlying distribution of $Y$ without parametric assumptions\footnote{If $\Fy$ is parameterized by a fixed-dimensional parameter, standard maximum likelihood estimation and asymptotics can be readily applied~\cite[Ch.5]{van2000asymptotic}.}, we consider the log-likelihood functional 
\begin{align}
	\psi: F \mapsto 
	&\int_{\R^q \times \Y} \sum_{i=1}^m \B^{(i)} \log F(\Ra^{(i)}) \, d \P_n(t,y) \nonumber \\
	&\de\frac{1}{n}\sum_{j=1}^n \sum_{i=1}^m \B_j^{(i)} \log F(\Ra_j^{(i)}) , \label{eq_loglik}
\end{align}
where $\P_n(\cdot,\cdot)$ denotes the empirical probability measure from $[\T_j,Y_j]$, $j\in [1:n]$.
We define NPMLE as a right-continuous distribution function $\hat{F}_n$ that maximizes $\psi(F)$. An example of the related quantities are visualized in Fig.~\ref{fig_NPMLE}. Note that the objective in (\ref{eq_loglik}) can only be defined up to the values of $\hat{F}_n$ at the anchor points from the intersections of observed ranges (e.g., $y_1,y_2,y_3$ in Fig.~\ref{fig_NPMLE}). As such, we consider the NPMLE as a piecewise function with only jumps at anchor points. 
Next, we provide conditions for interval mechanisms to preserve distributional information, namely distributional identifiability.

\begin{figure}[tb]
\begin{center}
\centerline{\includegraphics[width=0.5\columnwidth]{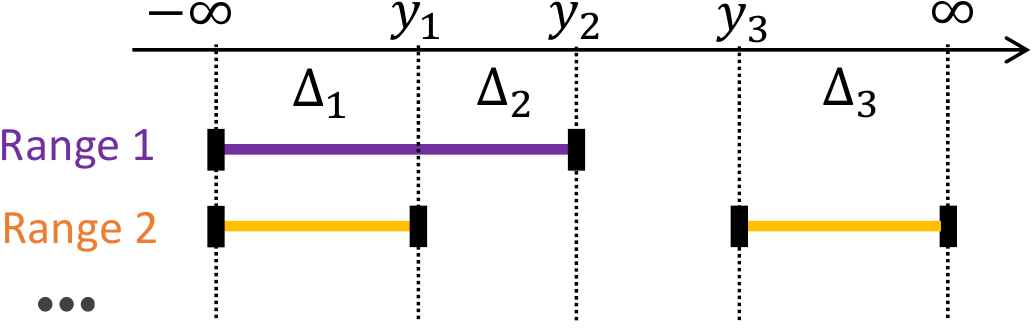}}
\vskip -0.1in
\caption{An example of the collected range data, in which case the log-likelihood function is $F \mapsto \log F(y_2) + \log (F(y_1)+1-F(y_3)) + \cdots$, or represented by $[\Delta_1,\Delta_2,\Delta_3] \mapsto \log (\Delta_1+\Delta_2) + \log (\Delta_1+\Delta_3) + \cdots$.}
\label{fig_NPMLE}
\end{center}
\vskip -0.1in
\end{figure}

\vspace{0.1cm}
\noindent \textbf{Resolvability condition}: $T$ has a density with respect to the Lebesgue measure. For each $y$ in the closure of $\Y$, there is an open neighborhood $N(y)$ such that for all $y_a,y_b\in N(y)$, the interval $(y_a,y_b]$ satisfies: there exist $t, t'$ in the essential support of $T$ and ranges $\ra^{(k)} \in \Ra_t, \ra^{(\ell)} \in \Ra_{t'}$ ($1 \leq k, \ell \leq m$) such that $(y_a,y_b] \cap \ra^{(k)}=\emptyset$ and $(y_a,y_b] \cup \ra^{(k)}=\ra^{(\ell)}$.

The resolvability is determined by both the support of $T$ and topology $\Ra$ introduced in (\ref{eq_Ra}). Intuitively speaking, a mechanism is resolvable if any small interval of $\Y$ can be the difference between two feasible ranges. It will be used in our proof in the following way. We will first prove that $\hat{F}_n(\ra) \approx \Fy(\ra)$ for each feasible range $\ra$. This, together with resolvability, gives $\hat{F}_n \approx \Fy$ on any small interval $(y_a,y_b]$, which further implies $\hat{F}_n \approx \Fy$ globally. 
For example, it can be verified that any canonical mechanism (Definition~\ref{def_mechanism}) is resolvable if its $\Q$ has a positive density wherever $0<\Fy(\Q^{(1)})< \cdots <\Fy(\Q^{(m-1)})<1$. On the other hand, a canonical mechanism satisfying (\ref{eq_low}) is not resolvable, shown in Fig.~\ref{fig_topology_privacy}(a). Also, a ring design in Example~\ref{eg_ring_topology} is resolvable if $\Q^{(1)}$ has a positive density on $\R$ and $\Q^{(2)}$ conditional on $\Q^{(1)}$ does not degenerate to a point, shown in Fig.~\ref{fig_topology_privacy}(b).

We will also need the following condition. Let $G \circ F$ denote the function that maps $y$ to $G(F(y))$. Recall the $R,T,q$ in Definition~\ref{def_extended_mechanism}. As before, with a slight abuse of notation, we use $F(r)$ and $F(y)$ to denote the probability of a Borel set $r$ and the CDF at a point $y$, respectively.

\vspace{0.1cm}
\noindent \textbf{Monotonicity condition}: For any CDF $F$, the map $R$ satisfies: a) for each $j \in [1:q]$ and $i \in [1:m]$, $F(\Ra_t^{(i)})$ is either non-decreasing or non-increasing in $t^{(j)}$ with $t^{(j')}$ ($j'\neq j$) fixed; b) there exist functions $G^{(1)},\ldots,G^{(q)}$ that are nondecreasing, continuous, and bounded on $[0,1]$ such that for each $j \in [1:q]$, $i \in [1:m]$, and $t_a, t_b \in \R^q$ differing only in the $j$-th entry, we have $|F(\Ra_{t_b}^{(i)}) - F(\Ra_{t_a}^{(i)})| \leq |G^{(j)} \circ F(t_b^{(j)}) - G^{(j)} \circ F(t_a^{(j)})|$.

Intuitively speaking, condition (a) states that the probability mass on each range is monotone in each entry of $t$, and (b) means that the sensitivity of those probabilities can be controlled in terms of $t$. It can be verified that the above condition holds for all canonical mechanisms and the example in Fig.~\ref{fig_topology_example}, with $G$ being the identity map.

\begin{theorem} \label{thm_NPMLE}
	Assume that an extended interval mechanism satisfies the above Resolvability and Monotonicity conditions, and $\Fy$ is continuous.
	Then, $\sup_{y\in\Y}|\hat{F}_n(y) - \Fy(y)| \rightarrow 0$ almost surely as $n \rightarrow \infty$.
\end{theorem}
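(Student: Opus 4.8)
\textbf{Proof proposal for Theorem~\ref{thm_NPMLE}.}

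\medskip

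The plan is to follow the classical strategy for consistency of the nonparametric maximum likelihood estimator under interval censoring (as in~\cite{groeneboom1992information,groeneboom1991nonparametric}), adapted to the general topology $\Ra$ and to the two structural hypotheses (Resolvability and Monotonicity). The argument proceeds in four stages: (i) a compactness / Helly-selection step to extract subsequential limits of $\hat{F}_n$; (ii) an identification step showing that any such limit $F_*$ agrees with $\Fy$ on all feasible ranges; (iii) a ``resolving'' step that upgrades agreement on feasible ranges to agreement on arbitrarily small intervals, hence pointwise; and (iv) promotion of pointwise convergence to uniform convergence via continuity and monotonicity of the limit.

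\medskip

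First I would set up the variational characterization of the NPMLE. Since the objective $\psi(F)$ in (\ref{eq_loglik}) depends on $F$ only through the finitely many values $F(\Ra_j^{(i)})$, and these in turn are linear combinations of the jump sizes $\Delta_1,\ldots,\Delta_N$ of $F$ at the (finitely many) intersection anchor points, $\hat{F}_n$ is the maximizer of a concave function over a simplex; standard Kuhn--Tucker / self-consistency conditions then give, for each observed range $r$ that carries mass, an equality relating $\hat{F}_n(r)$ to the empirical frequency with which $Y$ was reported to lie in a range containing $r$. Writing $\P_n \to \P$ (Glivenko--Cantelli on the finite-dimensional class of events $\{\B^{(i)}=1\}$ indexed by the fixed-dimensional $\T$, which is a VC-type class because $m,q$ are fixed), and using a Helly/diagonal argument, along any subsequence we may pass to a limit distribution function $F_*$ with $\hat{F}_{n_k}(y) \to F_*(y)$ at every continuity point of $F_*$. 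The key inequality is that the population log-likelihood $F \mapsto \E\sum_i \B^{(i)} \log F(\Ra^{(i)})$ is maximized (uniquely, up to the events' probabilities) at $F=\Fy$ by the information inequality (nonnegativity of a Kullback--Leibler divergence between the induced laws of $I(\Ra,\T,Y)$ given $\T$). Combining the self-consistency limit with this information inequality forces $F_*(\ra)=\Fy(\ra)$ for every range $\ra$ that appears with positive probability, i.e.\ on the whole essential support of the mechanism.

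\medskip

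Next I would invoke the Resolvability condition. Fix $y$ in the closure of $\Y$ and the neighborhood $N(y)$ it supplies. For $y_a,y_b \in N(y)$ there are feasible ranges $\ra^{(k)}$, $\ra^{(\ell)}$ with $(y_a,y_b] \cap \ra^{(k)}=\emptyset$ and $(y_a,y_b] \cup \ra^{(k)}=\ra^{(\ell)}$; hence $F((y_a,y_b]) = F(\ra^{(\ell)}) - F(\ra^{(k)})$ for \emph{any} distribution function $F$, in particular for both $F_*$ and $\Fy$. Since $F_*$ and $\Fy$ already agree on feasible ranges, they agree on $(y_a,y_b]$, and therefore $F_*(y_b)-F_*(y_a) = \Fy(y_b)-\Fy(y_a)$ for all $y_a < y_b$ in $N(y)$. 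Covering $\Y$ by such neighborhoods and chaining, the increments of $F_* - \Fy$ vanish on all of $\Y$, so $F_* - \Fy$ is constant; matching at $\pm\infty$ (using that both are proper CDFs on $\R$, with the Monotonicity/topology ensuring the extreme ranges pin down the tails) gives $F_* = \Fy$. Because every subsequential limit equals the same continuous function $\Fy$, the full sequence $\hat{F}_n$ converges to $\Fy$ pointwise; pointwise convergence of monotone functions to a continuous limit is automatically uniform (a Pólya/Dini-type argument), giving the claimed $\sup_{y\in\Y}|\hat{F}_n(y)-\Fy(y)| \to 0$ almost surely.

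\medskip

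The role of the Monotonicity condition, and where I expect the real work to be, is in making stage (ii) rigorous: one must (a) show the NPMLE's jump set can be restricted to the finite intersection-anchor set so that the optimization is genuinely finite-dimensional, and (b) control the ``sensitivity'' term $|F(\Ra_{t_b}^{(i)}) - F(\Ra_{t_a}^{(i)})|$ uniformly as $t$ ranges over the essential support of $\T$, so that the empirical-to-population passage in the self-consistency equations is uniform rather than merely pointwise in $t$ --- this is exactly what the bound by $|G^{(j)}\circ F(t_b^{(j)}) - G^{(j)}\circ F(t_a^{(j)})|$ buys us (equicontinuity of the family $t \mapsto F(\Ra_t^{(i)})$, hence a compact, well-behaved class of functions to which Glivenko--Cantelli and Helly apply). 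The main obstacle is thus not any single inequality but the bookkeeping that ties together (i) the finite-dimensional reduction of the NPMLE, (ii) the uniform law of large numbers over the $\T$-indexed event class, and (iii) the information inequality in the general-topology setting where ranges need not be intervals --- essentially re-deriving Groeneboom--Wellner-type consistency at the level of the abstract partition map $\Ra$. Resolvability then does the geometric ``stitching'' from ranges back to points, which is the conceptually new ingredient relative to the Case~I\&II theory.
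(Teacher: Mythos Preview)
Your four-stage architecture (Helly selection, identification on feasible ranges, Resolvability to pass from ranges to small intervals, P\'olya for uniform convergence) matches the paper's. Stages (iii) and (iv) are essentially the paper's argument verbatim, including the finite-cover chaining.

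The genuine gap is in stage (ii). You propose an M-estimation route: the population criterion $F \mapsto \E\sum_i \B^{(i)} \log F(\Ra_\T^{(i)})$ is maximized at $\Fy$ by an information inequality, so if the Helly limit $F_*$ were a population maximizer you would be done. But you never establish that $F_*$ \emph{is} a population maximizer. From $\psi_n(\hat F_n)\geq\psi_n(\Fy)$ you would need something like $\limsup_n \psi_n(\hat F_n)\leq \psi(F_*)$ and $\psi_n(\Fy)\to\psi(\Fy)$, and both are obstructed by the unboundedness of $\log$ near zero: nothing prevents $\hat F_n(\Ra_t^{(i)})$ (or $\Fy(\Ra_t^{(i)})$) from being arbitrarily small on a non-negligible set of $t$'s. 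Your appeal to a VC/Glivenko--Cantelli argument on the event class $\{\B^{(i)}=1\}$ gives you uniform convergence of empirical \emph{probabilities}, but not of the log-likelihood \emph{functional} uniformly over CDFs $F$, which is what the M-estimation route requires.

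The paper sidesteps the log-likelihood entirely. It uses only the first-order optimality condition: differentiating $\psi$ at $\hat F_n$ in the direction $(1-\v)\hat F_n+\v\Fy$ and sending $\v\downarrow 0$ yields the ratio inequality
\[
\int \sum_{i=1}^m \B^{(i)}\,\frac{\Fy(\Ra_t^{(i)})}{\hat F_n(\Ra_t^{(i)})}\, d\P_n(t,y)\ \leq\ 1.
\]
One then restricts to $A_\v=\{t:\Fy(\Ra_t^{(i)})\geq\v\}$ to keep the integrand bounded, passes to the limit along the Helly subsequence using a grid approximation in $t$ (this is where Monotonicity is actually used, to control $|1/\hat F_n(\Ra_t^{(i)})-1/\hat F_n(\Ra_{\tilde t}^{(i)})|$ on each cell of the grid via the $G^{(j)}\circ F$ bound), and finally applies Cauchy--Schwarz to the limiting integral to get
\[
\int \sum_i \frac{\Fy^2(\Ra_t^{(i)})}{F_*(\Ra_t^{(i)})}\, d\P(t)\ \geq\ \int \frac{\bigl(\sum_i \Fy(\Ra_t^{(i)})\bigr)^2}{\sum_i F_*(\Ra_t^{(i)})}\, d\P(t)\ =\ 1,
\]
forcing equality and hence $F_*(\Ra_t^{(i)})=\Fy(\Ra_t^{(i)})$ for all feasible $(t,i)$. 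So your intuition that Monotonicity buys ``equicontinuity for the empirical-to-population passage'' is right in spirit, but the object on which it acts is the ratio $\Fy/\hat F_n$, not the log-likelihood, and the mechanism is a direct grid approximation rather than a VC-class uniform LLN. If you want to repair your stage (ii), replace the information-inequality paragraph with this directional-derivative/Cauchy argument; the rest of your outline then goes through.
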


\begin{figure}[tb]
\begin{center}
\centerline{\includegraphics[width=0.8\columnwidth]{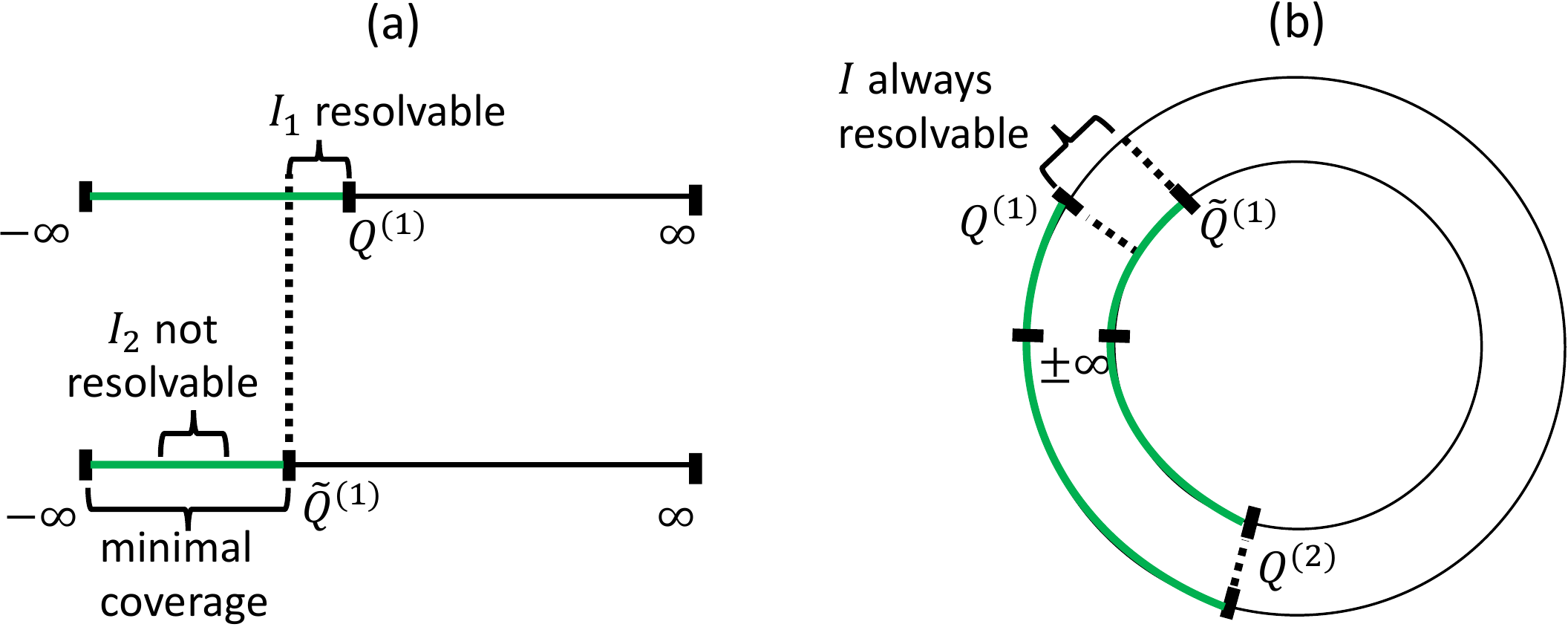}}
\vskip -0.1in
\caption{Interval mechanisms of different topologies, in which resolvability is (a) not met, and (b) met.}
\label{fig_topology_privacy}
\end{center}
\vskip -0.1in
\end{figure}

\subsection{Extension: Individual-Level Privacy Enhancement} \label{subsec_individual_privacy}

In Subsection~\ref{subsec_topology}, we considered the problem to ensure a lower bound on each individual's privacy coverage. Specifically, for each individual $z$ collected, we want to ensure that 
\begin{align}
L(S_z) \geq \tau, \label{eq_210}
\end{align}
where $S_z$ denotes the corresponding range.
Note that this requirement is for each individual, much stronger than lower-bounding population coverage. 
As shown in Example~\ref{eg_ring_topology}, a general approach is to design an extended interval mechanism such that the coverage of each feasible range is lower bounded, namely
$
	L(\Ra_T^{(i)}) \geq \tau \textrm{ almost surely, } \, \forall i \in [1:m] .
$
A limitation of this approach is that it requires $\tau \leq 1/m$.
What if the privacy system or an individual requires a large $\tau$, say $0.6$?
We propose an alternative approach below.

The key idea of the alternative approach is only to collect ranges that satisfy (\ref{eq_210}). Depending on practical needs, it can be implemented and interpreted in two ways. \\
\indent \textit{Way I}: We allow an individual to choose `Not wish to answer' as shown in Fig.~\ref{fig_example_age}(b), so $\tau$ is a representation of the (possibly unknown) underlying privacy sensitivity. \\
\indent \textit{Way II}: We let the system pick up the ranges satisfying (\ref{eq_210}), where $L(\cdot)$ can be approximated using an estimate of $\Fy$, and $\tau$ represents a known system-specific privacy budget. 

Nevertheless, a skeptical individual may worry about information leakage from not collecting his/her data, especially when individuals are not de-identified (e.g., tracked by static IPs). 
This motivates the following interval mechanism. Let $\tau,\rho \in [0,1]$ denote two constants, and $\rand \sim \Bern(\rho)$ denote a Bernoulli random variable with $\P(\rand=1)=\rho$.

\begin{definition}[Selective Mechanism]\label{def_selective_mechanism}
	With $Z$ in Definitions~\ref{def_IP} or~\ref{def_extended_mechanism}, a $(\tau,\rho)$-selective mechanism collects $Z$ if $L(S_Z) \geq \tau$ and $\rand =1$ simultaneously hold, and `null' otherwise, where $\rand \sim \Bern(\rho)$ is independently generated.
\end{definition}

Here, the term `null' indicates that the system does not collect the range $Y$ belongs to, and it discards all the generated ranges for that individual. 
Also, the observed data $Z$ implicitly implies $L(S_Z) \geq \tau$. 
It can be verified that a selective mechanism satisfies interval privacy.
We say that a selective mechanism has an `\textit{ignorability}' property if the probability of $\rand=0$ conditional on collecting `null' is at least $0.5$. Intuitively, not collecting data is likely due to an independently generated cutoff.

\begin{proposition} \label{prop_ignorability}
	A $(\tau,\rho)$-selective mechanism satisfies ignorability if $\rho \leq \P(L(S_Z) \geq \tau)$.
\end{proposition}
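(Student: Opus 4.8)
The plan is to compute the two relevant events explicitly and apply Bayes' rule. Write $C$ for the event that the mechanism outputs \texttt{null}, i.e. $C = \{L(S_Z) < \tau\} \cup \{\rand = 0\}$, and recall that $\rand\sim\Bern(\rho)$ is generated independently of everything else (in particular independently of $Z$, hence of $L(S_Z)$). Set $p \de \P(L(S_Z)\geq\tau)$, so $\P(L(S_Z)<\tau) = 1-p$. The quantity to lower-bound is $\P(\rand = 0 \mid C)$, and ignorability asks that this be at least $1/2$.

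First I would expand the numerator and denominator. By independence of $\rand$ and $L(S_Z)$,
\begin{align}
\P(C) &= \P(\rand=0) + \P(\rand=1)\,\P(L(S_Z)<\tau) = (1-\rho) + \rho(1-p), \nonumber\\
\P(\rand=0,\ C) &= \P(\rand=0) = 1-\rho, \nonumber
\end{align}
since $\{\rand=0\}\subseteq C$. Therefore
\begin{align}
\P(\rand=0\mid C) = \frac{1-\rho}{(1-\rho) + \rho(1-p)}. \nonumber
\end{align}
Next I would observe that this ratio is $\geq 1/2$ exactly when $1-\rho \geq \rho(1-p)$, equivalently $1 \geq \rho(1-p) + \rho = \rho(2-p)$, i.e. $\rho \leq 1/(2-p)$. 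It remains to check that the hypothesis $\rho\leq p$ implies $\rho \leq 1/(2-p)$: since $p\in[0,1]$ we have $p(2-p) = 1-(1-p)^2 \leq 1$, hence $p \leq 1/(2-p)$, and so $\rho\leq p \leq 1/(2-p)$ gives the claim. (If $p=0$ both the numerator and denominator are $1-\rho$ and the ratio is $1$, so the degenerate case is fine too.)

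There is really no serious obstacle here — the only point needing a word of care is justifying the independence used in splitting $\P(C)$ and in $\P(\rand=0,C)=\P(\rand=0)$, which follows from Definition~\ref{def_selective_mechanism} stipulating that $\rand$ is generated independently. If one wants to be fully rigorous about whether $S_Z$ (and hence $L(S_Z)$) is a well-defined measurable function of $Z$, one can appeal to the standing assumption that the joint law of $[Y,Z]$ has a density, so $S_Z$ is determined by $Z$ and $L(S_Z)$ is $\sigma(Z)$-measurable; then independence of $\rand$ from $Z$ delivers independence from $L(S_Z)$. The rest is the elementary inequality $p(2-p)\leq 1$.
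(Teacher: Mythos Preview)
Your proof is correct and follows essentially the same Bayes'-rule approach as the paper. The only difference is that you compute $\P(\rand=1,\ L(S_Z)<\tau)=\rho(1-p)$ exactly via independence and then need the auxiliary inequality $p\leq 1/(2-p)$, whereas the paper simply bounds $\P(\rand=1,\ L(S_Z)<\tau)\leq \P(L(S_Z)<\tau)=1-p$ and then compares $1-\rho$ with $1-p$ directly, which is a hair shorter.
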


Intuitively, the smaller probability of meeting individual-level $\tau$-coverage, the smaller $\rho$ (less selection) needed for ignorability. 
Since $\P(L(S_Z) \geq \tau)$ is interpreted as the frequency of $L(S_Z) \geq \tau$, in line with the \textit{Way I} or \textit{II}, the collection system may estimate it using the response rate from historical collections, or calculate it from an estimated $\Fy$. 

Regarding the distributional identifiability, Theorem~\ref{thm_NPMLE} no longer applies because the collected ranges are selectively biased towards large coverages (at least $\tau$), causing dependence of the underlying data $Y$ and anchor $T$ (in Definition~\ref{def_extended_mechanism}). We will show that one can still guarantee the distributional identifiability for selective mechanisms under an adaption of earlier results and an additional condition. Moreover, the above discussions assume a fixed $\tau$. To accommodate individuals' heterogeneous privacy sensitivities, one may also consider a random $\tau$. Details on these are deferred to the supplement. 

Moreover, we previously considered one-time collection from each individual. In practice, when individuals have different but unknown privacy sensitivities, we consider the following extension of \textit{Way I}. If an individual chooses `Not wish to answer,' the system adaptively proceeds with a further question to the same individual conditional on the previous one. A general mechanism was explained in Remark~\ref{remark_validity}. We exemplify the idea below, also illustrated in Fig.~\ref{fig_example_age}(c). We will provide a real-data experiment in Subsection~\ref{subsec_data1}. 

\begin{example} \label{eg_progression}
	Suppose that $Y \in [\underline{U}^{(0)}, \bar{U}^{(0)}]$. Let $\mathcal{G}_{\underline{u},\bar{u}}$ denote a distribution that is determined by $\underline{u}$ and $\bar{u}$, and supported on $[\underline{u},\bar{u}]$.
	At each round $h\geq 1$, the system\\
	1) generates $U^{(h)} \sim \mathcal{G}_{\underline{U}^{(h-1)}, \bar{U}^{(h-1)}}$, $Z^{(h)}=[U^{(h)}, \i_{Y \leq U^{(h)}}]$;\\
	2) lets $\bar{U}^{(h)}=U^{(h)}$ if $Y \leq U^{(h)}$, and $\underline{U}^{(h)}=U^{(h)}$ otherwise;\\
	3) collects $\cup_{i\leq h} Z^{(h)}$ and proceeds to the next round only if $L(S_{Z^{(h)}}) \geq \tau$, and $\cup_{i\leq h} Z^{(h-1)}$ (`null' for $h=1$) otherwise.
\end{example}

\subsection{Regression with Private Responses} \label{sec_reg}

This section proposes a general approach to fit supervised regression using interval-private responses. 

Suppose that we are interested in estimating the regression function with $Y$ being the response variable and $X \in \R^p$ 
the features.
Suppose that $Y$ has been already privatized, and we only access an interval-private observation of $Y$ while the data $X$ is visible.
This scenario occurs, for example, when Alice (who holds $Y$) sends her privatized data to Bob (who holds $X$) to seek Assisted Learning~\cite{DingAssist}. The scenario also occurs when $Y$ has to be private while $X$ is already publicly available.

Following the standard setting of regression analysis, we postulate the data generating model 
$
  Y = f^*(X) + \v , 
$
where $f^*$ is the underlying regression function to be estimated, and $\v \sim \Fv$ is an additive random noise.
We suppose that $X$ is a random variable independent of $\v$, and that $\Fv$ is a known distribution, say Gaussian or Logistic distributions. 
We will discuss unknown $\Fv$ in Remark~\ref{remark_computation}.
Suppose that a Case-I interval mechanism is used and data are in the form of 
\begin{align}	
D_i=[u_i, \delta_i, x_i]^\T,  
\textrm{ where } \delta_i \de \i_{y_i \leq u_i}, \, \ i\in [1:n].	
\label{eq_D} 	
\end{align}
Since $f^*$ is unknown, a general approach is to represent $f^*$ with linear functions 
$
  Y = X^T \beta + \v \nonumber  
$,
where $\beta \in \R^p$ is treated as an unknown parameter and $\v \sim \Fv$.
The above model includes parametric regression and nonparametric regression based on series expansion (e.g., with polynomial, spline, or wavelet bases).
To estimate $\beta$ from $D_i$'s, a classical way is to maximize the likelihood, e.g., 
$\beta \mapsto \prod_{i=1}^n [ \Fv(u_i-x_i^T \beta )^{\delta_i} \{1-\Fv(u_i-x_i^T \beta)\}^{1-\delta_i}]$ for Case-I intervals.

Though the likelihood approach is principled for estimating a parametric regression, its implementation depends on the specific parametric form of the regression function $f$, and its extension to nonparametric function classes is challenging.  
In supervised learning, data analysts typically use a nonparametric approach, such as various types of tree ensembles and neural networks.
However, the existing regression techniques for point-valued responses ($Y$) cannot handle interval-valued responses. 
As such, we are motivated to `transform' the interval-data format into the classical point-data form to enable the direct use of existing regression methods and software.

Our main idea is to transform the data format from intervals to point values so that many existing regression methods can be readily applied.
We propose to use 
\begin{align}
  \tilde{Y} = \E (Y \mid D, X)   \label{eq17}
\end{align}
as a surrogate to $Y$.
This is motivated from the observation that $\tilde{Y}$ is an unbiased estimator of $f(x)$ for a given $X=x$, namely
$ \E (\tilde{Y} \mid x) = \E(Y \mid x)=f(x)$. 
Suppose that we choose a loss function $\ell: \Y \times \Y \rightarrow \R$ such as $L_1$ or $L_2$ loss. 
For notational convenience, we let $\E_n$ denote the empirical expectation, e.g., $
\E_n \ell(\tilde{Y}, f(X)) = n^{-1}\sum_{i=1}^n \ell(\tilde{Y}_i, f(X_i)).
$
Based on the above arguments in (\ref{eq17}), it is desirable to solve the following optimization problem
\begin{align}
  &\min_{f \in \F} \ \E_n \ell\bigl( \tilde{Y}, f(X) \bigr) \label{eq20} \\
  &\textrm{where } \tilde{Y} = \E (Y \mid D, X)
  = f(X) + \E(\v \mid D, X). \label{eq21} 
\end{align}

The following result justifies the validity of using $\tilde{Y}$ as  a surrogate of $Y$ to estimate $f$, if the $\tilde{Y}$ in (\ref{eq21}) were statistics. 
We define the norm of $f$ to be $\norm{f}_{\P_X} \de \sqrt{\E(f(X)^2)}$.
Suppose that the underlying regression function $f^*$ belongs to $\F$, a parametric or nonparametric function class with bounded $L_2(\P_X)$-norms.
The following result shows that the optimal $f$ obtained by minimizing (\ref{eq20}) with a squared loss $\ell$ is asymptotically close to the underlying truth $f^*$. 

\begin{theorem}[Regression Estimation] \label{prop_reg}
  Suppose that 
\begin{align}
  \sup_{f\in \F}\bigl| \E_n(\tilde{Y}-f(X))^2 - \E(\tilde{Y}-f(X))^2 \bigr| \limp 0 \nonumber
\end{align}
(convergence in probability) as $n \rightarrow \infty$. 
Then, any sequence $\hat{f}_n$ that maximizes $\E_n(\tilde{Y}-f(X))^2$ converges in probability to $f^*$ in the sense that
$
  \norm{\hat{f}_n - f^*}_{\P_X} \limp 0
$
as $n \rightarrow \infty$.
\end{theorem}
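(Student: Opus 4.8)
The plan is to run the standard consistency argument for an $M$-estimator, using the fact that with squared loss the excess population risk is \emph{exactly} the squared $L_2(\P_X)$-distance we wish to drive to zero. Write $M_n(f)\de\E_n(\tilde Y-f(X))^2$ and $M(f)\de\E(\tilde Y-f(X))^2$ for the empirical and population criteria, so that $\hat f_n$ minimizes $M_n$ over $\F$ (the criterion of (\ref{eq20})), $f^*\in\F$ by assumption, and the hypothesis reads $\sup_{f\in\F}|M_n(f)-M(f)|\limp0$.

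The first step is to identify the population minimizer and its separation from competitors simultaneously. Put $\eta\de\E(\v\mid D,X)$, so $\tilde Y=f^*(X)+\eta$ by (\ref{eq21}). Since $\v$ is mean-zero and independent of $X$, the tower property gives $\E(\eta\mid X)=\E(\v\mid X)=0$, equivalently $\E(\tilde Y\mid X)=f^*(X)$, as already observed after (\ref{eq17}). Expanding the square and taking iterated expectations,
\begin{align}
M(f)=\E\bigl((f^*(X)-f(X))^2\bigr)+2\E\bigl(\eta(f^*(X)-f(X))\bigr)+\E(\eta^2)=\norm{f-f^*}_{\P_X}^2+\E(\eta^2),\nonumber
\end{align}
where the cross term vanishes because $\E\bigl(\eta(f^*(X)-f(X))\bigr)=\E\bigl(\E(\eta\mid X)(f^*(X)-f(X))\bigr)=0$. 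Hence $M(f)-M(f^*)=\norm{f-f^*}_{\P_X}^2\ge0$, so $f^*$ minimizes $M$ over $\F$ and the excess risk of any $f$ is precisely its squared $L_2(\P_X)$-distance to $f^*$.

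The second step closes the loop in the usual way. Since $\hat f_n$ minimizes $M_n$ and $f^*\in\F$, we have $M_n(\hat f_n)\le M_n(f^*)$; the uniform-convergence hypothesis gives $|M_n(\hat f_n)-M(\hat f_n)|\le\sup_{f\in\F}|M_n(f)-M(f)|\limp0$ and, as a special case, $M_n(f^*)\limp M(f^*)$. Chaining these bounds,
\begin{align}
0\le\norm{\hat f_n-f^*}_{\P_X}^2=M(\hat f_n)-M(f^*)\le\bigl(M_n(\hat f_n)+o_P(1)\bigr)-\bigl(M_n(f^*)+o_P(1)\bigr)\le o_P(1),\nonumber
\end{align}
so $\norm{\hat f_n-f^*}_{\P_X}^2\limp0$ and therefore $\norm{\hat f_n-f^*}_{\P_X}\limp0$. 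This is the pattern of \cite[Thm.~5.7]{van2000asymptotic}, simplified here because the well-separation of $f^*$ is automatic from the displayed identity instead of having to be imposed separately.

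The routine part is the integrability needed to legitimize the expansion and the tower-property step: finiteness of $\E(\v^2)$ and boundedness of the $L_2(\P_X)$-norms over $\F$ (both assumed in the setup) suffice, since conditional expectation is an $L^2$-contraction and hence $\E(\tilde Y^2)\le\E(Y^2)<\infty$. The one point that warrants care — and the only place the structure of the problem genuinely enters — is the vanishing of the cross term, which rests on $\tilde Y$ being the \emph{true} conditional mean $\E(Y\mid D,X)$, i.e.\ on treating $\tilde Y$ as an oracle statistic exactly as stipulated in the sentence preceding the theorem. Were $\tilde Y$ replaced by a plug-in surrogate built from an estimated regression function, $\E(\tilde Y\mid X)$ would no longer equal $f^*(X)$, a bias term would survive in $M(f)$, and the conclusion would additionally require controlling the discrepancy between the plug-in and the oracle; this is precisely why the theorem is stated for the idealized $\tilde Y$.
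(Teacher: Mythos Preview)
Your proof is correct and follows essentially the same approach as the paper: both establish the identity $M(f)=\norm{f-f^*}_{\P_X}^2+\text{const}$ using $\E(\tilde Y\mid X)=f^*(X)$, then combine $M_n(\hat f_n)\le M_n(f^*)$ with the uniform convergence hypothesis to force $M(\hat f_n)-M(f^*)=o_p(1)$. The only cosmetic difference is that you derive the identity by direct expansion and the tower property on the cross term, whereas the paper reaches it via the law-of-total-variance decomposition $\E(\tilde Y-f(X))^2=\E(\var(\tilde Y\mid X))+\var(f^*(X)-f(X))+\{\E(f^*(X)-f(X))\}^2$; your route is slightly more direct and also avoids the intermediate $\eta$-separation step by reading off $\norm{\hat f_n-f^*}_{\P_X}^2$ exactly.
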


In practice, however, the calculation of $\tilde{Y}$ itself is unrealistic as it involves the knowledge of  $f(X)$.
In other words, the unknown function $f$ appears in both the optimization (\ref{eq20}) and calculation of surrogates (\ref{eq21}).
The above difficulty motivates us to propose an iterative method where we iterate the steps in (\ref{eq20}) and (\ref{eq21}), using any commonly used supervised learning method to obtain $\hat{f}$ at each step.

The pseudocode is provided in Algo.~\ref{algo1}, where Case-I intervals are considered for brevity. In practice, we set the initialization by $\hat{f}_{n,0}(x)=0$ for all $x$. We experimentally found that Algo.~\ref{algo1} is robust and works well for a variety of nonlinear models such as tree ensembles and neural networks.

  \begin{algorithm}[tb]
    \centering
    \caption{
    Interval Regression by Iterative Transformations
    (a Case-I example) 
    }\label{algo1}
    \footnotesize
    \begin{algorithmic}[1]
      \renewcommand{\algorithmicrequire}{\textbf{Input:}}
      \renewcommand{\algorithmicensure}{\textbf{Output:}}
      \REQUIRE Interval-valued responses $D_i$ in Eq.~(\ref{eq_D}) and predictors $x_i\in \R^p$, $i\in [1:n]$, function class $\F$, error distribution function $\Fv$
      \textit{Initialization}: 
      Round $k=0$, function $\hat{f}_{n,0}(\cdot)$
      \REPEAT
      \STATE Let $k \leftarrow k+1$
      \STATE Let $\tilde{u}_i=u_i - \hat{f}_{n,k-1}(x_i)$ and update the representative $\tilde{y}_i, i\in [1:n]$
      \begin{align}
      \tilde{y}_i = \hat{f}_{n,k-1}(x_i) + \E(\v \mid \tilde{u}_i,  \delta_i, x_i) .\label{eq22}
      \end{align}
      \STATE Fits a supervised model $\hat{f}_{n,k}$ using $(\tilde{y}_i, x_i)$ as labeled data by optimizing (\ref{eq20}) using a preferred method 
      \UNTIL{A stop criterion satisfied (e.g., if the fitted values do not vary much)}
      \ENSURE The estimated function $\hat{f}_{n,k}: \R^p \rightarrow \R$
    \end{algorithmic}
  \end{algorithm}

\begin{remark}[Computing the conditional expectation in (\ref{eq22})] \label{remark_computation}
The term $\E(\v \mid \tilde{u}_i,  \delta_i, x_i)$ is essentially $\E(\v \mid \v \leq u_i-\hat{f}_{k-1}(x_i))$ if $\delta_i=1$, or $\E(\v \mid \v > u_i-\hat{f}_{k-1}(x_i))$ otherwise. 
When calculating (\ref{eq22}), we need to specify a distribution for the error term $\v$. Though a misspecified distributional assumption often affects inference results~\cite{DingOverview}, we found from experimental studies that the accuracy of estimating $f$ here is not sensitive to misspecification of the noise distribution. 
A practical suggestion to data analysts is to treat $\v$ as Logistic random variables to simplify the computation. Some related experimental studies and remarks on fast computation are included in the supplement.  
Also, if the standard deviation of the noise $\v$ is unknown in practice, we suggest estimate $\sigma^2$ with $n^{-1} \sum_{i=1}^n ( y_i-\tilde{y}_i )^2$ at each iteration of Algo.~\ref{algo1}.
\end{remark}

\section{Experiments} \label{sec_exp}

We provide experiments on the use of interval privacy, including unsupervised, supervised, and real-data examples.  

\subsection{Estimation of Moments}

This experiment demonstrates moment estimation with interval-private data of reasonably broad privacy coverage. 
Suppose that $100$ data are generated from $Y \sim \mathcal{N}(0.5,1)$, and the private data are based on the Case-I mechanism with $U \in \textrm{Uniform}[-T,T]$, where $T=2n^{1/3}$. The privacy coverage is around $0.95$ (or $5\%$ leakage).
The goal is to estimate $\E(Y)$. 
We consider two methods and compare them with the baseline estimates using raw data (in hindsight) in Table~\ref{tab_moment}.  
The first method, denoted by `Example~\ref{eg2}', uses the estimator in (\ref{eq_eg2}). By a similar argument as the proof of Proposition~\ref{prop_eg2}, the choice of $T$ guarantees that the $\mu$ can be consistently estimated. 
The second method is the NPMLE implemented in the `Icens' R package~\cite{gentleman2010icens}.
We also consider two methods based on raw data: the sample average and the sample median. 
To demonstrate the robustness, we add $0\%$, $1\%$, and $5\%$ proportion of outliers (meaning $Y=999$).
We also consider the estimation of $\E(Y^2)$ in a similar setting, except that we use $U \in \textrm{Uniform}[0,2T]$ to collect $Y^2$ so that the privacy coverage remains around $0.95$.

The results summarized in Table~\ref{tab_moment} indicate that the estimation under highly private data is reasonably well when compared with the oracle approach with $0\%$ outlier. Also, the estimation from interval-private data tends to be more robust against outliers than the estimation based on the simple mean and comparable to the median (using raw data).

\begin{table}[tb]
\centering
\caption{Mean absolute errors of estimating $\E(Y)$ and $\E(Y^2)$ using interval-private data and raw data (in hindsight) subject to $0\%$, $1\%$, and $5\%$ outliers. Values are calculated from 1000 replications so that the standard errors are all within $0.01$. }
\vspace{-0.2cm}
\label{tab_moment}
\scalebox{1}{ 
\begin{tabular}{cccccccc}
\toprule
                                                                                    &        & \multicolumn{3}{c}{Estimate $\E(Y)$} & \multicolumn{3}{c}{Estimate $\E(Y^2)$} \\ \cline{2-8} 
                                                                                    & $n$    & $0\%$      & $1\%$      & $5\%$      & $0\%$       & $1\%$       & $5\%$      \\ \midrule
\multirow{2}{*}{\begin{tabular}[c]{@{}c@{}}Private data\\ (Example~\ref{eg2})\end{tabular}} & $100$  & $0.45$     & $0.44$     & $0.58$     & $0.79$      & $0.82$      & $1.03$     \\
                                                                                    & $1000$ & $0.29$     & $0.33$     & $0.99$     & $0.57$      & $0.64$      & $1.96$     \\ \midrule
\multirow{2}{*}{\begin{tabular}[c]{@{}c@{}}Private data\\ (NPMLE)\end{tabular}}     & $100$  & $0.32$     & $0.36$     & $0.92$     & $13.09$     & $11.58$     & $10.64$    \\ \cline{2-8} 
                                                                                    & $1000$ & $0.12$     & $0.21$     & $1.13$     & $3.68$      & $4.08$      & $4.45$     \\ \midrule
\multirow{2}{*}{\begin{tabular}[c]{@{}c@{}}Raw data \\ (Mean)\end{tabular}}         & $100$  & $0.08$     & $9.98$     & $49.93$    & $0.14$      & $9.98$      & $49.89$    \\ \cline{2-8} 
                                                                                    & $1000$ & $0.03$     & $9.98$     & $49.92$    & $0.04$      & $9.98$      & $49.88$    \\ \midrule
\multirow{2}{*}{\begin{tabular}[c]{@{}c@{}}Raw data \\ (Median)\end{tabular}}       & $100$  & $0.11$     & $0.10$     & $0.12$     & $0.66$      & $0.65$      & $0.60$     \\
                                                                                    & $1000$ & $0.03$     & $0.03$     & $0.07$     & $0.67$      & $0.65$      & $0.60$     \\ \bottomrule
\end{tabular}
}
\end{table}

\subsection{Estimation of Regression Functions}\label{subsec_reg_exp}

We first demonstrate the method proposed in Subsection~\ref{sec_reg} on the linear regression model $Y = f(X) + \v$, where $f(X)=\beta X$ is to be estimated from the Case-I privatized data $Z=[U,\Delta]$. 
We generate $n=200$ data with $\beta=1$, $X,\v\sim_{i.i.d.} \mathcal{N}(0,1)$, $U$ a Logistic random variables with scale~$5$. The corresponding privacy coverage is around $0.9$.  
Fig.~\ref{fig_V4_linearReg} demonstrate a typical result. The prediction error is evaluated by mean squared errors $\E(f(\tilde{X})-\hat{f}(\tilde{X}))^2$ where $\tilde{X}$ denotes the unobserved (future) data. 
With a limited size of data, the algorithm will produce an estimate $\hat{f}(x)=\hat{\beta}X$ that converges well within 20 iterations.
The initialization is done by simply setting $\hat{f}_{n,0}(x)=0$.

\begin{figure}[tb]
\begin{center}
\centerline{\includegraphics[width=0.9\columnwidth]{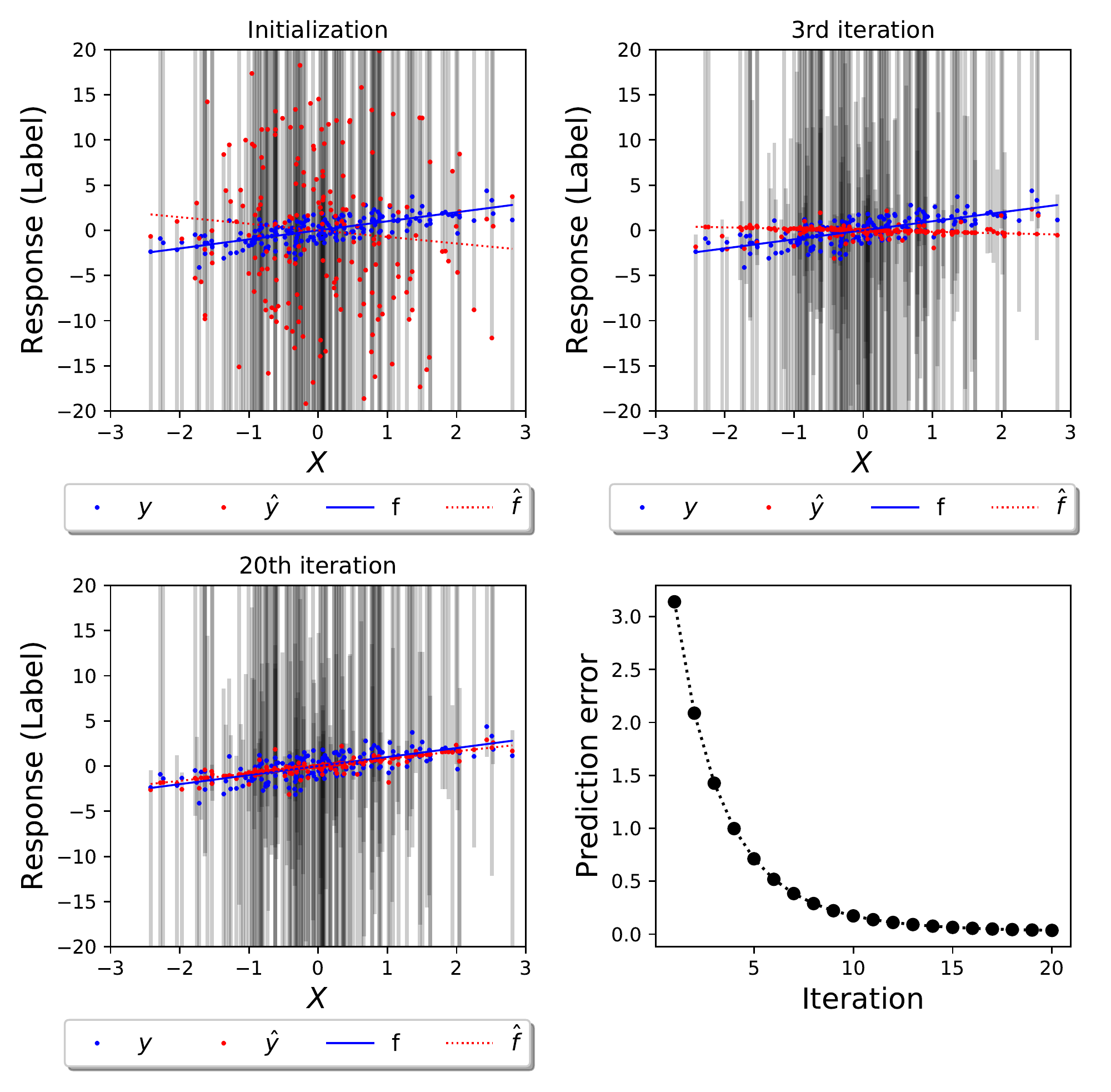}}
\vskip -0.1in
\caption{Experiments in Subsection~\ref{subsec_reg_exp}: Snapshots of Algo.~\ref{algo1} for linear regression at the 1st (left-top), 3rd (right-top), and 20th (left-bottom) iterations, and the prediction error ($L_2$ loss) versus  iteration (right-bottom). Grey vertical segments indicate the observed intervals in the form of $(-\infty,u]$ or $(u,\infty)$; Blue dots and lines indicate the unprotected data $Y$ and the underlying true regression function; Red dots and dashed lines indicate the adjusted data $\tilde{Y}$ in (\ref{eq17}) and the estimated regression function. }
\label{fig_V4_linearReg}
\end{center}
\vskip -0.2in
\end{figure}
 
In another experiment, we demonstrate the method proposed in Subsection~\ref{sec_reg} on the nonparametric regression model 
$Y = f(X) + \v$, using the Case-II privatized data $Z=[U,V,\Delta,\Gamma]$.
Suppose that $n=200$ data are generated from quadratic regression $f(X)=X^2 - 2 X + 3$, and  $\v \sim \mathcal{N}(0,1)$. Let $U=\min(L_1,L_2)$, $V=\max(L_1,L_2)$, where $L_1,L_2$ are independent Logistic random variables with scale~$5$. The corresponding privacy coverage is around $0.9$. 
Random Forest (depth 3, 100 trees) with features $X_1=X$, $X_2=X^2$ are used to fit Algo.~\ref{algo1}.
Fig.~\ref{fig_V4_quadReg} demonstrates a typical result. With a limited data size, the algorithm can produce a tree ensemble that converges well within $20$ iterations.

\begin{figure}[tb]
\begin{center}
\centerline{\includegraphics[width=0.9\columnwidth]{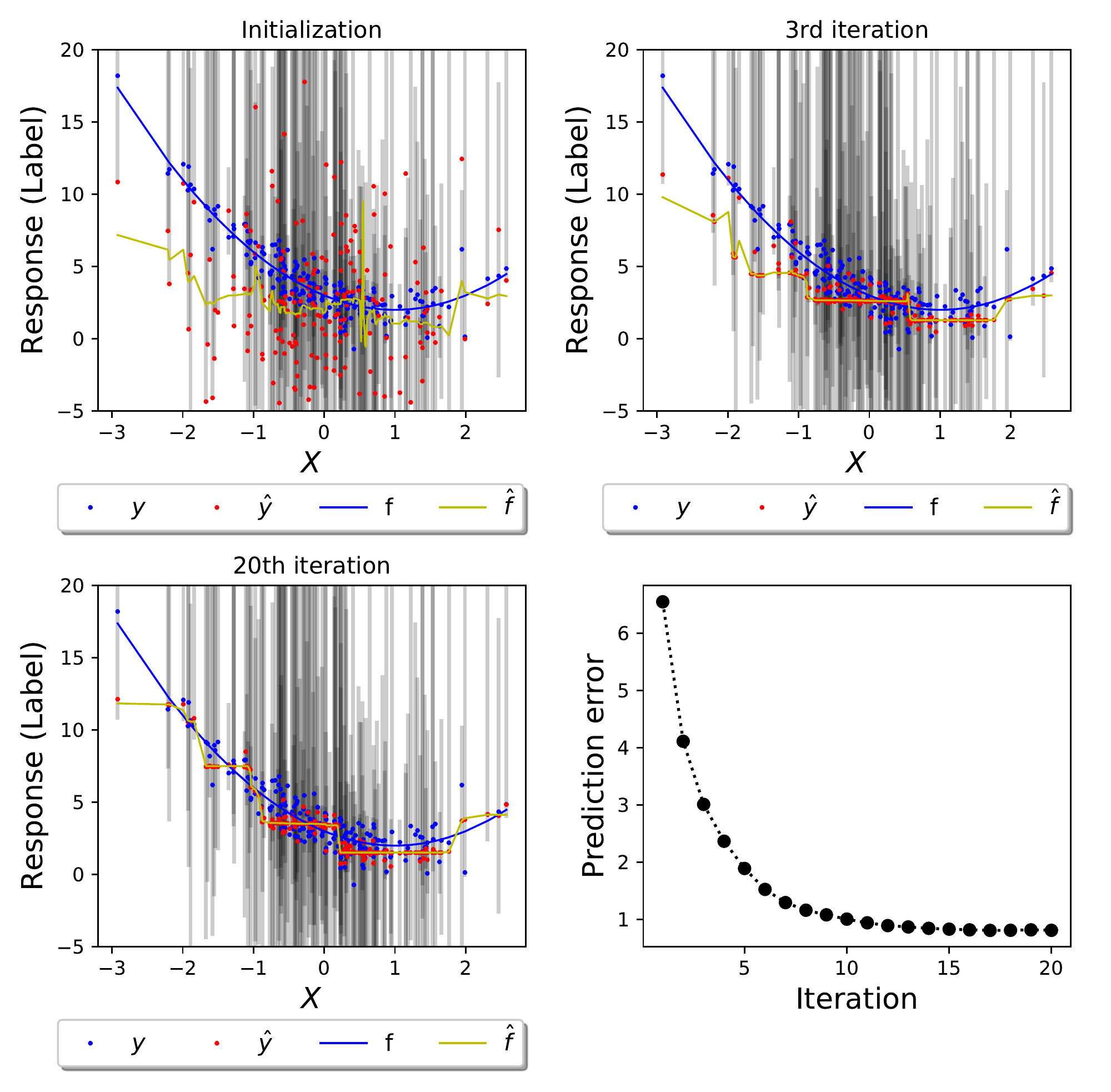}}
\vskip -0.1in
\caption{
Similar setting as Fig.\ref{fig_V4_linearReg} but for the nonlinear regression example.
}
\label{fig_V4_quadReg}
\end{center}
\vskip -0.2in
\end{figure}

\subsection{Case Study: Distribution from Individual-Adaptive Surveys} \label{subsec_data1}

We developed a web-based survey system and deployed it on MTurk to collect interval-private data. We de-identified the voluntary participants and randomized them into two groups. In the first group, each anonymous participant was asked privacy-sensitive questions in the form of Fig.~\ref{fig_example_age}(c), where the progressive mechanism was based on Example~\ref{eg_progression}. In particular, $[\underline{U}^{(0)}, \bar{U}^{(0)}]$ was specified according to the question and each $U^{(h)}$ was uniformly generated from $[\underline{U}^{(h-1)}, \bar{U}^{(h-1)}]$. We set a maximum number of three rounds so that each individual could answer from zero to three times. For comparison, the second group received conventional questions and submitted point-valued data. The sample size in each group was around $300$, and $Y$ represented pre-tax annual salary, available cash flow, and yearly frequency of intercourse (treated as continuous-valued variables, shown in Tab.~\ref{tab_Mturk}). 

We use the ultimate intervals from the progressive mechanism (`Round-X') to obtain the NPMLE $\hat{F}_n$. Using the empirical distribution of the point data collected from the second group to approximate the population $\Fy$, we calculate the squared Energy Distance $\textrm{ED}(\hat{F}_n,\Fy) \de \int_{\R}(\hat{F}_n(y)-\Fy(y))^2 dy$ as the estimation error, and record the privacy coverage. We repeat the above to only the data collected from the first round of questions (`Round-1'). As we can see from Tab.~\ref{tab_Mturk}, the progressive mechanism tends to reduce estimation error by adapting to individual-level privacy sensitivities.

\begin{table}[tb]
\centering
\caption{
Subsection~\ref{subsec_data1} experiment: Distribution estimation and coverage from non-progressive (`Round-1') and progressive interval mechanisms (`Round-X') for three questions.
}
\vspace{-0.2cm}
\label{tab_Mturk}
\scalebox{1}{
\begin{tabular}{cccccc}
\toprule
            & $\Y$             & \multicolumn{2}{c}{$\textrm{ED}(\hat{F}_n,F)$} & \multicolumn{2}{c}{Coverage} \\
            &                  & Round-1         & Round-X        & Round-1     & Round-X    \\ \midrule
Salary      & $[0,150]$ ($\times$\$1k) & $1.38$            & $0.71$           & $0.68$        & $0.35$       \\ \midrule
Cash        & $[0,300]$ ($\times$\$1k) & $1.72$            & $0.87$           & $0.75$        & $0.45$       \\ \midrule
Intercourse & $[0,100]$ ($\times$1)     & $0.73$            & $0.40$           & $0.89$        & $0.63$       \\ \bottomrule
\end{tabular}
}
\end{table}

\subsection{Case Study: Life Expectancy Regression} \label{subsec_data2}

In the experimental study, we considered the `life expectancy' data from the \textit{kaggle} open-source dataset~\cite{lifeData}, originally collected from the World Health Organization (WHO). 
The data consist of 193 countries from 2000 to 2015, with 2938 data items/rows uniquely identified by the country-year pair. The learning goal is to predict life expectancy using 20 potential factors, such as demographic variables, immunization factors, and mortality rates. 

We will exemplify the use of Algo.~\ref{algo1} under three mechanisms. 
The first mechanism (`Oracle') uses the raw data of $Y$ (life expectancy). 
The second mechanism (`$\mathcal{M}_1$') is described by $Y \mapsto Z$, where $Z$ is in the form of (\ref{eq_format}), $\Q=[U-1,U+1]$, $U$ is generated from the Logistic distribution with scale~$1$, and
	$\C=\{ y: U-1 < y \leq U+1 \}$. An {interpretation} is that during the data collection, individual data with an overly short or long life expectancy tend to be reported as half-interval (namely $\leq U-1$ or $>U-1$), while those within the mid-range $\C$ tend to be exactly reported. 
The third mechanism (`$\mathcal{M}_2$') is a Case-II mechanism described by (\ref{eq_format}), where $\Q=[U, V]$ is generated from the ordered Logistic distribution with scale~$2$.
The last mechanism (`$\mathcal{M}_3$') is a Case-I mechanism where $\Q=U$ is generated from the Logistic distribution with scale~$5$.
The {interpretation} of $\mathcal{M}_2$ or $\mathcal{M}_3$ is that individual data are quantized into random categories.
We calculate the privacy coverage (Definition~\ref{def_IP}) for each privacy mechanism using the empirical distribution and summarize it in Table~\ref{tab_real}.

For each mechanism, the predictive performance of the fitted regression under three methods, namely linear regression (LR), gradient boosting (GB), and random forest (RF), are evaluated using the five-fold cross-validation. The performance results are summarized in Table~\ref{tab_real}. The results show that a privacy mechanism with smaller privacy coverage tends to perform better, which is the expected phenomenon due to privacy-utility tradeoffs.  The results also show a (statistically) negligible performance gap between $\mathcal{M}_1$, $\mathcal{M}_2$, and the Oracle (meaning that the raw data are used). The performance starts to degenerate only in the last mechanism, where there is a large privacy coverage ($94\%$) or small privacy leakage ($6\%$).
To visualize the data and privacy coverage, we show a snapshot of the database in Tab.~\ref{fig_snap}, where we used the Case-II mechanism and generated $U, V$ from the standard Logistic distribution.

\begin{table}[tb]
\centering
\caption{Subsection~\ref{subsec_data2} experiment: Predictive performance of linear regression (LR), gradient boosting (GB), and random forest (RF) methods under different mechanisms, evaluated by the $R^2$ and mean absolute error (MAE) from 5-fold cross validations.
}
\vspace{-0.2cm}
\label{tab_real}
\scalebox{1}{
\begin{tabular}{cccccc}
\toprule
                    &          & Oracle        & $\mathcal{M}_1$ & $\mathcal{M}_2$ & $\mathcal{M}_3$ \\ \cline{2-6} 
                    & Coverage & $0\%$         & $57\%$          & $76\%$          & $94\%$          \\ \midrule
\multirow{2}{*}{LR} & $R^2$    & $0.79 (0.02)$ & $0.78 (0.02) $  & $0.78 (0.02) $  & $0.52 (0.09)$   \\
                    & MAE      & $3.21 (0.07)$ & $3.25 (0.08)$   & $3.23 (0.04)$   & $4.43 (0.25)$   \\ \midrule
\multirow{2}{*}{GB} & $R^2$    & $0.89 (0.01)$ & $0.86 (0.01)$   & $0.85 (0.01) $  & $0.74 (0.01)$   \\
                    & MAE      & $2.3 (0.18)$  & $2.56 (0.11)$   & $2.68 (0.09)$   & $3.65 (0.13)$   \\ \midrule
\multirow{2}{*}{RF} & $R^2$    & $0.82 (0.02)$ & $0.78 (0.02)$   & $0.75 (0.02)$   & $0.69 (0.02)$   \\
                    & MAE      & $2.86 (0.12)$ & $3.25 (0.09)$   & $3.41 (0.14)$   & $3.97 (0.05)$   \\ \bottomrule
\end{tabular}
}
\end{table}

\section{Related Literature} \label{sec_literature}

This section reviews other perspectives on data privacy. 

\textit{Database privacy}.
A popular way of evaluating data privacy is through \textit{differential privacy}~\cite{dwork2006calibrating}, a cryptographically motivated definition to protect the existence of an individual identity in a database~\cite{chaudhuri2011differentially,sarwate2013signal,dong2019gaussian,neunhoeffer2020private,vietri2020new}. 
A database is a matrix whose rows represent individuals and columns represent their attributes.
Differential privacy measures privacy leakage by a parameter $\v$ that bounds the likelihood ratio of the output of an algorithm under two databases differing in a single individual. The standard tool for creating differential privacy is the sensitivity method~\cite{dwork2006calibrating}, which first computes the desired algorithm output from the database, and then adds noise proportional to the largest possible change induced by modifying a single row in the database.
Differential identifiability~\cite{lee2012differential} was developed as an alternative formulation to guarantee differential privacy, based on the probability of individual identification conditional on the output. This notion was also extended to the identifiability of databases~\cite{wang2016relation}.

\textit{Sanitization}. In some applications, one must publish an anonymized and perturbed version of the original database (also known as `sanitization') to protect individual privacy. In this direction, a classical approach is based on the notion of $k$-anonymity~\cite{sweeney2002k}, meaning that for every individual, there exist $k-1$ others with the same tuple of non-private attribute values (assumed to exist) for a pre-specified $k$. Since $k$-anonymity does not necessarily protect private attributes, there have been extensions such as the $t$-closeness~\cite{li2007t}.
Moreover, information-theoretic quantities such as mutual information and average distortion have been used to quantify privacy in database sanitization (see, e.g.,~\cite{rebollo2009t,sankar2013utility,makhdoumi2013privacy,wang2016relation}).

\textit{Local privacy}.
The main difference between database privacy and local privacy (the focus of this work) is summarized below. First, local privacy protects each data value or the associated individual identity during data collection, while database privacy protects the presence of an individual in an already-collected database. Second, local privacy is supposed to disclose or collect individual-level data, while database privacy is developed for querying summary statistics. Third, in practical implementations, database privacy involves three parties: data owners (individuals), a data collector (trusted third party, often an organization) who maintains the database, and analysts who query statistics from the database. On the other hand, local privacy may only involve data owners and an (untrusted) data collector who may immediately analyze the collected data. Local privacy is much less studied in the literature than database privacy. Interval privacy can be regarded as a framework for local privacy.

\textit{Local differential privacy}.
An existing notion of local privacy is local differential privacy~\cite{evfimievski2003limiting,kasiviswanathan2011can,sarwate2014rate}. 
It restricts the conditional distributions of the privatized data on any two different raw data to have a density ratio close to one, often realized by perturbing the raw data with additive noise.  
We illustrate the difference between interval privacy and local differential privacy through an example. 
Consider a salary of \$25k and another salary of \$250k. The two private values will be perturbed into two random variables with similar densities in differential privacy.  
In contrast, under interval privacy, the two private values are obfuscated with two random intervals, say (0, \$100k) and [\$200k, $\infty$).
An operational difference is that interval privacy offers information by narrowing down the support, while local differential privacy offers information by perturbing the value. 
A conceptual difference is that interval privacy ensures an adversary does not gain additional information of $Y$ on large support based on its collected data $Z$ and prior knowledge on $Y$ (through posterior ratio of $Y \mid Z$), while local differential privacy limits the additionally gained information through the likelihood ratio of $Z \mid Y$.

\section{Conclusion and Further Remarks} \label{sec_con}

We developed the concept, theory, and use scenarios of interval privacy.
Here are some distinct challenges that we address while the existing privacy may not be good. \\
\indent 1) \textit{Transparency}: individuals can easily perceive the collected data. Existing local privacy is often practically operated by a data collector. Consequently, there may be an abuse of privacy budgets not known to those who have submitted data in the first place. In contrast, interval privacy can be naturally operated as an interface where each individual can perceive the collected data. Such transparency is crucial for local privacy, where the data collector is untrustworthy. \\
\indent 2) \textit{Flexibility}: interval privacy allows progressive refining of collected information and thus can be adaptive to individuals' heterogeneous privacy sensitivities. From a data collector's perspective, such flexibility tends to enhance the quality of information compared with using a fixed budget. \\
\indent 3) \textit{Fidelity}: individuals can submit authentic information with goodwill, and a data collector can interpret or analyze data without factual errors. Information fidelity can be indispensable in application domains such as census, security, and defense. For example, in a demographic study where scientists collect age information from a cohort of residents, they may find that at least $80\%$ residents are below $30$ years old and publish that population-level fact. It is difficult for scientists to make such a statement without information fidelity. 

We mention some potential future work.
First, it is worth extending interval privacy from continuous-valued variables to discrete ones, e.g., categorical, ordinal, and count data. 
Second, it is interesting to apply interval privacy to supervised, unsupervised, and collaborative learning (e.g.,~\cite{DingFL,DingRecommender}) where learners share interval-private statistics.
Third, analyzing privacy-utility tradeoffs in various interval mechanisms deserves further study.
The \textit{Appendix} and \textit{Supplementary Document} contain further discussions and technical details.

\appendix

\section*{Proof of Theorem~\ref{thm_cover}}

	We first consider the Case-I mechanism.
	By our definition, 
	\begin{align}
		\tau(\M)
		&=\E \biggl[ \E \bigl\{\i_{Y \leq U} \Fy(U)+\i_{Y > U} (1-\Fy(U)) \bigr\} \big | U\biggr] \nonumber \\
		&=	\E \biggl[ \Fy(U)^2 + (1-\Fy(U))^2 \biggr] 
		\geq \frac{1}{2} , \label{eq_41}
	\end{align}
	where the last line is by the Cauchy's inequality.
	For any $\tau \in (1/2,1)$, we will prove the existence of the density of $U$ such that $\tau(\M)=\tau$ by construction.
	For parameters $\pi_1 \in [0,1], \pi_2=1-\pi_1, \sigma>0$, we let the density of $U$ be 
	\begin{align}
	p_{\pi_1,\pi_2,\sigma}(u) =\frac{1}{\sqrt{2\pi}\sigma} \biggl\{ \pi_1 e^{-\frac{(u-\mu_{\v})^2}{2\sigma^2}} +\pi_2 e^{-\frac{(u-\mu_{1})^2}{2\sigma^2}} \biggr\} , \nonumber
	\end{align}
	where $\mu_{1}=F^{-1}(1/2)$ and $\mu_{\v}$ is to be selected.
	The above density naturally induces the function $h: [\pi_1,\sigma] \mapsto \tau(\M)$.
	
	We first prove that for any small $\v \in (0,1/2)$, there exist $\pi_1,\sigma$ such that $\tau(\M) > 1-\v$.
	Since $\Fy(u)^2 + (1-\Fy(u))^2$ is nonincreasing for $u<\mu_1$ and it approaches $1$ as $u \rightarrow -\infty$, and $d\Fy(u)/du$ is bounded, there exists a $\mu_\v$ with $\mu_\v<\mu_1$ such that $\Fy(u)^2 + (1-\Fy(u))^2 > 1-\v/2$ for all $u$ in a neighborhood of $\mu_\v$. Then a $\pi_1$ close to one and a $\sigma$ close to zero ensures that
	$ \int_{\mu\in \R} (2\pi\sigma^2)^{-1/2} \pi_1 e^{-(u-\mu_{\v})^2/(2\sigma^2)} \geq 1-\v,
	$
	which further implies that $h(\pi_1,\sigma)=\tau(\M)\geq 1-\v$.
	By a similar argument, we can prove that for any small $\v \in (0,1/2)$ there exists $\pi_1,\sigma$ such that $h(\pi_1,\sigma) \leq 1/2+\v$. 
	For any $\tau \in (1/2,1)$, the above arguments show the existence of two sets of $[\pi_1,\sigma]$ so that $h(\pi_1,\sigma)$ sandwiches $\tau$. By the continuity of $h$, we conclude the existence of  $[\pi_1,\sigma]$ such that $h(\pi_1,\sigma) = \tau$.
	
	The proof for Case-II interval mechanism follows from  
	\begin{align}
		\tau(\M)
		&=	\E \bigl[ \Fy(U)^2 + (\Fy(V)-\Fy(U))^2+(1-\Fy(V))^2 \bigr] \nonumber 
	\end{align}
	and similar arguments used in the above Case-I.

\section*{Proof of Theorem~\ref{thm_optimalU}}

Let $p_U$ denote the density of the distribution of $U$.
Standard results~\cite{groeneboom1992information} show that an explicit expression of the information lower bound for Case-I is given by 
$
\int_{\Y} \bigl(\frac{d}{d y}\phi(y)\bigr)^2 \Fy(y)(1-\Fy(y))/p_U(y) dy
$
and that the NPMLE attains the lower bound.
By the Cauchy's inequality, 
\begin{align*}
	&\int_{\Y} \biggl(\frac{d}{d y}\phi(y)\biggr)^2 \frac{\Fy(y)\{1-\Fy(y)\}}{p_U(y)} dy \\
	&=\biggl(\int_{\Y}p_U(y) dy\biggr) \cdot \int_{\Y} \biggl(\frac{d}{d y}\phi(y)\biggr)^2 \frac{\Fy(y)\{1-\Fy(y)\}}{p_U(y)} dy 	\\
	&\geq \int_{\Y} \biggl|\frac{d}{d y}\phi(y)\biggr| \sqrt{ \Fy(y)\{1-\Fy(y)\}} dy,
\end{align*}
with equality when
$
	p_U(y) \propto \bigl|\frac{d}{d y}\phi(y)\bigr|  \sqrt{ \Fy(y)\{1-\Fy(y)\}},
$
given that it is integrable.

\section*{Proof of Theorem~\ref{thm_validity}}

Given $Z=\{\Q,I(\Q,Y),Y \i_{Y \in \C}]$, using Bayes' theorem and the independence between $Y$ and $\Q$, we have 
	$p(y \mid Z) 
	= p(y \mid \Q, I_{\Q,y}, Y  \i_{Y \in \C})
	= c \cdot p(I_{\Q,y} , \i_{y \in \C}\mid y, \Q) \cdot p(y \mid \Q)
	= c \cdot p(I_{\Q,y}, \i_{y \in \C} \mid y, \Q) \cdot p(y)
	= c \cdot  \i_{y \in S_Z} p(y)
	$ if $y \not\in \C$, and $p(y \mid Z)=\delta(y)$ otherwise,
	where $S_Z = \Ra^{(j)} - \C$ is the difference set, $\Ra^{(j)}$ is the interval that $\Q$ and $y$ determine, and $c$ is a constant that not depending on $y$. 
	This concludes the proof.

\section*{Proof of Theorem~\ref{thm_ensemble}}

We use the following lemma, proved in the supplement.

\begin{lemma}\label{lemma1}
	Suppose that $x_1,\ldots,x_m$ are nonnegative values that sum to one. 
	Suppose that $S_k,k\in [1:K]$ and $R_j,j\in [1:J]$ are two partitions of $\Omega=\{1,\ldots,m\}$ such that the intersection of $S_k$ and $R_j$ contains at most one element for any $k,j$.
	Then,
	\begin{align*}
		\biggl( 1-\sum_{k=1}^K \biggl(\sum_{i \in S_k} x_i\biggr)^2\biggr)
		+\biggl( 1-\sum_{j=1}^J \biggl(\sum_{i \in R_j} x_i\biggr)^2\biggr)
		\geq 1-\sum_{i=1}^m x_i^2 .
	\end{align*}

\end{lemma}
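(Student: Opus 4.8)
The plan is to unfold every squared block-sum into a sum of pairwise products $x_i x_{i'}$, so that the statement collapses to a one-line consequence of nonnegativity. Set $a_k \de \sum_{i\in S_k} x_i$ for $k\in[1:K]$ and $b_j \de \sum_{i\in R_j} x_i$ for $j\in[1:J]$. Since $\{S_k\}$ and $\{R_j\}$ partition $\Omega$ and $\sum_i x_i = 1$, we have $\sum_k a_k = \sum_j b_j = 1$, and the asserted inequality rearranges to
\[
  \sum_{k=1}^K a_k^2 + \sum_{j=1}^J b_j^2 \;\le\; 1 + \sum_{i=1}^m x_i^2 .
\]

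First I would expand the left-hand side. Writing $i\sim_S i'$ for ``$i$ and $i'$ lie in a common $S$-block'' and likewise $i\sim_R i'$, each index belongs to exactly one block of each partition, so $\sum_k a_k^2 = \sum_{i\sim_S i'} x_i x_{i'} = \sum_i x_i^2 + \sigma_S$ where $\sigma_S \de \sum_{i\ne i',\,i\sim_S i'} x_i x_{i'}$, and similarly $\sum_j b_j^2 = \sum_i x_i^2 + \sigma_R$. On the right-hand side $1 = \bigl(\sum_i x_i\bigr)^2 = \sum_i x_i^2 + \sum_{i\ne i'} x_i x_{i'}$. Substituting both identities and cancelling the common term $2\sum_i x_i^2$, the target inequality becomes simply $\sigma_S + \sigma_R \le \sum_{i\ne i'} x_i x_{i'}$.

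The crux is the hypothesis $\card(S_k\cap R_j)\le 1$: if two \emph{distinct} indices $i\ne i'$ satisfied both $i\sim_S i'$ and $i\sim_R i'$, then $\{i,i'\}$ would be contained in some $S_k\cap R_j$, contradicting the cardinality bound. Hence the pairs counted in $\sigma_S$ and those counted in $\sigma_R$ are disjoint subsets of $\{(i,i'):i\ne i'\}$, so $\sigma_S + \sigma_R$ is the sum over a sub-collection of the nonnegative numbers $x_i x_{i'}$ ($i\ne i'$), which is at most their total $\sum_{i\ne i'} x_i x_{i'}$. This closes the argument.

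There is no real obstacle here; the only care needed is the bookkeeping that isolates the diagonal contribution $\sum_i x_i^2$ (which cancels on both sides) from the off-diagonal sums, together with the observation that $\card(S_k\cap R_j)\le 1$ is precisely what forces $\sigma_S$ and $\sigma_R$ to have disjoint index sets. Nonnegativity of the $x_i$ is invoked exactly once, in the final step.
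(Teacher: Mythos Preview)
Your proof is correct and follows essentially the same approach as the paper: both expand each squared block-sum into diagonal plus off-diagonal products and then use the hypothesis $\card(S_k\cap R_j)\le 1$ to show that the off-diagonal pairs contributing to the two partitions are disjoint, so their total is bounded by the full off-diagonal sum. Your global ``disjoint pairs'' formulation is a slightly cleaner packaging of the paper's index-by-index comparison, but the underlying argument is identical.
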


Next, we prove Theorem~\ref{thm_ensemble}.
	We first consider the case $\C=\C_{[1]}\cup\C_{[2]}=\emptyset$ so  the data are always in the form of intervals.
	We only need to prove the result for any two mechanisms $\M_1$ and $\M_2$, namely 
	$
		1-\tau\bigl(\oplus_{j=1}^2 \M_j\bigr) 
		\leq \sum_{j=1}^2 \bigl(1-\tau(\M_j)\bigr). 
	$
	The result for multiple mechanisms will then follow from induction.
	For a mechanism $\M$ with anchors $-\infty=\Q^{(0)},\ldots,\Q^{(m)}=\infty$, by a similar argument as (\ref{eq_41}), 
	\begin{align}
		\tau(\M)= \E\biggl(\sum_{i=1}^m \P_Y(\Q^{(i)})^2\biggr),\label{eq_52}
	\end{align}
	where $\Q^{(i)} \de (\Q^{(i-1)}, \Q^{(i)}]$, $\P_Y$ is the probability function of $Y$, and the expectation is over $\Q^{(i)}$'s.
 
	Suppose that the intersections of $\M_1,\M_2$ produce a finer set of intervals $I_{1},\ldots,I_{m}$, and each has a probability measure $x_i=\P_Y(I_{i})$, $i\in [1:m]$. 
	Let $\Omega=\{1,\ldots,m\}$.
	Suppose that the intervals under $\M_1$ (respectively $\M_2$) correspond to $S_k,k\in [1:K]$ (respectively $R_j,j\in [1:J]$) which partitions $\Omega$.
	To prove the theorem, according to (\ref{eq_52}), it suffices to prove for each outcome of $\{I_1,\ldots,I_m\}$ that
	\begin{align}
	 &\biggl\{1-\sum_{k=1}^K \biggl(\sum_{i\in S_k} \P_Y(I_i)\biggr)^2 \biggr\}+
	 \biggl\{1-\sum_{j=1}^J \biggl(\sum_{i\in R_j} \P_Y(I_i)\biggr)^2 \biggr\}\nonumber  \\
	  &\geq 1-\sum_{i=1}^m \P_Y(I_i)^2 . \nonumber
	\end{align}
	Also, by the definition of the set systems  $\{S_k\}_{k=1}^K$ and $\{R_j\}_{j=1}^K$, the intersection of $S_k,R_j$ for any $k,j$ contains at most one element.
	The proof thus follows from Lemma~\ref{lemma1}. 
	
	For the case $\C \not=\emptyset$, we will use the above-proved result.
	In particular, let $\tilde{\tau}(\M)$ denote the privacy coverage for $\M=\M_1 \oplus \M_2$ if $\C$ were hypothetically set to be empty (i.e. the mechanism that is fully interval-valued).
	Let $\textrm{FI}(\C), \textrm{FI}(\C_{[1]}), \textrm{FI}(\C_{[2]})$ denote the finest intervals in $\C$, $\C_{[1]}$, $\C_{[2]}$, respectively. 
	Then, for a realization of the anchors $\Q$,  
	$
		\tilde{\tau}(\M) = \tau(\M) + \sum_{I \in \textrm{FI}(\C)} \P(I)^2.
	$
	Similar identities hold for $\M_1$ and $\M_2$.
	We already proved 
	\begin{align}
		1-\tilde{\tau}(\M) \leq (1-\tilde{\tau}(\M_1)) + (1-\tilde{\tau}(\M_2)),\label{eq_100}
	\end{align}
	and due to $\C=\C_{[1]} \cup \C_{[2]}$, we also have  
	\begin{align}
		\sum_{I \in \textrm{FI}(\C)} \P(I)^2 
		\leq & \sum_{I \in \textrm{FI}(\C_{[1]})} \P(I)^2 
		+ \sum_{I \in \textrm{FI}(\C_{[2]})} \P(I)^2. \label{eq_101}
	\end{align}
	We conclude the proof by combining (\ref{eq_100}) and (\ref{eq_101}).

\section*{Proof of Theorem~\ref{thm_transform}}

	Suppose that $I(\Q,Y) = j$, or equivalently $g(Y) \in \Ra^{(j)}$. If $g(Y) \not\in \C_g$, then the corresponding $Z$ satisfies
	\begin{align*}
		L(S_{Z}) = \P( g(Y) \in \Ra^{(j)}) = \P( Y \in g^{-1}(\Ra^{(j)}))
		= L(S_{Z_g}).
	\end{align*}
	Meanwhile, because $Y \in \C$ implies $g(Y) \in \C_g$, the probability of $g(Y)$ falling into  $\C_g$ (which results in a zero-size) is not smaller than that of $Y$ falling into $\C$. 
 	Thus, by the definition of $\tau(\cdot)$ we have $\tau(\M_g) \geq \tau(\M)$. The equality holds if and only if $\P_Y( g(Y) \in \C_g) = \P_Y( Y \in \C)$, namely $L(g^{-1}(\C_g)) = L(\C)$.

\section*{Proof of Theorem~\ref{thm_post}}

From the Bayes' theorem and Markovity $Y \rightarrow Z \rightarrow W$,
	\begin{align}
		&p(y \mid Z=z, W=w) 
		= c_1 p(z, w \mid y) p(y) \nonumber \\
		&= c_1  p(z \mid y) p(w \mid z, y) p(y) 
		=  c_1  p(z \mid y) p(w \mid z) p(y) \nonumber \\
		&= c_1 p( \q, I_{\q,y}, y \i_{y \in \C} \mid y) p(w \mid z) p(y) \nonumber \\
		&= \left\{
		\begin{array}{lcl}
		\delta(y)       &      & {\textrm{ if }y  \in \C}\\
		 c_1 p( I_{\q,y}, y \i_{y \in \C} \mid \q, y) p(\q) p(w \mid z) p(y) &&\\
		= c_1 \i_{y \in S_z} p(q) p(w \mid z) p(y)  &      & {\textrm{ if }y \not\in \C}
		\end{array} \right. \nonumber
	\end{align}
	where $c_1$ is a constant that does not depend on $y$, and $S_z$ is the interval that $q$ and $y$ uniquely determine.
	This implies $\tau$-interval privacy by Definition~\ref{def_IP}. 

\section*{Proof of Proposition~\ref{prop_eg2}}

	It can be calculated that for each $i$,
	\begin{align*}
		&\E\bigl(  \Delta_i (2U_i-b) + (1-\Delta_i) (2U_i-a) \mid Y=Y_i \bigr) \\
		&= \int_{Y_i}^{b} \frac{1}{b-a} \cdot (2u-b) du
		+\int_{a}^{Y_i} \frac{1}{b-a} \cdot (2u-a) du 
		= Y_i .
	\end{align*}
	Thus, by the i.i.d. assumption, 
	$$\E(\hat{\mu}_n) = \E \bigl( \Delta_1 (2U_1-b) + (1-\Delta_1) (2U_1-a)  \bigr) = \E(Y_1) = \mu.$$
	The boundedness of $X_i$ and $U_i$ implies $\var(\hat{\mu}_n)  = O(n^{-1})$.

\section*{Proof of Theorem~\ref{thm_NPMLE}}

	Since $\hat{F}_n$ is an NPMLE, for each $\v \in(0,1)$, we have $\lim_{\v\rightarrow 0^+} \v^{-1} \psi((1-\v)\hat{F}_n + \v \Fy) - \psi(\hat{F}_n) \leq 0$, implying
	\begin{align}
		&\int_{\R^q \times \Y} \sum_{i=1}^m \B^{(i)} \frac{\Fy(\Ra_t^{(i)}) - \hat{F}_n(\Ra_t^{(i)})}{\hat{F}_n(\Ra_t^{(i)})} \, d \P_n(t,y) \leq 0, \textrm{ namely} \nonumber \\
		&\int_{\R^q \times \Y} \sum_{i=1}^m \frac{\B^{(i)} \Fy(\Ra_t^{(i)})}{\hat{F}_n(\Ra_t^{(i)})} \, d \P_n(t,y) \leq 1 . \label{eq_201}
	\end{align}
	Let $\P, \Omega, \omega$ denote the probability measure, sample space, and an outcome of (infinite) sequences $[T_1,Y_1], \ldots, [T_n,Y_n]$, respectively. By the strong law of large numbers, $\P_n(\cdot,\cdot,\omega)$ converges weakly to $\P(\cdot,\cdot,\omega)$ for all $\omega$ in a set with one $\P$-measure. 
	Fix $\v\in(0,1/q)$ and define
	$
		A_{\v} \de \{t: \Fy(\Ra_t^{(i)})\geq \v\}.
	$
	By the convergence of $\P_n(\cdot,\cdot,\omega)$ to $\P(\cdot,\cdot,\omega)$, there exists a constant $C>0$ such that
	\begin{align}
		\sum_{i=1}^m 1/\hat{F}_n(\Ra_t^{(i)}, \omega) \leq C \label{eq_C}
	\end{align}
	for $t \in A_{\v}$ and all sufficient large $n$.
	By the Helly's selection theorem, the sequence $\{\hat{F}_n(\cdot, \omega)\}_n$ has a subsequence $\{\hat{F}_{n_k}(\cdot, \omega)\}_n$, converging vaguely to a non-decreasing right-continuous function that takes values in $[0,1]$, denoted by $F$. 
	Next, we use the following lemma, proved in the supplement.
	
	\begin{lemma}\label{lemma_NPMLE}
		With Monotonicity condition and Inequality (\ref{eq_C}), 
		\begin{align}
			&\lim_{k \rightarrow \infty} \int_{A_{\v} \times \Y} \sum_{i=1}^m \frac{\B^{(i)} \Fy(\Ra_t^{(i)})}{\hat{F}_{n_k}(\Ra_t^{(i)})} \, d \P_{n_k}(t,y) \nonumber \\
			&=\int_{A_{\v} \times \Y} \sum_{i=1}^m  \frac{\B^{(i)} \Fy(\Ra_t^{(i)})}{F(\Ra_t^{(i)})} \, d \P(t,y) \label{eq_lemma} .
		\end{align}	
	\end{lemma}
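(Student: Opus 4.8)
The plan is to read the left-hand side of (\ref{eq_lemma}) as a genuine double limit: along the chosen subsequence the integrand moves because $\hat{F}_{n_k}(\Ra_t^{(i)})\to F(\Ra_t^{(i)})$ only vaguely (pointwise at continuity points of the Helly limit), and the integrating measure $\P_{n_k}$ converges to $\P$ only weakly. I would (i) record that on $A_{\v}\times\Y$ the integrand is uniformly bounded, (ii) upgrade the pointwise convergence of the integrand to convergence that is uniform outside a set of arbitrarily small $\P$-measure, and (iii) glue the two limiting operations with the portmanteau theorem. Writing $g_{n_k}(t,y)=\sum_{i=1}^m \B^{(i)}(t,y)\,\Fy(\Ra_t^{(i)})/\hat{F}_{n_k}(\Ra_t^{(i)})$ and $g(t,y)=\sum_{i=1}^m \B^{(i)}(t,y)\,\Fy(\Ra_t^{(i)})/F(\Ra_t^{(i)})$, and using $0\le\B^{(i)}\le1$, $0\le\Fy(\Ra_t^{(i)})\le1$ together with (\ref{eq_C}), one gets $|g_{n_k}(t,y)|\le\sum_{i=1}^m 1/\hat{F}_{n_k}(\Ra_t^{(i)})\le C$ for $t\in A_{\v}$ and all large $k$; letting $k\to\infty$ then gives $F(\Ra_t^{(i)})\ge1/C$ on $A_{\v}$, so $g$ is well defined and also bounded by $C$ there.

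For the convergence of the integrand I would argue as follows. Since $T$ has a Lebesgue density and is independent of the continuous variable $Y$, the joint law $\P$ is absolutely continuous; the Helly limit $F$ has an at most countable jump set $D$; and the set $N$ of pairs $(t,y)$ for which some endpoint of some $\Ra_t^{(i)}$ lies in $D$, or $y$ lies on the boundary of some $\Ra_t^{(i)}$, is $\P$-null. Off $N$ one has $\hat{F}_{n_k}(\Ra_t^{(i)})\to F(\Ra_t^{(i)})$ for every $i$ and $\B^{(i)}$ locally constant, hence $g_{n_k}\to g$ pointwise, with $g$ bounded and $\P$-a.e. continuous. The decisive point is that the Monotonicity condition, being a property of the map $\Ra$, applies to every CDF and in particular to each $\hat{F}_{n_k}$ and to $F$: for fixed $i,j$ the maps $t^{(j)}\mapsto\hat{F}_{n_k}(\Ra_t^{(i)})$ are monotone with oscillation dominated by that of $G^{(j)}\circ\hat{F}_{n_k}$, so pointwise convergence on the dense set of continuity points promotes, via a P\'olya-type argument, to convergence of $\hat{F}_{n_k}(\Ra_t^{(i)})$ to $F(\Ra_t^{(i)})$ that is uniform over all $t$ whose range-endpoints avoid any fixed open neighborhood $O\supset D$. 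Because the set of $(t,y)$ with a range-endpoint in such a small $O$ has $\P$-measure $<\eta$, this produces a set $E_{\eta}\subset A_{\v}\times\Y$ with $\P((A_{\v}\times\Y)\setminus E_{\eta})<\eta$ on which $\sup_{E_{\eta}}|g_{n_k}-g|\to0$ (using Lipschitzness of $x\mapsto1/x$ on $[1/C,1]$).

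Finally I would decompose the difference of the two integrals in (\ref{eq_lemma}) as $\int_{A_{\v}\times\Y}(g_{n_k}-g)\,d\P_{n_k}+\bigl(\int_{A_{\v}\times\Y}g\,d\P_{n_k}-\int_{A_{\v}\times\Y}g\,d\P\bigr)$. The second term tends to $0$ by the portmanteau theorem: for the fixed outcome $\omega$, $\P_{n_k}\to\P$ weakly (strong law / Glivenko--Cantelli), $g$ extended by $0$ is bounded and $\P$-a.e. continuous, and $A_{\v}\times\Y$ is a $\P$-continuity set once $\v$ is chosen outside the at most countable set of values $s$ with $(\P)_T\{t:\Fy(\Ra_t^{(i)})=s\}>0$ (or after a monotone passage $\v\downarrow0$). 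For the first term I would split along $E_{\eta}$: the part over $E_{\eta}$ is at most $\sup_{E_{\eta}}|g_{n_k}-g|\to0$, while over the complement $|g_{n_k}-g|\le2C$ and $\P_{n_k}((A_{\v}\times\Y)\setminus E_{\eta})\to\P((A_{\v}\times\Y)\setminus E_{\eta})<\eta$ by portmanteau applied (with the usual continuity-set care) to the open set carved out by $O$, so the first term is at most $2C\eta+o(1)$; letting $\eta\downarrow0$ closes the argument.

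The hard part is the first term of that decomposition: $\P$-a.e. convergence of the integrand is useless on its own against the moving empirical measures $\P_{n_k}$, so the whole argument stands or falls on upgrading pointwise convergence of $t\mapsto\hat{F}_{n_k}(\Ra_t^{(i)})$ to convergence that is uniform away from the jump locus of the Helly limit — which is precisely why the Monotonicity condition is one of the hypotheses — and on the bookkeeping that neighborhoods of that jump locus carry asymptotically negligible $\P_{n_k}$-mass, which is exactly where the absolute continuity of the law of $T$ is used.
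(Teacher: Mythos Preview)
Your argument is correct but runs in the opposite order from the paper's. You split the difference of the two sides of (\ref{eq_lemma}) as
\[
\int_{A_{\v}\times\Y}(g_{n_k}-g)\,d\P_{n_k}\;+\;\Bigl(\int_{A_{\v}\times\Y}g\,d\P_{n_k}-\int_{A_{\v}\times\Y}g\,d\P\Bigr),
\]
replacing the integrand first (by upgrading $\hat{F}_{n_k}(\Ra_t^{(i)})\to F(\Ra_t^{(i)})$ to uniform convergence off a small-mass set via a coordinatewise P\'olya argument based on Monotonicity~(a), then using Lipschitzness of $x\mapsto1/x$ on $[1/C,1]$) and the measure second (portmanteau for a bounded $\P$-a.e.\ continuous integrand). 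The paper reverses this: it first replaces $\P_{n_k}$ by $\P$ while \emph{keeping} the $k$-dependent denominator $\hat{F}_{n_k}$, and only afterward invokes dominated convergence against the fixed measure $\P$ to pass $\hat{F}_{n_k}\to F$. To swap the measure with a moving integrand, the paper discretizes $t$-space into roughly $\delta^{-2}$ cells of equal $\P_T$-mass, uses Monotonicity~(b) quantitatively to show that on all but $O(\delta^{-1})$ ``bad'' cells the integrand oscillates by at most $2\delta C^2$, and bounds the bad cells by their total mass $O(\delta^{-1})\cdot O(\delta^2)=O(\delta)$; this yields an explicit $O(\delta)$ error in (\ref{eq_208}) that can be read off directly. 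Your route is more abstract and leans on standard weak-convergence tools, at the price of having to carry out the P\'olya step in $q$ coordinates; the paper's grid argument is more hands-on and makes the role of part~(b) of the Monotonicity condition (the $G^{(j)}$ modulus) fully explicit, whereas in your version part~(b) enters only implicitly when you localize the jump set of the Helly limit.
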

	It follows from Inequality (\ref{eq_201}) and Lemma~\ref{lemma_NPMLE} that the right-hand side in (\ref{eq_lemma}) is not larger than one.	
	Consequently, applying the monotone convergence theorem, we have
		\begin{align}
			&\int_{\R^q \times \Y} \sum_{i=1}^m \frac{\B^{(i)} \Fy(\Ra_t^{(i)})}{F(\Ra_t^{(i)})} \, d \P(t,y) \nonumber \\
			&=\lim_{\v \rightarrow 0^+}\int_{A_{\v} \times \Y} \sum_{i=1}^m \frac{\B^{(i)} \Fy(\Ra_t^{(i)})}{F(\Ra_t^{(i)})} \, d \P(t,y)
			\leq 1. \label{eq_202}
		\end{align}	
	Meanwhile, since $\int_{\Y}\B^{(i)}d \P(y)=\Fy(\Ra_t^{(i)})$, we have
	\begin{align}
		&\int_{\R^q \times \Y} \sum_{i=1}^m  \frac{\B^{(i)} \Fy(\Ra_t^{(i)})}{F(\Ra_t^{(i)})} \, d \P(t,y) 
		=\int_{\R^q} \sum_{i=1}^m  \frac{\Fy^2(\Ra_t^{(i)})}{F(\Ra_t^{(i)})} \, d \P(t) \nonumber \\
		&\geq \int_{\R^q} \frac{\bigl\{\sum_{i=1}^m\Fy(\Ra_t^{(i)})\bigr\}^2}{\sum_{i=1}^m F(\Ra_t^{(i)})} \, d \P(t) 
		= \int_{\R^q} 1 \, d \P(t)  = 1, \label{eq_cauchy} 
	\end{align}	
	where (\ref{eq_cauchy}) is from the Cauchy's inequality. Then, it follows from (\ref{eq_202}) that the equality in (\ref{eq_cauchy}) must hold, which implies that $F(\Ra_t^{(i)})=\Fy(\Ra_t^{(i)})$ for all $i \in [1:m]$ and $t$ with a positive density.
	Combining this and the Resolvability condition, we have that for each $y$ in the closure of $\Y$, there exists an open neighborhood $N(y)$ such that for all $y_a,y_b\in N(y)$, $F((y_a,y_b])=\Fy((y_a,y_b])$. Combining this and the finite cover theorem, for any constants $c$ and $c'$ such that $(c,c']$ is in the closure of $\Y$, we have finitely many numbers $c=c_0 < c_1< \cdots < c_H=c'$ such that $F((c_{h-1},c_h])=\Fy((c_{h-1},c_h])$ for $h\in [1:H]$, implying $F((c,c'])=\Fy((c,c'])$. Thus, $F=\Fy$.
	
	Therefore, for all $\omega$ in a set with one $\P$-measure, each subsequence of $\{\hat{F}_n(\cdot, \omega)\}_{n}$ has a convergence subsequence, and they all have the same limit $\Fy$. This implies that $\{\hat{F}_n(\cdot, \omega)\}_{n}$ converges weakly to $\Fy$ with $\P$-probability one. Since $\Fy$ is continuous, we further conclude Theorem~\ref{thm_NPMLE}.

\section*{Proof of Proposition~\ref{prop_ignorability}}

Let $E$ and $E'$ denote the events of collecting null and $L(S_Z) < \tau$. 
By Bayes' theorem, $\P(E \mid Z=0)=1$, and the assumption $\P(Z=0)=1-\rho \geq \P(E')$, we have
\begin{align*}
	\P(Z=0 \mid E)
	&=\frac{\P(Z=0)}{\P(Z=0)+\P(E \mid Z=1,E')\P(Z=1,E')} \\
	&\geq \frac{\P(Z=0)}{\P(Z=0)+\P(E')} \geq 0.5.
\end{align*}

\section*{Proof of Theorem~\ref{prop_reg}}

	The proof uses a similar technique in proving the consistency of classical maximum likelihood estimators.
	For notational convenience, let $\ell(f) = \E(\tilde{Y}-f(X))^2$.
	For an arbitrary $\v>0$, we will prove that 
	$
		\P( \norm{\hat{f}_n - f^*}_{\P_X} \geq \v) \rightarrow 0 
	$
	as $n\rightarrow \infty$.
	By the definition of $\hat{f}_n$, we have
	$$
		\E_n(\tilde{Y}-\hat{f}_n(X))^2 \leq 
		\E_n(\tilde{Y}-f^*(X))^2
		= \ell(f^*) + o_p(1)
	$$
	where the last equality is implied by the assumption.
	Therefore, the assumption further implies that
	\begin{align}
		&\ell(\hat{f}_n) - \ell(f^*)
		\leq 
		\ell(\hat{f}_n) - \E_n(\tilde{Y}-\hat{f}_n(X))^2 +o_p(1) \nonumber \\
		&\leq \sup_{f\in \F}\bigl|\E_n(\tilde{Y}-f(X))^2 - \ell(f))^2 \bigr|+o_p(1) \limp 0 \label{eq54}
	\end{align}
	as $n \rightarrow \infty$.	
	We rewrite $\E\{\tilde{Y}-f(X)\}^2$ as 
	\begin{align*}
		& \var(\tilde{Y}-f(X)) + \{\E(\tilde{Y}-f(X))\}^2 \\
		&= \var(\tilde{Y}-f(X)) + \{\E(f^*(X)-f(X))\}^2 \\
		&=\E( \var(\tilde{Y}-f(X) \mid X)) \\
		&\quad + \var(\E (\tilde{Y}-f(X) \mid X)\} + \{\E(f^*(X)-f(X))\}^2  \\
		&= c + \var(f^*(X)-f(X)) + \{\E(f^*(X)-f(X))\}^2  \\
		&= c + \E(f^*(X)-f(X))^2
		= c + \norm{f^*-f}^2_{\P_X}
	\end{align*}
with $c \de E( \var(\tilde{Y} \mid X ))$,
	which implies that
	$ 
		\inf_{f\in \F: \norm{f-f^*}_{\P_X} \geq \v }\ell(f) > \ell(f^*). 
	$ 
	This inequality ensures that there exists $\eta >0 $ such that $\ell(f) \geq \ell(f^*) + \eta$ for all $f$ satisfying $\norm{f^*-f}_{\P_X} \geq \v$.
	Therefore, 
	$ 
		\P(\norm{\hat{f}_n - f^*}_{\P_X} \geq \v) 
		\leq \P(\ell(\hat{f}_n) \geq \ell(f^*) + \eta ),
	$ 
	which, according to (\ref{eq54}), further goes to zero as $n \rightarrow \infty$.	
	This concludes the proof.

\section*{Acknowledgements}
\noindent We thank Xuan Bi, Robert Calderbank, Yuejie Chi, Ruobin Gong, Xinran Wang, Steven Wu, and Yu Xiang for their helpful discussions. 

\ifCLASSOPTIONcaptionsoff
  \newpage
\fi

\balance
\bibliography{privacy,J}
\bibliographystyle{IEEEtran}

%
%

\clearpage
\onecolumn
\setcounter{page}{1}
\centerline{\LARGE Supplementary Document for Interval Privacy}

\vspace{1cm}

The supplementary document includes the following sections.

\begin{itemize}
	\item S1. Further Experimental Studies
	\item S2. Computation of Conditional Means in Algo.~\ref{algo1}
	\item S3. Further Discussions on Related Work
	\item S4. Proof of Lemma~\ref{lemma1}
	\item S5. Proof of Lemma~\ref{lemma_NPMLE}
	\item S6. Distributional Identifiability of Selective Mechanisms
	\item S7. Additional Remarks
\end{itemize}

\section*{S1. Further Experimental Studies} 

We include two additional experimental studies in this section.

\subsection*{Tradeoff between Learning and Privacy Coverage}

The tradeoff between privacy coverage and learning performance is computable often in parametric settings, where the asymptotic variance and coverage privacy can be treated as functions of distribution parameters, and in some nonparametric learning contexts (see, e.g., Theorem~\ref{thm_optimalU} and relevant discussions).
In an experiment, we demonstrate the tradeoff with $n=200$ data as used in the first experiment of Subsection~\ref{subsec_reg_exp}.
We consider the Case-I mechanism, where $U$ is generated from Logistic distributions with scales $0.1, 0.3, 0.5, 1, 3, 5, 10, 20$, and $30$. We numerically compute the prediction errors and privacy coverages. The results, summarized in Fig.~\ref{fig_V5_tradeoff}, indicate that the performance is not sensitive to privacy coverage unless the latter is very close to one.

\begin{figure}[h!]
\begin{center}
\centerline{\includegraphics[width=0.7\columnwidth]{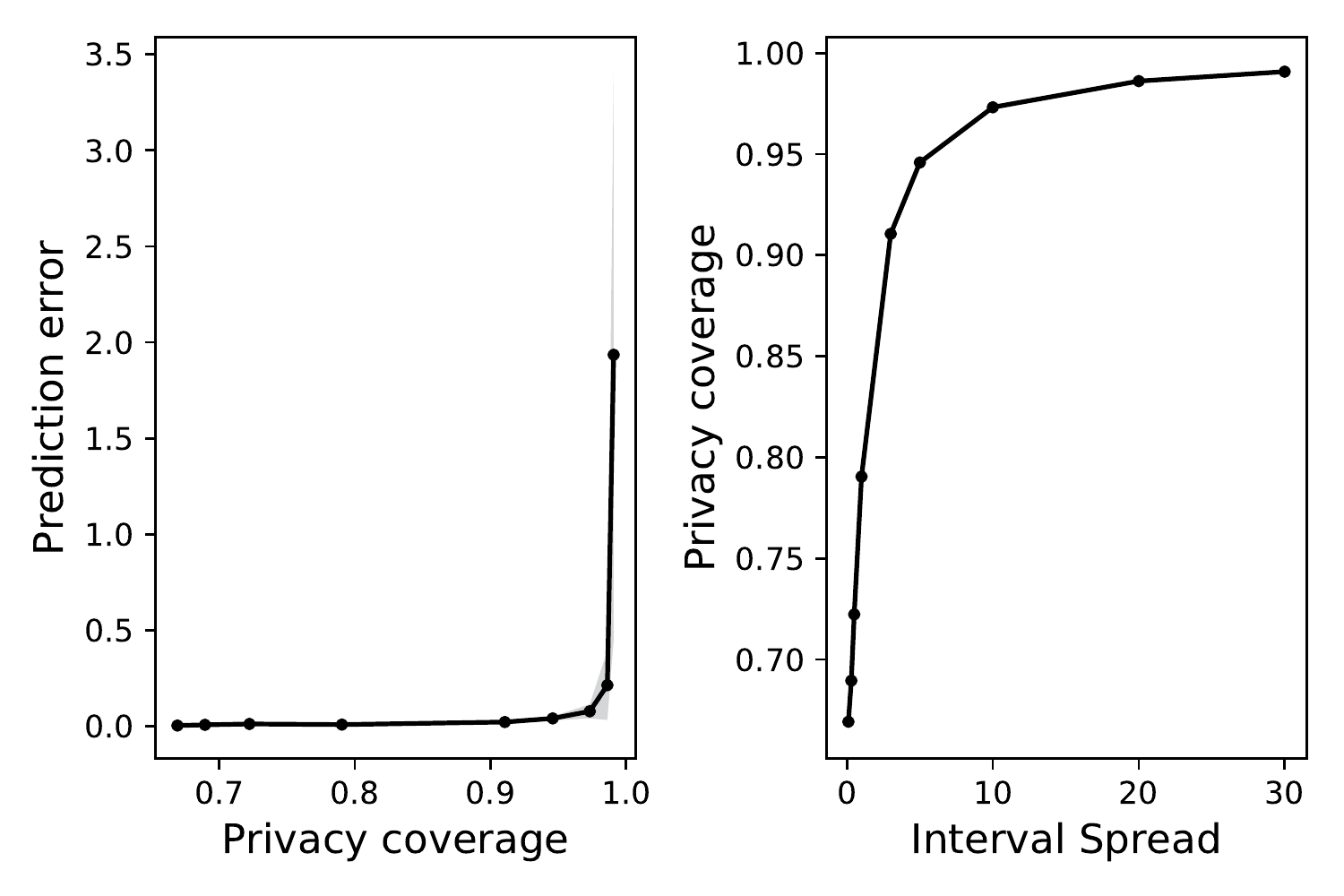}}
\vspace{-0.4cm}
\caption{The prediction error versus privacy coverage (left), and privacy coverage versus the spread of intervals, as measured by the standard deviation of $U$ (right). The shaded bands indicate $\pm$standard errors from 50 replications.}
\label{fig_V5_tradeoff}
\end{center}
\vskip -0.1in
\end{figure}

\subsection*{Sensitivity of Misspecified Noise}

\begin{figure}[h!]
\begin{center}
\centerline{\includegraphics[width=0.5\columnwidth]{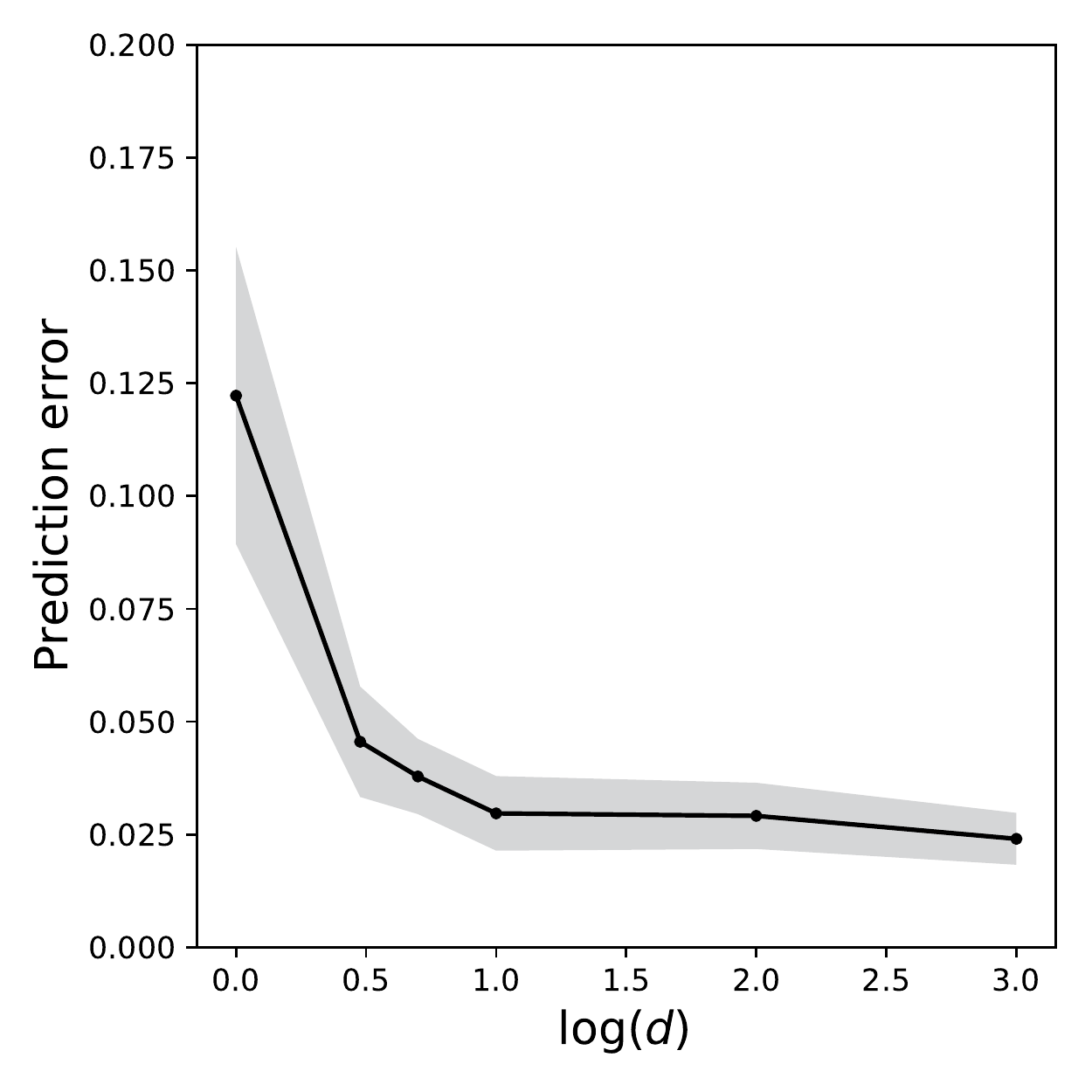}}
\vskip -0.2in
\caption{Performance (mean squared error) versus misspecification level (in terms of the $t$-degree of freedom $d$). A larger $d$ means less misspecification. The bands indicate $\pm$standard errors from 200 independent replications.}
\label{fig_V6_sensitivity}
\end{center}
\vskip -0.2in
\end{figure}
We empirically found that the estimation accuracy is generally not much affected by a  misspecified distribution of $\v$ when calculating (\ref{eq22}).
We demonstrate the sensitivity of wrongly specifying a distribution term using a specific example. 
A more sophisticated sensitivity analysis is left as future work.
We generate data in the same way as in Subsection~\ref{subsec_reg_exp}, except that the actual noise follows \textit{t}-distributions with degrees of freedom $d=1000, 100, 10, 5, 3$, and $1$. Here, $d=1000$ is virtually Gaussian while $d=1$ corresponds to a (heavy-tailed) Cauchy distribution. The postulation is still a Gaussian noise (so that it is misspecified).
The results summarized in Fig.~\ref{fig_V6_sensitivity} indicate that the performance (evaluated by the mean squared error) is not severely affected, and less deviation tends to produce less degradation in performance.

\section*{S2. Computation of Conditional Means in Algo.~\ref{algo1}}

We take the Case-II interval mechanism as an example.
Recall that $\v$ has CDF $F_{\v}$. 
We let $\G(s) = \int_{-\infty}^{s} x d \Fv(x)$.
Then  the conditional expectation of $\v_i$ observing $[u_i,v_i,\delta_i,\gamma_i]$ is
\begin{align}
&\E(\v \mid u_i-\hat{f}(x_i), v_i-\hat{f}(x_i), \delta_i, \gamma_i,x_i) \nonumber\\
&= \frac{\delta \G(\tilde{u}) + \gamma \{\G(\tilde{v})-\G(\tilde{u})\} + (1-\delta-\gamma) \{1-\G(\tilde{v})\}}{\delta \Fv(\tilde{u}) + \gamma \{\Fv(\tilde{v})-\Fv(\tilde{u})\} + (1-\delta-\gamma) \{1-\Fv(\tilde{v})\}} \nonumber \\
&= \frac{(\delta - \gamma) \G(\tilde{u}) + (\delta + 2\gamma - 1 ) \G(\tilde{v}) + (1-\delta-\gamma)}{(\delta - \gamma) \Fv(\tilde{u}) + (\delta + 2\gamma - 1 ) \Fv(\tilde{v}) + (1-\delta-\gamma)}	\label{eq23}
\end{align}
where $\tilde{u}=u_i - \hat{f}(x_i)$, $\tilde{v}=v_i - \hat{f}(x_i)$.
The above formula (\ref{eq23}) enables matrix calculations in standard software such as \textit{R} and \textit{Python} to accelerate the implementation.

If $\v$ follows a Gaussian distribution, $\G(\cdot)$ is in a closed form, and $\Fv(\cdot)$ may be approximated using Mills inequality:
$$
\frac{f(x)}{\sqrt{2+x^2}} \leq  P(\v > x) \leq \frac{f(x)}{\sqrt{2/\pi+x^2}}, \quad \forall z > 0.
$$
where $f(\cdot)$ is the density function of standard Gaussian.
We suggest
\begin{align*}
	&\Fv(s) = \frac{f(z)}{\sqrt{2+z^2}} ,\quad
	\G(s) = -\frac{e^{-s^2/2}}{\sqrt{2\pi}}
\end{align*}
when $U,V$ have large variances compared with $Y$ (so that the above approximation is tight), and numerical computation otherwise. 

Through experimental studies, we found that the results are not sensitive to the specified distribution of $\v$, e.g., a Logistic distribution. 
In practice, we may simply assume that $\v$ follows the standard Logistic distribution for computational convenience. 
In particular, Equation (\ref{eq23}) can be calculated in a closed form with 
\begin{align*}
	&\Fv(s) = \frac{1}{1+e^{-s}} , \\
	&\G(s) = \int_{-\infty}^s \frac{x e^x}{(1+e^x)^2} dx
	= \int_{(1+e^s)^{-1}}^{1} \{\log (1-t) - \log (t)\} dx 
	= - H\biggl(\frac{1}{1+e^{-s}}\biggr) 
\end{align*}
where $H: z \mapsto -z\log z - (1-z)\log (1-z)$ is the binary entropy function. 
For a general Logistic noise with zero mean and $\sigma$ standard deviation, the above $\Fv(s)$ and $G_\v(s)$ are replaced with $\Fv(s/\sigma)$ and $\sigma G_\v(s/\sigma)$, respectively.

\section*{S3. Further Discussions on Related Work} 

\subsection*{S3.1 Local differential privacy and its relationship with interval privacy}

A popular notation of privacy is the following local differential privacy~(see, e.g.,~\cite{evfimievski2003limiting,kasiviswanathan2011can,sarwate2014rate}).
\begin{definition}[Local Differential Privacy]
	For a given privacy parameter $\alpha \geq 0$, a privacy mechanism $\M$ is $\alpha$-differentially locally private if for all $y_1,y_2 \in \Y$,
	\begin{align}
		\sup_{A \in \sigma(\Z)} \frac{\P_{Z \mid Y}(z \in A \mid Y=y_1)}{\P_{Z \mid Y}(z \in A \mid Y=y_2)} \leq e^{\alpha} \label{eq31}	
	\end{align}
where $\sigma(\Z)$ denotes an appropriate $\sigma$-field over $\Z$.
\end{definition}

Both the above privacy and interval privacy are local, suitable for scenarios where data collecting agents are untrustworthy.  
When the conditional densities exist, an equivalent condition of (\ref{eq31}) is to require
\begin{align}
\frac{p_{Z \mid Y}(z  \mid Y=y_1)}{p_{Z \mid Y}(z \mid Y=y_2)} \leq e^{\alpha} \label{eq32}
\end{align}
for all $z\in \Z$ and $y_1,y_2 \in \Y$ (almost surely). 
Suppose that a joint distribution of $Y, Z$ exists.
By the Bayes' theorem, (\ref{eq32}) is further equivalent to
\begin{align}
	\frac{p_{Y \mid Z}(y_1  \mid Z=z)}{p_{Y \mid Z}(y_2  \mid Z=z)} \leq \frac{p_Y(y_1)}{p_Y(y_2)} 	e^{\alpha} . \label{eq33}
\end{align}
Compared with Definition~\ref{def_IP}, the requirement in (\ref{eq33}) holds for all $y_1,y_2$ but allows a flexibility of the likelihood ratio.

The following result shows that interval privacy and local differential privacy do not imply each other, and their intersection is a trivial solution with null utility and maximal privacy (or $\tau = 1$ and $\alpha = 0$).

\begin{proposition}[Intersection of Interval Privacy and Local Differential Privacy] \label{prop_intersect}
	A privacy mechanism $\M: Y \mapsto Z$ that simultaneously satisfies $\tau$-interval privacy and $\alpha$-local differential privacy ($\alpha<\infty$) is trivial, meaning that $Z$ and $Y$ have to be independent. 
\end{proposition}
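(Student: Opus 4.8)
The plan is to use the finiteness of $\alpha$ to pin down the conditional support $S_z$ of $Y$ given $Z=z$, and then to substitute this information into the structural identity that interval privacy imposes on the posterior, forcing it to coincide with the prior.

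First I would put $\alpha$-local differential privacy into posterior form. Since the joint density of $(Y,Z)$ exists, Bayes' theorem turns~(\ref{eq31}) into the equivalent statement~(\ref{eq33}): for almost every $z$ and almost all $y_1,y_2$ in the support of $Y$,
\[
  \frac{p_{Y\mid Z}(y_1\mid Z=z)}{p_{Y\mid Z}(y_2\mid Z=z)}\ \le\ \frac{p_Y(y_1)}{p_Y(y_2)}\,e^{\alpha}.
\]
Write $\Y^{*}=\{y:p_Y(y)>0\}$ for the support of $Y$, so $\P_Y(\Y^{*})=1$. Fix a $z$ for which this inequality is available and pick $y_1\in S_z$ with $p_Y(y_1)<\infty$ (possible since $S_z$ has positive Lebesgue measure and $p_Y$ is finite a.e.), so that $p_{Y\mid Z}(y_1\mid z)>0$. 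For almost every $y_2\in\Y^{*}$ the right-hand side is finite because $\alpha<\infty$; hence $p_{Y\mid Z}(y_2\mid z)>0$, i.e.\ $y_2\in S_z$. Together with the trivial inclusion $S_z\subseteq\Y^{*}$, this shows $S_z=\Y^{*}$ (up to a Lebesgue-null set) for almost every $z$.

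Next I would invoke interval privacy through the identity~(\ref{eq15}), which gives, for almost every $z$,
\[
  p_{Y\mid Z}(y\mid Z=z)=c_z\,\i_{y\in S_z}\,p_Y(y),\qquad c_z=\Big(\int_{S_z}p_Y(y)\,dy\Big)^{-1}.
\]
Substituting $S_z=\Y^{*}$ and using $\int_{\Y^{*}}p_Y(y)\,dy=1$ forces $c_z=1$, so that $p_{Y\mid Z}(y\mid Z=z)=p_Y(y)$ for almost every $(y,z)$. Multiplying by the marginal density of $Z$ gives $p_{Y,Z}(y,z)=p_Y(y)\,p_Z(z)$ almost everywhere, which is precisely independence of $Y$ and $Z$; in particular $\tau(\M)=\P_Y(\Y^{*})=1$ and one may take $\alpha=0$, so the mechanism is trivial.

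The only real work I anticipate is measure-theoretic bookkeeping: merging the exceptional $z$-sets coming from the two privacy hypotheses into a single null set, fixing once and for all the convention that $S_z=\{y:p_{Y\mid Z}(y\mid z)>0\}$ up to Lebesgue-null modification (so that it is consistent both with~(\ref{eq15}) and with the posterior inequality above), and carrying out the divisions only where the denominators are positive and finite. No genuine analytic obstacle is expected beyond this.
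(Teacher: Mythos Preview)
Your proposal is correct and follows essentially the same route as the paper: use the finiteness of $\alpha$ in the posterior form~(\ref{eq33}) to force $S_z$ to coincide with the full support of $Y$ (the paper phrases this as a two-line contradiction, you phrase it directly), and then invoke~(\ref{eq15}) to conclude that the posterior equals the prior, hence independence. Your version is simply more explicit about the measure-theoretic bookkeeping; the underlying argument is the same.
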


As we mentioned in the main paper, another related notion of privacy is $\alpha$-information privacy~\cite{du2012privacy,sun2016towards} that requires the posterior-prior density ratio $p_{Y \mid Z}(y \mid Z=z)/p_Y(y)$ to be within $[e^{-\alpha},e^{\alpha}]$ for all feasible $y$ and $z$ and for a constant $\alpha>0$. By its definition, $\alpha$-information privacy implies  $2\alpha$-local differential privacy. Consequently, Proposition~\ref{prop_intersect} implies that interval privacy and information privacy do not imply each other. 

An interesting problem is to relate interval privacy and local differential privacy quantitatively.
Though interval privacy is neither weaker nor stronger than $\alpha$-differential privacy, a possible way of relating these two is through privacy-utility tradeoffs. 
Specifically, we first record the privacy-accuracy tradeoff curve under each privacy framework and then map the two parameters ($\tau$ and $\alpha$) through the same learning performance on the curve. The above will provide a way to define `analogous parameters' for interpretation and perception mathematically.  
In a numerical example, we generate $100$ i.i.d. samples of $Y\sim \textrm{Uniform}[0,1]$ and suppose that the distribution of $Y$ is unknown except that it falls into $[0,1]$. We applied Case-I interval mechanism with $U \sim \textrm{Uniform}[-b,1+b]$ with $b=[22, 8, 5, 4, 2, 1.5, 1]$.
We applied the technique in Example~\ref{eg2} to estimate $\mu \de \E(Y)$. We measure the utility as $\E|\hat{\mu}-\mu|$, where $\E$ is approximated from $1000$ independent replications. 
For comparison, we also used the $\alpha$-local differential privacy mechanism by perturbing $Y$ with Laplacian noises. We choose $\alpha=0.2,0.5,0.8,1,2,2.5,3$ so that the utility under each $\alpha$ is almost the same as that under the counterpart $b$ of interval privacy. We visualize the `analogous parameters' in Fig.~\ref{fig_tradeoff}. 
We note that the above example is only for illustration purposes. In general, there exists no universal relationship between $\tau$ and $\alpha$, as the tradeoff curves depend on the underlying learning task and privacy mechanisms. An interesting future direction is to use human perception (of privacy) as an evaluation criterion additionally to mathematical quantities such as $\tau$ and $\alpha$.

\begin{figure}[tb]
\begin{center}
\centerline{\includegraphics[width=0.5\columnwidth]{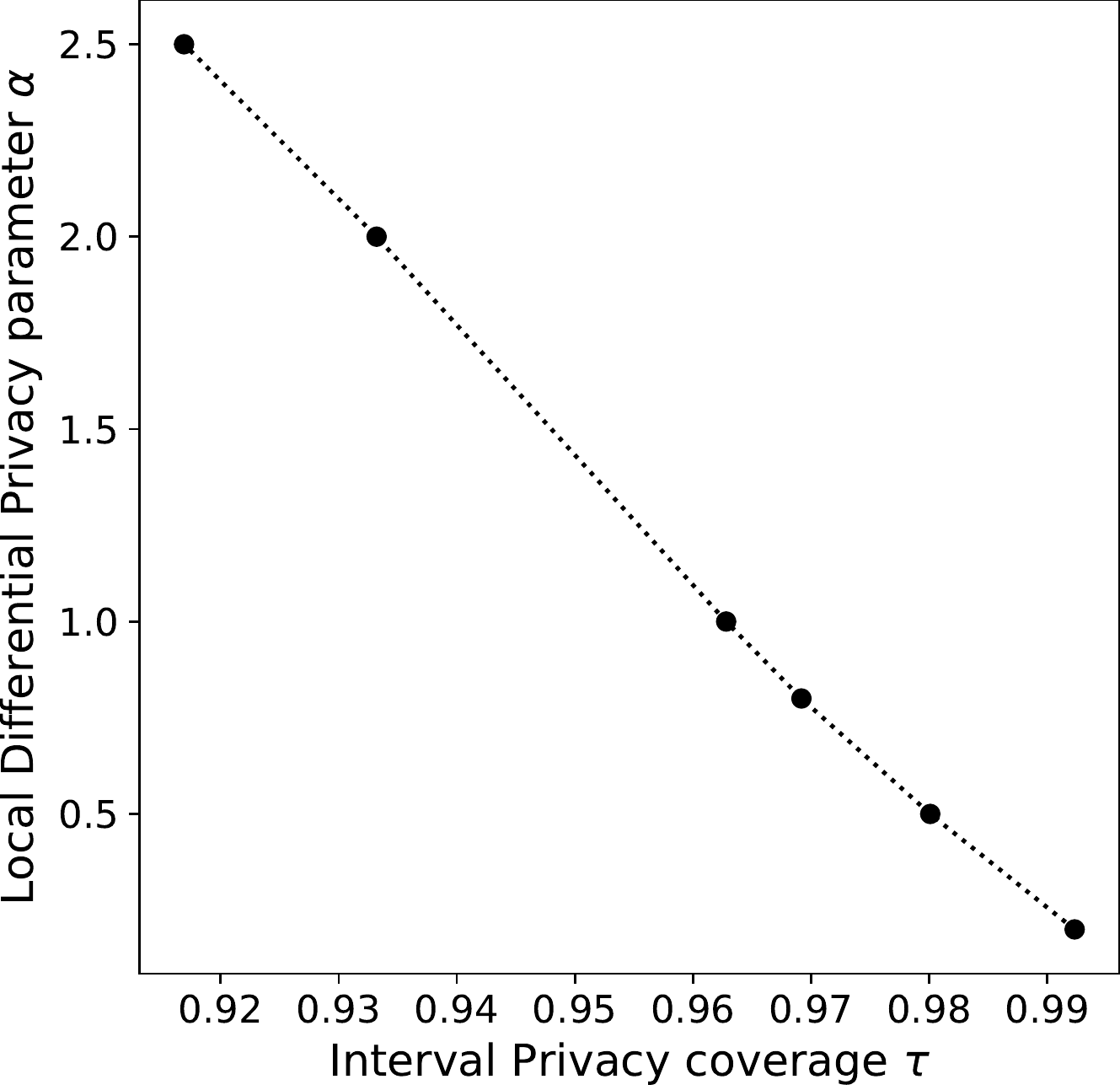}}
\vskip -0.1in
\caption{Illustration of the analogy between the $\tau$ of interval privacy and $\alpha$ of local differential privacy, linked through the same utilities on a particular mean estimation task. }
\label{fig_tradeoff}
\end{center}
\vskip -0.2in
\end{figure}

\subsection*{S3.2 Proof of Proposition~\ref{prop_intersect}}

	Suppose that a privacy mechanism $\M$ satisfies both interval privacy and local differential privacy.
	Recall that $S_z$ is the support of $Y$ given $Z=z$, and that $\Y$ is the support of $Y$. 
	
	We first show that $S_z \equiv \Y$ by contradiction. 
	Assume that $S_z \neq \Y$ for some feasible $z$, then $S_z \subsetneq \Y$ and there exists $y_1 \in S_z, y_2 \not\in S_z$. Consequently, the left-hand side of (\ref{eq33}) is infinity, violating the requirement of Inequality (\ref{eq33}) and thus local differential privacy.
	
	Therefore, $S_z = \Y$ for (almost surely) all $z$ in Definition~\ref{def_IP}, which further implies the independence of $Y$ and $Z$.

\subsection*{S3.3 Interval-Differential Privacy: a generalization of both worlds}
 
Motivated by the form of interval privacy and local differential privacy (\ref{eq33}), we introduce the following generalization. 
 
\begin{definition}[Interval-Differential Privacy]\label{def2} 
The same as Definition~\ref{def_IP} except that the second condition (\ref{eq_def}) is replaced with
	\begin{align}
		\frac{p_{Y \mid Z}(y_1 \mid Z=z)}{p_{Y \mid Z}(y_2 \mid Z=z)}	
		\leq \frac{p_{Y}(y_1)}{p_{Y}(y_2)} e^{\alpha}\label{eq_def2}
	\end{align}
	for $y_1,y_2$ in a range determined by $z$.
	A mechanism $\M$ is said to meet $(\tau,\alpha)$-interval-differential privacy if $\tau(\M)\geq \tau$.
\end{definition}

Compared with Definition~\ref{def_IP}, the requirement in (\ref{eq_def2}) is weaker as it involves a (typically small) parameter $\alpha$.
In the interval-differential privacy, any two conditional densities are only required to be equivocal for significant coverage of $y$ instead of all the support.
Thus, the $(\tau,\alpha)$-interval-differential privacy generalizes both local differential privacy (which corresponds to $\alpha=0$) and interval privacy (which corresponds to $\tau=1$).
A mechanism to realize the interval-differential privacy is first to perturb the raw data $Y$ to $\tilde{Y}$ and then report a random interval that contains $\tilde{Y}$.
In principle, this relaxation will enable a more flexible design of private data collecting and learning procedures. Further study of such a generalization is left as future research.

\subsection*{S3.4 Remark on the terms `obfuscation' and `perturbation'}

Here, we elaborate on the subtle difference between {obfuscation} and {perturbation}, referred to in Subsection~\ref{subsec_IP_def} of the main paper. 
We use the term `obfuscation' to refer to the process $Y \rightarrow Z$ that any deductive reasoning based on $Z$ does not contradict the truth of $Y$, referred to as information fidelity. For example, given $Z=[60,\i_{Y\leq 60}]$ that represents an interval $S_Z=(-\infty,60]$, one may conclude that $Y \leq b$ for any $b\geq 60$. In contrast, `perturbation' means one cannot make a factual statement from observing $Z$. To further clarify, we provide a mathematical description of the two terms below. 

Following the same notation in the paper, let us consider a general mechanism $\M: Y \mapsto Z$, where the raw data $Y$ and privatized data $Z$ are in the ranges $\Y$ and $\Z$, respectively. For any $z \in \Z$, let $\M^{-1}(z) \subseteq \Y$ denote the set of feasible $y$ (also called preimage) that can be mapped to $z$. Let $\M^{-1}(\Z)=\{\M^{-1}(z): z \in \Z\}-\{\emptyset, \Y\}$ denote those nontrivial preimages.

\textit{Obfuscation}: For every set $\YY \in \M^{-1}(\Z)$, there exists a non-empty set $\ZZ \in \Z$, such that $z \in \ZZ$ implies $y \in \YY$, namely
\begin{align}
	\forall \YY \in \M^{-1}(\Z), \quad \exists \ZZ \subseteq \Z, \ZZ \neq \emptyset, \quad s.t. \quad (z \in \ZZ) \Longrightarrow (y \in \YY) . \label{eq_obf}
\end{align}
In other words, a factual statement regarding $Y$ (abstracted by $y \in \YY$) can be possibly implied by a deduction based on $Z$ (abstracted by $z \in \ZZ$). The above (\ref{eq_obf}) holds for interval mechanisms. In fact, every $\YY \in \M^{-1}(\Z)$ is an interval/range $S_z$ (following the same notion in the paper) that is determined by a $z \in \Z$. So if we simply let $\ZZ = \{z\}$, we will have $(z \in \ZZ) \Longrightarrow (y \in \YY)$.
Note that the privacy is materialized by probabilistic $\YY$ and its width.

\textit{Perturbation}: For every set $\YY \in \M^{-1}(\Z)$ and every non-empty set $\ZZ \in \Z$, $z \in \ZZ$ does not imply $y \in \YY$, namely
\begin{align}
	\forall \YY \in \M^{-1}(\Z), \quad \forall \ZZ \subseteq \Z, \ZZ \neq \emptyset , \quad (z \in \ZZ) \not\Longrightarrow (y \in \YY) . \label{eq_per}
\end{align}
From the above definitions, it is clear that the characteristics of obfuscation and perturbation are mutually exclusive. We show that (\ref{eq_per}) holds for local differential privacy. In fact, if (\ref{eq_per}) does not hold, there exist sets $\YY \in \M^{-1}(\Z)$ and $\ZZ \in \Z$ such that $(z \in \ZZ) \Longrightarrow (y \in \YY)$. Since $\YY \not\in \{\emptyset, \Y\}$, for every $z \in \ZZ$, let $y_1 \in \YY$ denote the associated raw data and pick up any $y_2 \neq \YY$. Then, the density ratio $p_{Z \mid Y}(z \mid y_1)/p_{Z \mid Y}(z \mid y_2)$ is unbounded, violating the requirement of local differential privacy.

\section*{S4. Proof of Lemma~\ref{lemma1}}

	Let the two terms in the left-hand side be $A_1,A_2$, and the term in the right-hand side be $A_3$. We need to prove $A_1+A_2 \geq A_3$.
	Calculations show that
	\begin{align*}	
		A_1 
		& = 1 - \sum_{k=1}^K \biggl(\sum_{i \in S_k} x_i\biggr)^2 
		=\biggl(\sum_{k=1}^K \sum_{i \in S_k} x_i\biggr)^2 - \sum_{k=1}^K \biggl(\sum_{i \in S_k} x_i\biggr)^2 \\
		&= 2 \sum_{1\leq k\neq k' \leq K} \biggl(\sum_{i \in S_k} x_i\biggr)\biggl(\sum_{i' \in S_{k'}} x_{i'}\biggr).
	\end{align*}
	Similarly, we obtain
	\begin{align*}
		A_2 = 2 \sum_{1\leq j\neq j' \leq J} \biggl(\sum_{i \in S_j} x_j\biggr)\biggl(\sum_{i' \in S_{j'}} x_{j'}\biggr).
	\end{align*}
	It remains to prove that
	\begin{align*}
		2 \sum_{1\leq k\neq k' \leq K} \biggl(\sum_{i \in S_k} x_i\biggr)\biggl(\sum_{i' \in S_{k'}} x_{i'}\biggr) + 	2 \sum_{1\leq j\neq j' \leq J} \biggl(\sum_{i \in S_j} x_j\biggr)\biggl(\sum_{i' \in S_{j'}} x_{j'}\biggr)
		\geq 2 \sum_{1\leq i\neq i'\leq m} x_i x_{i'}.
	\end{align*}
	Without loss of generality, we consider a particular $i\in \Omega$, which belongs to $S_{k_*}$ for some unique $1\leq k_* \leq K$, and also belongs to $R_{j_*}$ for some unique $1\leq j_* \leq J$.
	We only need to prove the terms in $A_1+A_2$ that involve $x_i$ is no less than the corresponding terms in $A_3$.

	The terms in $A_1+A_2$ that involve $x_i$ are
	\begin{align}
		2\sum_{k \neq k_*}x_i\sum_{i' \in S_{k}} x_{i'}+ 
		2\sum_{j \neq j_*}x_i\sum_{i' \in R_{j}} x_{i'}
		=2x_i \biggl(\sum_{i' \in \Omega-S_{k_*}}	x_{i'}
		+\sum_{i' \in \Omega-S_{j_*}} x_{i'}\biggr)\label{eq_51}
	\end{align}
	where  the minus in $\Omega-S_{j_*}$ denotes the set difference, namely indices in $ \Omega$ but not in $S_{j_*}$. 
	Next, we prove that 
	\begin{align}
		\Omega-\{i\} \subseteq (\Omega-S_{k_*})\cup (\Omega-R_{j_*}). \label{eq_50}
	\end{align}
	In fact, by the assumption that the intersection of $S_k,R_j$ for any $k,j$ contains at most one element, for any $i'\in \Omega-\{i\} $, we must have $i' \in \Omega-S_{k_*}$ or $i'\in \Omega-R_{j_*}$. This implies (\ref{eq_50}).  
	It further follows from (\ref{eq_50}) that (\ref{eq_51}) is no larger than 
	$$
		2x_i \sum_{i' \in \Omega-\{i\}}	x_{i'},
	$$
	which concludes the proof.

\section*{S5. Proof of Lemma~\ref{lemma_NPMLE}}
 
Fix $\delta \in (0,1)$, $L=1+\lfloor 1/\delta^2\rfloor$, and let $-\infty=t_0^{(j)}<t_1^{(j)}<\cdots t_H^{(j)}=\infty$ (for each $j\in [1:q]$) be a grid of points such that
$\int_{t^{(j)}\in [t_{h-1}^{(j)}, t_{h}^{(j)}]} dt^{(j)} dt^{(-j)} = 1/H$, $h \in [1:H]$, where $t^{(-j)}$ denotes the subvector of $t$ excluding the $j$-th entry. For $j \in [1:q]$, we let 
\begin{align}
	J_{h}^{(j)} &\de (t_{h-1}^{(j)}, t_{h}^{(j)}], \quad h \in [1:H], \nonumber \\
	I^{(j)} &\de \biggl\{ h \in [1:H]: \, G^{(j)} \circ F(t_{h}^{(j)})-G^{(j)} \circ F(t_{h-1}^{(j)}) \geq \delta \biggr\} , \nonumber\\
	\bar{I}^{(j)} &\de [1:H]-I^{(j)}.\label{eq_205}
\end{align}
Since $G^{(j)} \circ F$ is nondecreasing and bounded, the cardinality of $I^{(j)}$ satisfies
\begin{align}
	\card(I^{(j)}) = O(\delta^{-1})	, \quad j \in [1:q] . \label{eq_card}
\end{align}

Next, we will show that for each $i \in [1:m]$,
\begin{align}
	&\int_{A_{\v} \times \Y} \sum_{i=1}^m\B^{(i)} \frac{\Fy(\Ra_t^{(i)})}{\hat{F}_{n_k}(\Ra_t^{(i)}, \omega)} \, d \P_{n_k}(t,y) \label{eq_206}  \\
	&=\int_{A_{\v} \times \Y} \sum_{i=1}^m \B^{(i)} \frac{\Fy(\Ra_t^{(i)})}{\hat{F}_{n_k}(\Ra_t^{(i)}, \omega)} \, d \P(t,y) + r_k^{(i)}(\omega) \label{eq_203} \\
	&\textrm{ with } |r_k^{(i)}(\omega)| \leq c_i\delta \label{eq_208}
\end{align}
for constants $c_i>0$.
Note that the term in (\ref{eq_206}) may be written as
\begin{align}
	\sum_{h=1}^H \int_{(\R\times \cdots J_h^{(j)} \times \cdots \R \ \cap \ A_{\v} ) \times \Y} \sum_{i=1}^m\B^{(i)} \frac{\Fy(\Ra_t^{(i)})}{\hat{F}_{n_k}(\Ra_t^{(i)}, \omega)} \, d \P_{n_k}(t,y) \nonumber	
\end{align}
for all $j\in[1:q]$.

For a generic $t$, suppose that its $j$-th entry $t^{(j)}$ falls into the interval $J_{h}^{(j)}$.
By invoking the Monotonicity condition (a), the denominator in (\ref{eq_203}) satisfies 
\begin{align}
	\hat{F}_{n_k}(\Ra_t^{(i)}, \omega) 
	\leq \hat{F}_{n_k}(\Ra_{\tilde{t}}^{(i)}, \omega) , \label{eq_204}
\end{align}
where $\tilde{t}$ is the same as $t$ except that its $j$-th entry is $\tilde{t}^{(j)} \de t_{h}^{(j)}$ if $F(\Ra_t^{(i)})$ is non-decreasing in $t^{(j)}$, and $\tilde{t}^{(j)} \de t_{h-1}^{(j)}$ otherwise. 
Then, it follows from (\ref{eq_C}) and the Monotonicity condition (b) that 
\begin{align}
	\biggl|\frac{1}{\hat{F}_{n_k}(\Ra_t^{(i)}, \omega)} 
	 -\frac{1}{\hat{F}_{n_k}(\Ra_{\tilde{t}}^{(i)})}\biggr|
	 & \leq \frac{\bigl|\hat{F}_{n_k}(\Ra_{\tilde{t}}^{(i)})-\hat{F}_{n_k}(\Ra_{t}^{(i)})\bigr|}{\bigl(\hat{F}_{n_k}(\Ra_t^{(i)}, \omega) \bigl)^2} \nonumber \\
	 & \leq \frac{\bigl|G^{(j)} \circ \hat{F}_{n_k}(\tilde{t}^{(j)})-G^{(j)} \circ \hat{F}_{n_k}(t^{(j)})\bigr|}{\bigl(\hat{F}_{n_k}(\Ra_t^{(i)}, \omega) \bigl)^2} \nonumber \\
	 &\leq \bigl|G^{(j)} \circ \hat{F}_{n_k}(\tilde{t}^{(j)})-G^{(j)} \circ \hat{F}_{n_k}(t^{(j)})\bigr| \cdot C^2 . \label{eq_211} 
\end{align}
Because $G^{(j)}$ is continuous and $\hat{F}_{n_k}$ converges to $F$, it follows from (\ref{eq_211}) that for all $t$ with $t^{(j)} \in J_{h}^{(j)}$, $h \in \bar{I}^{(j)}$ (defined in (\ref{eq_205})) and all sufficiently large $k$, we have 
\begin{align}
	\biggl|\frac{1}{\hat{F}_{n_k}(\Ra_t^{(i)}, \omega)} 
	 -\frac{1}{\hat{F}_{n_k}(\Ra_{\tilde{t}}^{(i)})}\biggr|
	&\leq 2 \delta C^2 . \label{eq_207}
\end{align}
To bound the variation for $I^{(j)}$, $j\in [1:q]$, we use (\ref{eq_card}) and that $\P(\R\times \cdots J_h^{(j)} \times \cdots \R \times \Y)=O(\delta^2)$ to obtain
\begin{align}
	\sum_{h \in I^{(j)}} d\P(\R\times \cdots J_h^{(j)} \times \cdots \R \times \Y)=O(\delta)  \nonumber .
\end{align}
This, in conjunction with (\ref{eq_C}) and (\ref{eq_207}) imply the desired bound~(\ref{eq_208}).

By the dominated convergence theorem, we have
\begin{align}
	&\lim_{k \rightarrow \infty} \int_{A_{\v} \times \Y} \sum_{i=1}^m \B^{(i)} \frac{\Fy(\Ra_t^{(i)})}{\hat{F}_{n_k}(\Ra_t^{(i)})} \, d \P(t,y) \nonumber \\
	&=\int_{A_{\v} \times \Y} \sum_{i=1}^m \B^{(i)} \frac{\Fy(\Ra_t^{(i)})}{F(\Ra_t^{(i)})} \, d \P(t,y) \label{eq_209} .
\end{align}	
Combining (\ref{eq_208}) and (\ref{eq_209}), we obtain
\begin{align}
	&\int_{A_{\v} \times \Y} \sum_{i=1}^m\B^{(i)} \frac{\Fy(\Ra_t^{(i)})}{\hat{F}_{n_k}(\Ra_t^{(i)}, \omega)} \, d \P_{n_k}(t,y) \nonumber \\
	&=\int_{A_{\v} \times \Y} \sum_{i=1}^m \B^{(i)} \frac{\Fy(\Ra_t^{(i)})}{F(\Ra_t^{(i)})} \, d \P(t,y) + \tilde{r}_k^{(i)}(\omega) \nonumber 
\end{align}	
for all sufficiently large $k$, with $|\tilde{r}_k^{(i)}(\omega)|\leq \tilde{c}_i$ for constants $\tilde{c}_i>0$. Since $\delta$ can be chosen arbitrarily chosen, we conclude the proof of Lemma~\ref{lemma_NPMLE}.

\section*{S6. Distributional Identifiability of Selective Mechanisms}

In Subsection~\ref{subsec_individual_privacy}, we discussed a selective mechanism. 
We will show that although such an interface inevitably introduces a selective bias (towards the interval values), the NPMLE may still be asymptotically consistent. 
Our analyses will be based on the extended notion of interval mechanism in Definition~\ref{def_extended_mechanism}.

Recall that $\B^{(i)}(\T, Y) \de \i_{Y \in \Ra_{\T}^{(i)}}$ for $i\in [1:m]$. We write it as $\B^{(i)}$ when its dependency on $Y$ and $\T$ is clear from the context. 
We still consider the log-likelihood functional 
\begin{align}
	\psi: F \mapsto 
	&\int_{\R^q \times \Y} \sum_{i=1}^m \B^{(i)}(t,y) \log F(\Ra^{(i)}) \, d \P_n(t,y) , \label{eq_loglik2} 
\end{align}
where $\P_n(\cdot,\cdot)$ denotes the empirical probability measure from from $[\T_j,Y_j]$ ($j\in [1:n]$) that represent the collected data.
Let the NPMLE be a right-continuous distribution function that maximizes $\psi(F)$.
Note that although (\ref{eq_loglik2}) is in the same form as (\ref{eq_loglik}), the asymptotic limit of $\P_n$ may be different due to the dependence of $\T$ and $Y$. 

Note that the observed data are i.i.d. from the distribution of $d\P(t,y)$ restricted to the region that there exists at least a $t$-generated range that has coverage of at least $\tau$ and that $y$ belongs to one of those ranges.
	As such, we let
	\begin{align}
		\mathcal{T}_{\tau} \de \biggl\{t \in \R^{q} : \exists i \in [1:m] \textrm{ such that } L(R_t^{(i)}) \geq \tau \biggr\} \nonumber
	\end{align}
	denote the feasible set of $t$ in the sense that it admits at least one range of coverage at least $\tau$. Correspondingly, we let $\P_{\T \mid \tau}$ denote the probability of $\T$ conditional on $\T \in \mathcal{T}_{\tau}$.
We first consider a fixed $\tau \in [0,1)$ and introduce the following condition. 

\vspace{0.1cm}
\noindent \textbf{$\tau$-Resolvability condition}:
For any continuous CDF $\Fy$, 
\begin{align}	
	&\int_{\mathcal{T}_{\tau}} \frac{1}{\sum_{i: L(\Ra_t^{(i)}) \geq \tau} \Fy(\Ra_t^{(i)})} \cdot \sum_{i: L(\Ra_t^{(i)}) \geq \tau} \frac{\Fy^2(\Ra_t^{(i)})}{ F(\Ra_t^{(i)})} \, d \P_{\T \mid \tau}(t) 
	 \leq 1 \label{eq_cauchy2} 
\end{align}
implies that $F=\Fy$.	

To develop intuitions, let us consider $\tau=0$. We then have $L(\Ra_t^{(i)}) \geq \tau$ for all $i \in [1:m]$ and Inequality (\ref{eq_cauchy2}) becomes 
\begin{align}	
	&\int_{\mathcal{T}_{\tau}} \sum_{i=1}^m \frac{\Fy^2(\Ra_t^{(i)})}{ F(\Ra_t^{(i)})} \, d \P_{\T \mid \tau}(t) 
	 \leq 1 \nonumber.
\end{align}
It follows from the Cauchy's inequality that $\Fy(\Ra_t^{(i)})=F(\Ra_t^{(i)})$ for all $i$ and $t$ with a positive density. With the Resolvability condition, we can further derive $F=\Fy$. 
For a positive $\tau$, verifying the $\tau$-Resolvability condition is not straightforward. This condition is regarded as stronger than the previous Resolvability condition.

\begin{proposition} \label{thm_selective_NPMLE}
	Assume that an extended interval mechanism satisfies the $\tau$-Resolvability and Monotonicity conditions, and $\Fy$ is continuous. Then, $\sup_{y\in\Y}|\hat{F}_n(y) - \Fy(y)| \rightarrow 0$ almost surely as $n \rightarrow \infty$.
\end{proposition}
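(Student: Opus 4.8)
The plan is to mirror the proof of Theorem~\ref{thm_NPMLE} essentially line by line, the only structural novelty being that for a selective mechanism the collected pairs $[\T_j,Y_j]$ are i.i.d.\ not from $\P_T\otimes\Fy$ but from the selection--conditioned law; call its weak limit $\P^{*}$. Concretely, $\P^{*}$ is $\P_T\otimes\Fy$ restricted and renormalized to $\{(t,y): t\in\mathcal{T}_\tau,\ y\in\bigcup_{i:\,L(\Ra_t^{(i)})\ge\tau}\Ra_t^{(i)}\}$; equivalently, under $\P^{*}$ the marginal of $\T$ is $\P_{\T\mid\tau}$ reweighted by the factor $\sum_{i:\,L(\Ra_t^{(i)})\ge\tau}\Fy(\Ra_t^{(i)})\in[\tau,1]$, and the conditional law of $Y$ given $\T=t$ is $\Fy$ renormalized onto the selected ranges. (The Bernoulli thinning $\rand\sim\Bern(\rho)$ only subsamples within the collected set and does not change $\P^{*}$, so it plays no role once $n\to\infty$.) As in the earlier proof, the first step is the first-order condition: since $\hat F_n$ maximizes $\psi$ in (\ref{eq_loglik2}) and, for each observation, exactly one $\B^{(i)}$ equals $1$, the directional derivative toward $\Fy$ being nonpositive gives $\int\sum_i\B^{(i)}\Fy(\Ra_t^{(i)})/\hat F_n(\Ra_t^{(i)})\,d\P_n(t,y)\le1$.

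Next I would carry over the three technical devices of the Theorem~\ref{thm_NPMLE} proof with $\P^{*}$ in place of $\P$. By the strong law of large numbers, $\P_n$ converges weakly to $\P^{*}$ for $\P$-a.e.\ sample sequence $\omega$; note $\P^{*}$ still has a Lebesgue--density $\T$-marginal, since its density relative to $\i_{\mathcal{T}_\tau}\,d\P_T$ is the reweighting factor, bounded in $[\tau,1]$, and $\P_T$ has a density by the Resolvability hypothesis. Then: (i) fix $\v\in(0,1/q)$, set $A_\v=\{t:\Fy(\Ra_t^{(i)})\ge\v\}$, and use weak convergence to obtain a uniform bound $\sum_i 1/\hat F_n(\Ra_t^{(i)},\omega)\le C$ on $A_\v$ for all large $n$ (the reweighting being bounded below by $\tau$ does not disturb this); (ii) apply Helly's selection theorem to extract a subsequence $\hat F_{n_k}(\cdot,\omega)$ converging vaguely to a right-continuous nondecreasing $[0,1]$-valued function $F$; (iii) invoke Lemma~\ref{lemma_NPMLE}, whose proof uses only the Monotonicity condition, the bound from (i), and that $\T$ has a density --- all still available --- to pass to the limit inside $\int_{A_\v\times\Y}$, and then send $\v\to0^{+}$ by monotone convergence. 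Carrying out the $Y$-integration against the $\P^{*}$-conditional law produces the normalizing factor $1/\sum_{i:\,L(\Ra_t^{(i)})\ge\tau}\Fy(\Ra_t^{(i)})$ and the restriction of both the inner sum and the domain of integration to $\mathcal{T}_\tau$, so the limiting inequality is exactly the left side of (\ref{eq_cauchy2}) with the vague limit $F$ in place of the generic CDF there. By the $\tau$-Resolvability condition this forces $F=\Fy$.

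The proof then closes with the usual subsequence argument: every subsequence of $\{\hat F_n(\cdot,\omega)\}$ admits a vaguely convergent further subsequence, and all such limits equal $\Fy$, so $\hat F_n(\cdot,\omega)$ converges weakly to $\Fy$ for $\P$-a.e.\ $\omega$; since $\Fy$ is continuous, weak convergence of distribution functions is uniform, giving $\sup_{y\in\Y}|\hat F_n(y)-\Fy(y)|\to0$ almost surely.

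I expect the main obstacle to be the change-of-measure bookkeeping in the second step: one must verify that, with the selective likelihood set up correctly, integrating $\B^{(i)}(t,y)$ against the $\P^{*}$-conditional law of $Y$ reproduces precisely the weighting and the domain appearing in (\ref{eq_cauchy2}), and that the normalizing constants ($\P(\mathrm{selection})$ and $\P(\T\in\mathcal{T}_\tau)$) reconcile so the first-order inequality collapses to exactly that hypothesis rather than a nearby but weaker one. A lesser point is re-checking that the grid/monotonicity argument inside Lemma~\ref{lemma_NPMLE} is genuinely insensitive to replacing $\P$ by its selection-reweighted version, which it is precisely because the reweighting factor is bounded above by $1$ and below by $\tau$ on $\mathcal{T}_\tau$.
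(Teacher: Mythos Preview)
Your proposal is correct and follows essentially the same approach as the paper's proof: derive the first-order NPMLE inequality, pass to the limit under the selection-conditioned law via the devices of Theorem~\ref{thm_NPMLE} and Lemma~\ref{lemma_NPMLE}, integrate out $Y$ to produce the normalizing factor $1/\sum_{i:\,L(\Ra_t^{(i)})\ge\tau}\Fy(\Ra_t^{(i)})$, and invoke $\tau$-Resolvability to force $F=\Fy$. Your explicit flagging of the change-of-measure bookkeeping as the main technical point is apt and corresponds precisely to the paper's computation in (\ref{eq_214})--(\ref{eq_215}).
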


\begin{proof}[Proof of Proposition~\ref{thm_selective_NPMLE}]

We let
	\begin{align}
		\mathcal{Y}_{t} \de \biggl\{y \in \Y : \exists i \in [1:m] \textrm{ such that } y \in R_t^{(i)}, \ L(R_t^{(i)}) \geq \tau \biggr\} \nonumber
	\end{align}
	denote the feasible set of $y$ conditional on $T=t$.
	Note that $\mathcal{Y}_{t}$ is an empty set when $t \not\in \mathcal{T}_{\tau}$.
	Let $\P_{Y \mid t}$ denote the probability of $Y$ conditional on $T=t$.
	It can be seen that the probability of $\B^{(i)}(t, Y)=1$ conditional on $Y \in \mathcal{Y}_{t}$ for a given $t \in \mathcal{T}_{\tau}$ is
	\begin{align}
		\P_{t}\bigl(\B^{(i)}(t, Y)=1 \mid Y \in \mathcal{Y}_{t}\bigr) = \frac{\Fy(R_t^{(i)})}{\sum_{i': L(R_t^{(i')}) \geq \tau} \Fy(R_t^{(i')})} \textrm{ if $i$ satisfies $R_t^{(i')}\geq \tau$, and $0$ otherwise}. \label{eq_214}
	\end{align}

	Since $\hat{F}_n$ is an NPMLE, for each $\v \in(0,1)$, we have $\lim_{\v\rightarrow 0^+} \v^{-1} \psi((1-\v)\hat{F}_n + \v \Fy) - \psi(\hat{F}_n) \leq 0$, which implies that
	\begin{align}
		&\int_{\R^q \times \Y} \sum_{i=1}^m \frac{\B^{(i)}(t,y) \Fy(\Ra_t^{(i)})}{\hat{F}_n(\Ra_t^{(i)})} \, d \P_n(t,y) \leq 1 . \label{eq_212}
	\end{align}
	By a similar argument as in the proof of Theorem~\ref{thm_NPMLE}, we have a limiting counterpart of (\ref{eq_212}) as follows.
		\begin{align}
			&\int_{\mathcal{T}_{\tau}} \int_{\mathcal{Y}_{t}} \sum_{i=1}^m \frac{\B^{(i)}(t,y) \Fy(\Ra_t^{(i)})}{F(\Ra_t^{(i)})} \, d \P_{Y \mid t}(y) \, d \P_{\T \mid \tau}(t) 
			\leq 1. \label{eq_213}
		\end{align}	
	Meanwhile, using (\ref{eq_214}), we obtain for a fixed $t$ that
	\begin{align}
		\int_{\mathcal{Y}_{t}} \sum_{i=1}^m \frac{\B^{(i)}(t,y) \Fy(\Ra_t^{(i)})}{F(\Ra_t^{(i)})} \, d \P_{Y \mid t}(y) 
		&=\sum_{i=1}^m \int_{\mathcal{Y}_{t}} \frac{\B^{(i)}(t,y) \Fy(\Ra_t^{(i)})}{F(\Ra_t^{(i)})} \, d \P_{Y \mid t}(y) \\
		&= \sum_{i: L(R_t^{(i)}) \geq \tau} \frac{\Fy(R_t^{(i)})}{\sum_{i': L(R_t^{(i')}) \geq \tau} \Fy(R_t^{(i')})} \frac{\Fy(\Ra_t^{(i)})}{F(\Ra_t^{(i)})}\\
		&=\frac{1}{\sum_{i: L(\Ra_t^{(i)}) \geq \tau} \Fy(\Ra_t^{(i)})} \cdot \sum_{i: L(\Ra_t^{(i)}) \geq \tau} \frac{\Fy^2(\Ra_t^{(i)})}{ F(\Ra_t^{(i)})}. \label{eq_215}
	\end{align}
	Combining Inequality (\ref{eq_213}) and Equality (\ref{eq_215}), we obtain Inequality (\ref{eq_cauchy2}),
	Then, the $\tau$-Resolvability condition implies that $F=\Fy$. With a similar argument as in the proof of Theorem~\ref{thm_NPMLE}, we further conclude Proposition~\ref{thm_selective_NPMLE}.
\end{proof}

We can consider using a random $\tau \in [0,1)$ to represent different individual-level privacy sensitivities. Suppose that $\tau$ is independent with $[\T,Y]$. The above Proposition~\ref{thm_selective_NPMLE} can be directly extended by replacing Inequality (\ref{eq_cauchy2}) with 
\begin{align}	
	&\int_{[0,1)} \int_{\mathcal{T}_{\tau}} \frac{1}{\sum_{i: L(\Ra_t^{(i)}) \geq \tau} \Fy(\Ra_t^{(i)})} \cdot \sum_{i: L(\Ra_t^{(i)}) \geq \tau} \frac{\Fy^2(\Ra_t^{(i)})}{ F(\Ra_t^{(i)})} \, d \P_{\T \mid \tau}(t) d\P(\tau)
	 \leq 1 \nonumber .
\end{align}

\section*{S7. Additional Remarks}

\subsection*{S7.1 Remark on the threat model}

Here, we briefly summarize the threat model often considered in the study of local data privacy (and this paper). 

There are two kinds of participants during data collection. The first is a cohort of individuals, each holding a private data value abstracted by $Y$; The second is a data collector obtaining privatized data denoted by $Z$. The data collector may or may not be benign, unknown to the individuals. If the collector is benign, its goal is to perform population-level inference (regarding the distribution of $Y$); If it is adversarial, it aims to uncover a particular individual's identity or underlying value. 

In the local data privacy setting, individuals' data are not centralized in one place (since otherwise, it becomes a database privacy problem). Correspondingly, the data collection is separately operated for each individual. 
Meanwhile, it is standard to presume that the data collector has no identifier of any individual for two reasons. First, under emerging regulations on data sharing, a data collector often has to anonymize each individual by removing identifiable information during collection. Second, if the data collector can access an individual's identity, it can potentially link it to an external dataset that immediately exposes further information about this individual. In that scenario, little is guaranteed from interval privacy or local differential privacy, since existing notions of privacy are defined within the scope of a pre-determined dataset. Thus, in an entirely private data collection, identifiers need to be removed or privatized as a variable. Consequently, any side information is at a population level and can be taken into the prior distribution of $Y$ to evaluate the privacy coverage/leakage.

\subsection*{S7.2 Remark on the interval mechanisms for multi-dimensional data}

In the main paper, we assumed the raw data $Y$ to be a scalar for technical convenience. We can easily extend the related mechanisms and arguments to the multi-dimensional case.
Here, we mention three ways to use interval mechanisms for multi-dimensional data. 
One way is to privatize each data dimension separately. We have implemented this in our data experiments, where the survey asked multiple questions to the same individual. 
The second way is to treat the interval mechanism $R: t \mapsto \{R_t^{(i)}\}_{i=1}^m$ as a mapping from $t$ into a partition of a multi-dimensional set. For example, each $R_t^{(i)}$ can be regarded as a region on a geographic map. 
The third way is to privatize a function of the original data, say $f(\bm Y)$ instead of $\bm Y$. For a generated range $r$, it induces a range/preimage $f^{-1}(r)$ on the domain of $Y$. This way may be suitable for cases where $\bm Y$ is high-dimensional and can be represented by $f(\bm Y)$, e.g., a generative latent code (in the encoded space).

\end{document}